\DeclareMathOperator{\argmin}{argmin}
\DeclareMathOperator{\argmax}{argmax}
\newcommand{\comb}[2]{C_{#1}^{#2}}
\newcommand{\arrang}[2]{#2!\,C_{#1}^{#2}}
\definecolor{dark-gray}{rgb}{0.4,0.4,0.4}
\newtheorem{definition}{Definition}
\newtheorem{proposition}{Proposition}
\newtheorem{lemma}{Lemma}
\newtheorem{theorem}{Theorem}
\newtheorem{example}{Example}
\newtheorem{problem}{Problem}
\newtheorem{corollary}{Corollary}
\title{Linear Sum Assignment with Edition}
\author{S{\'e}bastien Bougleux}
\author{Luc Brun}
\affil{Normandie Universit{\'e}\\GREYC UMR 6072\\ CNRS - Universit{\'e} de Caen Normandie - ENSICAEN\\Caen, France}
\begin{document}
\maketitle
\begin{abstract}
  We consider the problem of transforming a set of elements into
  another by a sequence of elementary edit operations, namely
  substitutions, removals and insertions of elements. Each possible
  edit operation is penalized by a non-negative cost and the cost of a
  transformation is measured by summing the costs of its operations. A
  solution to this problem consists in defining  a transformation having a
  minimal cost, among all possible transformations. To compute such a
  solution, the classical approach consists in representing removal
  and insertion operations by augmenting the two sets so that they
  get the same size. This allows to express the problem as a linear
  sum assignment problem (LSAP), which thus finds an optimal bijection
  (or permutation, perfect matching) between the two augmented
  sets. While the LSAP is known to be efficiently solvable in
  polynomial time complexity, for instance with the Hungarian
  algorithm, useless time and memory are spent to treat the elements
  which have been added to the initial sets. In this report, we show
  that the problem can be formalized as an extension of the LSAP which
  considers only one additional element in each set to represent
  removal and insertion operations. A solution to the problem is no
  longer represented as a bijection between the two augmented
  sets. We show that the considered problem is a binary linear program
  (BLP) very close to the LSAP. While it can be solved by any BLP
  solver, we propose an adaptation of the Hungarian algorithm which
  improves the time and memory complexities previously obtained by the
  approach based on the LSAP. The importance of the improvement
  increases as the size of the two sets and their absolute difference
  increase. Based on the analysis of the problem presented in this
  report, other classical algorithms can be adapted.
\end{abstract}

\section{Introduction}\label{sec-intro}
Assigning the elements of some set to the elements of another, such that the resulting assignment satisfy some optimality conditions, is a fundamental combinatorial problem encountered in a wide variety of scientific fields \cite{law76,bur09}. This report focusses on linear sum assignment problems, also known as weighted bipartite graph matching problems. The simplest example assumes that the two sets have the same cardinality.
\begin{problem}[LSAP]
\label{pb:lsap}
  Provided two finite sets $\mathcal{U}$ and $\mathcal{V}$ having the
  same cardinality $n$, assigning the $n$ elements of $\mathcal{U}$ to
  the $n$ elements of $\mathcal{V}$ can be described by a bijective
  mapping $\mathcal{U}\,{\rightarrow}\,\mathcal{V}$. The task of
  assigning an element of $\mathcal{U}$ to an element of $\mathcal{V}$
  is penalized by a real non-negative cost function
  $c\,{:}\,\mathcal{U}\,{\times}\,\mathcal{V}\,{\rightarrow}\,[0,+\infty)$. The
  Linear Sum Assignment Problem (LSAP) consists in finding an optimal
  bijection
\begin{equation}\label{eq:lsap}
\hat{\varphi}\in\underset{\varphi\,{\in}\,\mathcal{B}(\mathcal{U},\mathcal{V})}{\argmin}~\left\{A(\varphi,c)\overset{\text{def.}}{=}\sum_{u\in\mathcal{U}}c(u,\varphi(u))\right\},
\end{equation}
where $\mathcal{B}(\mathcal{U},\mathcal{V})$ is the set of bijections from $\mathcal{U}$ to $\mathcal{V}$.
\end{problem}
\noindent
Note that several optimal assignments may exist, which depends on the cost function. 
Recall that any bijection is in one-to-one correspondence with a permutation of $\{1,\ldots,n\}$, and that any permutation $\varphi$ can be represented by a matrix $\mathbf{X}\,{\in}\,\{0,1\}^{n\times n}$ satisfying $x_{i,j}\,{=}\,1$ if $\varphi(i)\,{=}\,j$ and $x_{i,j}\,{=}\,0$ else. Such a $n\,{\times}\,n$ matrix, called permutation matrix, satisfies
\begin{equation}\label{def-permat}
 	\mathbf{X}\,{\in}\,\{0,1\}^{n\times n},~~\mathbf{X}\mathbf{1}_n\,{=}\,\mathbf{1}_n,~\,\mathbf{X}^T\mathbf{1}_n\,{=}\,\mathbf{1}_n.
 \end{equation}
Let $\mathcal{P}_n$ be the set of all $n\,{\times}\,n$ permutation matrices. Then the LSAP can be reformulated as
\begin{equation}\label{eq-lsapmtx}
 	\hat{\mathbf{X}}\in\underset{\mathbf{X}\in\mathcal{P}_n}{\argmin}\left\{A(\mathbf{X},\mathbf{C})\overset{\text{def.}}{=}\sum_{i=1}^n\sum_{j=1}^nx_{i,j}c_{i,j}\right\},
\end{equation}
where $\mathbf{C}\,{=}\,(c_{i,j})_{i,j=1,\ldots,n}$ is the cost matrix
representing the cost function, that is $c_{i,j}\,{=}\,c(u_i,v_j)$ with $u_i\,{\in}\,\mathcal{U}$ and $v_j\,{\in}\,\mathcal{V}$. From Eq.~\ref{eq-lsapmtx}, the LSAP can be easily reformulated as a binary linear program \cite{sier15}
\begin{equation}\label{eq-lsaplp}
 	\hat{\mathbf{x}}\in\argmin\left\{\mathbf{c}^T\mathbf{x}~|~\mathbf{L}\mathbf{x}\,{=}\,\mathbf{1}_{n^2},~\mathbf{x}\,{\in}\,\{0,1\}^{n^2}\right\},
\end{equation}
where $\mathbf{x}\,{=}\,\text{vec}(\mathbf{X})$ and
$\mathbf{c}\,{=}\,\text{vec}(\mathbf{C})$ are the vectorization of the
permutation matrix $\mathbf{X}$ and the cost matrix $\mathbf{C}$,
respectively. The right-hand side of Eq.~\ref{eq-lsaplp} is the matrix
version of the constraints defined by Eq.~\ref{def-permat}. The matrix
$\mathbf{L}\,{\in}\,\{0,1\}^{2n\times n^2}$ corresponds to the
node-edge incidence matrix of the complete bipartite graph $K_{n,n}$
with node sets $\mathcal{U}$ and $\mathcal{V}$, \textit{i.e.}
$(\mathbf{L})_{k,(i,j)}\,{=}\,1$ if $(k=i)\vee (k=j)$ and $0$ else. A
solution to the LSAP corresponds to a perfect bipartite matching,
\textit{i.e.} a subgraph of $K_{n,n}$ such that each element of
$\mathcal{U}\,{\cup}\,\mathcal{V}$ has degree one.

The LSAP is efficiently solvable. While there is $n!$ possible assignments, it can be solved in polynomial time
complexity (worst-case), for instance in $O(n^3)$ with the well-known
Hungarian algorithm \cite{kuhn55,kuhn56,munk57} or its improvements, see~\cite{law76,bur09} for more details.

When the two sets $\mathcal{U}$ and $\mathcal{V}$ have different
cardinalities, $|\mathcal{U}|\,{=}\,n$ and $|\mathcal{V}|\,{=}\,m$
with $n\,{\leq}\,m$, an assignment becomes an injection
$\varphi\,{:}\,\mathcal{U}\,{\rightarrow}\,\mathcal{V}$, or equivalently an
arrangement of $n$ elements among $m$. There is
$\arrang{m}{n}\,{=}\,m!/(m-n)!$ such assignments. Under this context, a solution to the LSAP
is an injection minimizing the objective functional
defined in Eq.~\ref{eq:lsap}.
The Hungarian algorithm has been modified to solve this LSAP in $O(n^2m)$ time complexity~\cite{bour71}.

\begin{figure}[!t]
 	\begin{tabular}{ccc}
 	\includegraphics[scale=0.45]{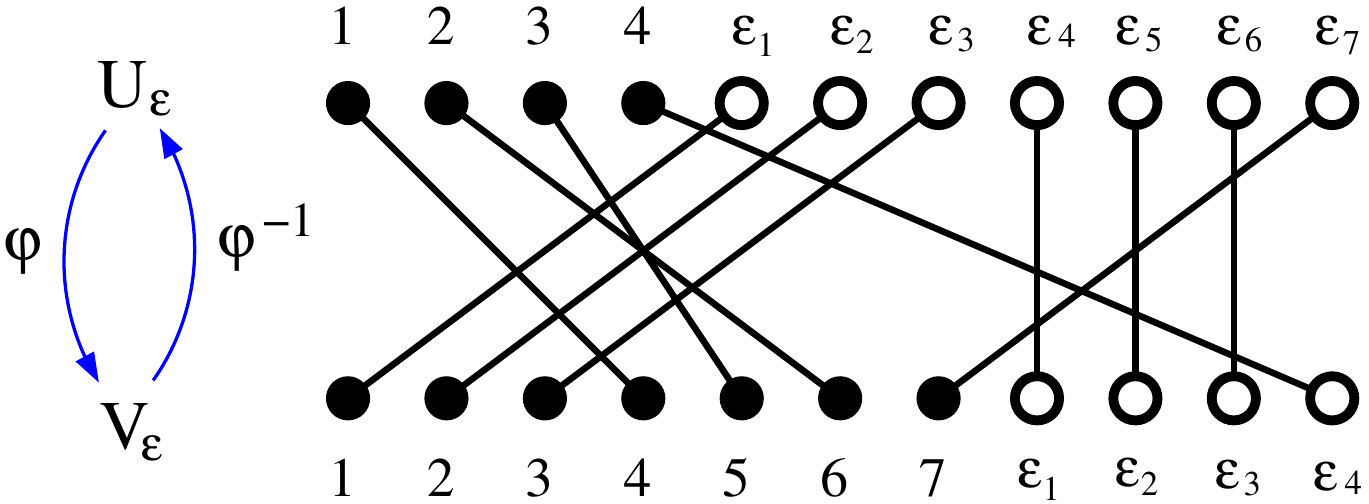}
 &~&\includegraphics[scale=0.45]{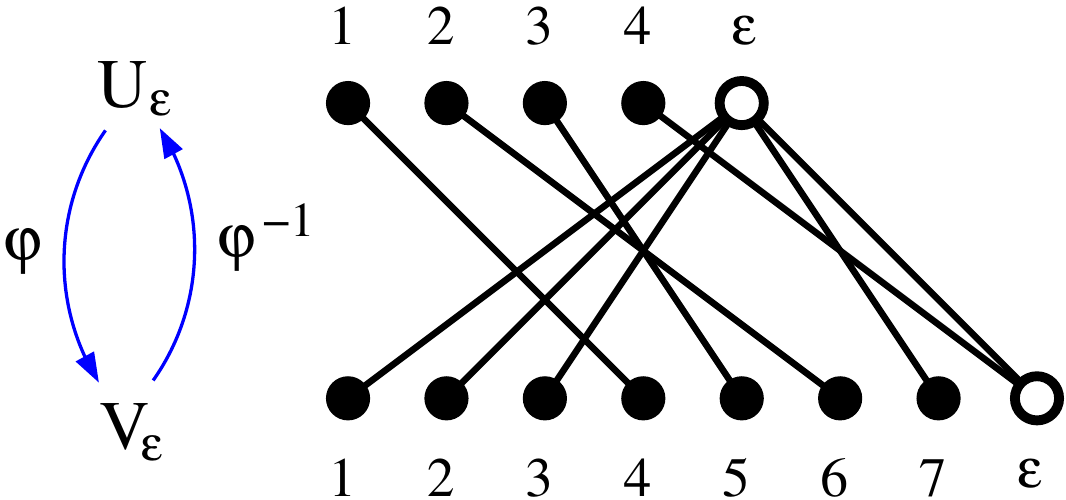}\\
 	(a)& &(b)\\
 	\multicolumn{3}{c}{\includegraphics[scale=0.45]{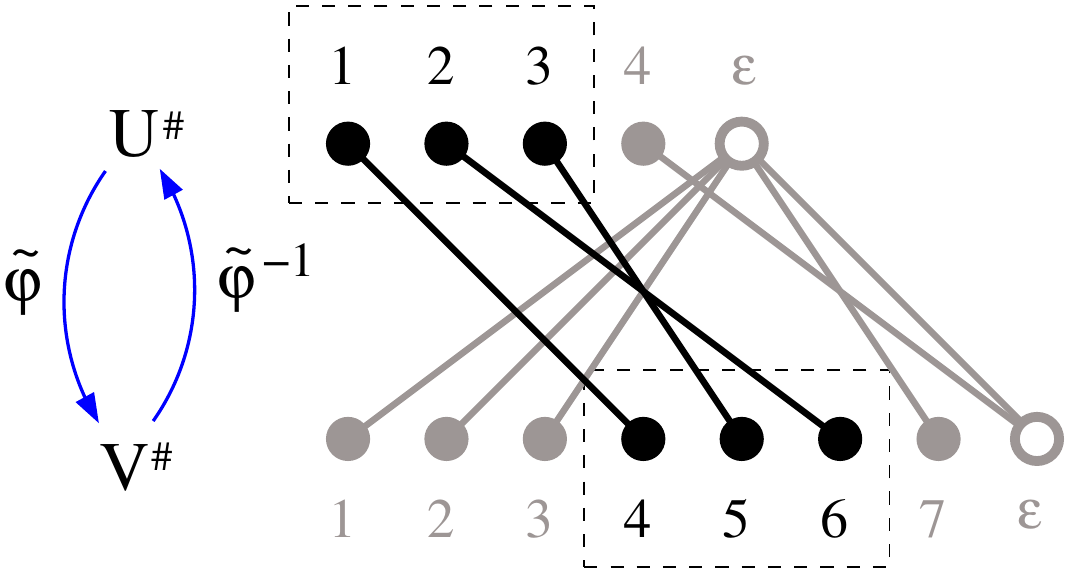}}\\
 	\multicolumn{3}{c}{(c)}
 	\end{tabular}
	\caption{(a) A squared assignment with edition (Problem~\ref{def-squaredlsap}), \textit{i.e.} a bijection $\varphi\,{:}\,\mathcal{U}\,{\cup}\,\mathcal{E}_\mathcal{U}\rightarrow\mathcal{V}\,{\cup}\,\mathcal{E}_\mathcal{V}$. (b) The equivalent assignment with edition (Problem~\ref{pb-assigneps}), \textit{i.e.} a mapping $\varphi\,{:}\,\mathcal{U}_\epsilon\,{\rightarrow}\,\mathcal{P}(\mathcal{V}_\epsilon)$ satisfying constraints given by Eq.~\ref{eq-cst-vphi}. (c) The bijective restriction $\widetilde{\varphi}$ of the assignment with edition $\varphi$ defined in (b). The sets $\mathcal{U}^\sharp$ and $\mathcal{V}^\sharp$ in (b) are surrounded by dashed boxes.}\label{fig-assignedit}
\end{figure}
In this paper, we study an extension of this problem which allows to
model the transformation of the first set $\mathcal{U}$ into the second one $\mathcal{V}$ by means of edit operations. Three simple operations are considered. In addition to the traditional mapping of elements, \textit{i.e.} each element of $\mathcal{U}$ can be substituted by an element of $\mathcal{V}$, the elements of $\mathcal{U}$ can be removed and the elements of $\mathcal{V}$ can be inserted into $\mathcal{U}$. This extension
is motivated by the definition of error-tolerant distance or similarity measures, which 
play an important role in pattern recognition. In particular, the graph edit distance \cite{tsai74,bun83,san83,Bunke1999} is one of the most flexible graph dissimilarity measure, but computing it exactly is a NP-complete problem. The main strategy developed in the last decade consists in approximating the graph edit distance by solving a special weighted bipartite graph matching of the nodes such that insertion and removal operations are represented \cite{Justice2006,riesen2007bipartite,riesen09,Zeng2009,rie10,fan11,gau14,rie15,ser15}. An edit path containing node and edge operations is then deduced from an initial node mapping provided by the LSAP. The approximate edit distance is then defined as the cost of this edit path. Initially designed as an upper bound of the graph edit distance by defining the cost of mapping two nodes as the cost of mapping their labels \cite{Justice2006}, it has been improved by also penalizing the mapping of their direct neighborhoods (incident edges and adjacent nodes) within this cost \cite{riesen2007bipartite,riesen09}. More global and representative patterns than direct node neighborhoods, such as bags of walks \cite{gau14}, have also been explored. 

The main difficulty within this bipartite framework consists in
defining properly the removal and insertion operations. Moreover, the
cost of mapping two direct neighborhoods or two bags is usually
defined as the cost of an optimal weighted bipartite graph matching of
the edges or patterns within the bags, which also consider removal and
insertion operations. Such weighted bipartite graph matchings may be
formulated as the following error-tolerant assignment problem between
two sets \cite{riesen2007bipartite,riesen09}.

\begin{problem}[sLSAPE]\label{def-squaredlsap}
Let $\mathcal{U}$ and $\mathcal{V}$ be two finite sets
  ($|\mathcal{U}|\,{=}\,n$ and $|\mathcal{V}|\,{=}\,m$). Any element of $\mathcal{U}$ can be substituted to any element of $\mathcal{V}$. The removal
  of the elements of $\mathcal{U}$ is represented by augmenting
  $\mathcal{V}$ by $n$ distinct null elements
  $\mathcal{E}_\mathcal{V}\,{=}\,\{\epsilon_i\}_{i=1,\ldots,n}$ such
  that each element $\epsilon_i\,{\in}\,\mathcal{E}_\mathcal{V}$
  corresponds to a unique element $u_i\,{\in}\,\mathcal{U}$ and
  reciprocally. The insertion of the elements of $\mathcal{V}$ in $\mathcal{U}$ is represented similarly by augmenting $\mathcal{U}$ by $m$
  elements
  $\mathcal{E}_\mathcal{U}\,{=}\,\{\epsilon_j\}_{j=1,\ldots,m}$. The
  two augmented sets have the same cardinality $n\,{+}\,m$ (Fig.\,\ref{fig-assignedit}(a)). 
  Each possible edit operation (substitution, removal and insertion) is penalized by a cost function $c\,{:}\,(\mathcal{U}\,{\cup}\,\mathcal{E}_\mathcal{U})\times(\mathcal{V}\,{\cup}\,\mathcal{E}_\mathcal{V})\,{\rightarrow}\,[0,\omega]$ such that:
	\begin{enumerate}
	\item[$\bullet$]the cost $c(u_i,v_j)$ penalizes the substitution of any $u_i\,{\in}\,\mathcal{U}$ by any $v_j\,{\in}\,\mathcal{V}$,
	\item[$\bullet$]the cost $c(u_i,\epsilon_i)$ penalizes the removal of any $u_i$,
	\item[$\bullet$]the cost $c(\epsilon_j,v_j)$ penalizes the insertion of any $v_j$ into $\mathcal{U}$.
	\end{enumerate}
	Assigning any $u_i$ to any $\epsilon_k$ ($k\,{\not=}\,i$) or any $\epsilon_l$ to any $v_j$ ($l\,{\not=}\,j$) is forbidden, so the associated costs are set to a large value $\omega\,{>}\,\max\{c(u_i,v_j),\,\forall i\,{\in}\,\mathcal{U},\,j\,{\in}\,\mathcal{V}\}$, \textit{i.e.} $c(u_i,\epsilon_k)\,{=}\,c(\epsilon_l,v_j)\,{=}\,\omega$. Finally, we have $c(\epsilon_j,\epsilon_i)\,{=}\,0$ for all $u_i\,{\in}\,\mathcal{U}$ and all $v_j\,{\in}\,\mathcal{V}$, since assigning two null elements should not influence the overall assignment.
	
	The squared linear sum assignment problem with edition (sLSAPE) consists in finding a bijection
  $\varphi\,{:}\,\mathcal{U}\,{\cup}\,\mathcal{E}_\mathcal{U}\,{\rightarrow}\,\mathcal{V}\,{\cup}\,\mathcal{E}_\mathcal{V}$
  that minimizes the objective functional:
	\begin{equation}\label{eq-Aepssq}
		\begin{aligned}	
                  A(\varphi,c)&=\hspace{-0.12cm}\sum_{u\in\mathcal{U}\cup\mathcal{E}_\mathcal{U}}\hspace{-0.18cm}c(u,\varphi(u))=\sum_{u_i\in\mathcal{U}}c(u_i,\varphi(u_i))+\sum_{\epsilon_j\in\mathcal{E}_\mathcal{U}}c(\epsilon_j,\varphi(\epsilon_j))\\
                  &=\underset{\text{substitutions}}{\underbrace{\sum_{\substack{u_i\in\mathcal{U}\\v_j=\varphi(u_i)}}\hspace{-0.18cm}c(u_i,v_j)}}~+\underset{\text{removals}}{\underbrace{\sum_{\substack{u_i\in\mathcal{U}\\\epsilon_i=\varphi(u_i)}}\hspace{-0.18cm}c(u_i,\epsilon_i)}}~+\underset{\text{insertions}}{\underbrace{\sum_{\substack{\epsilon_j\in\mathcal{E}_\mathcal{U}\\v_j=\varphi(\epsilon_j)}}\hspace{-0.18cm}c(\epsilon_j,v_j)}}
		\end{aligned}
	\end{equation}
The set of bijective mappings between $\mathcal{U}\,{\cup}\,\mathcal{E}_\mathcal{U}$ and $\mathcal{V}\,{\cup}\,\mathcal{E}_\mathcal{V}$ is denoted $\mathcal{S}_{\mathcal{E}}(\mathcal{U},\mathcal{V})$.
\end{problem}
\noindent
Since the sLSAPE is a LSAP (Problem~\ref{pb:lsap}) with a specific cost function, it
can be solved by any algorithm solving the LSAP, for instance in
$O((n\,{+}\,m)^3)$ time complexity by the Hungarian algorithm \cite{riesen09}. This algorithm is based on the formulation of the LSAP given by Eq.~\ref{eq-lsapmtx}. For the sLSAPE, it consists in finding an optimal $(n\,{+}\,m)\,{\times}\,(n\,{+}\,m)$ permutation matrix, provided a $(n\,{+}\,m)\,{\times}\,(n\,{+}\,m)$ edit cost matrix of the following form
\begin{equation}\label{eq-sC}\begin{small}
	\left(\begin{array}{cccc|cccc}
		& & && c(1,\epsilon_1) & \omega &  \cdots    & \omega \\
		\multicolumn{4}{c|}{(c(u_i,v_j))_{i,j}}& \omega &  c(2,\epsilon_2) & \ddots     & \vdots \\
		& & & &  \vdots         & \ddots &\ddots &\omega\\
		& &		&							&  \omega  &  \cdots &\omega&  c(n,\epsilon_n)\\\hline
		c(\epsilon_1,1) & \omega  & \cdots & \omega    & & && \\
			\omega	&   c(\epsilon_2,2) & \ddots & \vdots  & \multicolumn{4}{c}{(c(\epsilon_j,\epsilon_i))_{i,j}\,{=}\,\mathbf{0}_{m,n}}\\
			\vdots	& \ddots & \ddots & \omega  & &  &&\\
		\omega& \cdots & \omega    & c(\epsilon_m,m) & & &\\
	\end{array}\right)\in[0,\omega]^{(n+m)\times(m+n)}.
\end{small}\end{equation}
As we
can observe, many coefficients of such a cost matrix are not
relevant (equal to $\omega$ or to $0$) in the blocks representing
the cost of removal/insertion operations. These values have been
added in order to model the problem of transforming one set into another, by means of edit operations, as a LSAP. To improve the time and memory complexity, the computation of an optimal transformation should not depend on these useless values. To this, a variant of the sLSAPE
has been proposed in \cite{ser15}, but it restricts insertion (or removal)
operations to occur in only one set.

In this report, rather than modelling the initial transformation problem such that it can be
solved with known algorithms, we formalize it as an extension of the LSAP which considers only one additional element in each set to represent removal and insertion operations (Fig.~\ref{fig-assignedit}(b)). A solution to the problem is not any more represented as a bijection between the two augmented sets (Section~\ref{sec-assed}). We call this extension the linear sum assignment problem with edit
operations (LSAPE). Intuitively, it consists in reducing the square edit cost matrix (Eq.~\ref{eq-sC}) by removing its useless elements, in order to obtain a $(n\,{+}\,1)\,{\times}\,(m\,{+}\,1)$ edit cost matrix
\begin{equation}\begin{small}
	\left(\begin{array}{cccc|c}
		& & && c(1,\epsilon)  \\
		\multicolumn{4}{c|}{(c(u_i,v_j))_{i,j}}& c(2,\epsilon) \\
		& & & &  \vdots    \\
		& &		&&	 c(n,\epsilon)\\\hline
		c(\epsilon,1) & c(\epsilon,2)  & \cdots & c(\epsilon,m)    & 0 \\
	\end{array}\right)\in[0,{+\infty})^{(n+1)\times(m+1)}.
\end{small}\end{equation}
We show that the LSAPE is
a specific binary linear program where the set of
constraints defines a mixed bipartite graph (Section~\ref{sec-pbform}). According to this
analysis, we propose a modified Hungarian algorithm which solves the
problem in $O(\min\{n,m\}^2\max\{n,m\})$ time complexity and in $O(nm)$ memory complexity (Section~\ref{sec-algo}). This is faster than the similar
Hungarian algorithm used to solve the sLSAPE, as observed
experimentally with comparable implementations.

\section{Assignments with Edition}\label{sec-assed}
In this section we define the notion of assignment with edition (or $\epsilon$-assignment),
and we analyze some associated algebraic and numerical properties.
\subsection{Definition and combinatorial properties}
Consider two finite sets $\mathcal{U}$ and $\mathcal{V}$. Let $n\,{=}\,|\mathcal{U}|$ and $m\,{=}\,|\mathcal{V}|$ be their respective cardinalities. Transforming $\mathcal{U}$ into $\mathcal{V}$, by means of edit operations, can be realized:
\begin{enumerate}
\item[$\bullet$]by substituting each element of $\mathcal{U}$ by a unique element of $\mathcal{V}$, or by removing it from $\mathcal{U}$, and then
\item[$\bullet$]by inserting into $\mathcal{U}$ each element of $\mathcal{V}$ which has not been used for a substitution.
\end{enumerate}
Let $\epsilon$ be an element introduced to represent removal and insertion operations. The removal of an element $u\,{\in}\,\mathcal{U}$ is denoted by $u\,{\rightarrow}\,\epsilon$, and the insertion of an element $v\,{\in}\,\mathcal{V}$ into $\mathcal{U}$ is denoted by $\epsilon\,{\rightarrow}\,v$.  Consider the two augmented sets $\mathcal{U}_\epsilon\,{=}\,\mathcal{U}\,{\cup}\,\{\epsilon\}$ and $\mathcal{V}_\epsilon\,{=}\,\mathcal{V}\,{\cup}\,\{\epsilon\}$. We propose to define a transformation from $\mathcal{U}$ into $\mathcal{V}$ as follows (Fig.\ref{fig-assignedit}(b)).
\begin{definition}[$\epsilon$-assignment]\label{def-assigneps}
An assignment with edition (or $\epsilon$-assignment) from $\mathcal{U}$ to $\mathcal{V}$ is a mapping $\varphi\,{:}\,\mathcal{U}_\epsilon\,{\rightarrow}\,\mathcal{P}(\mathcal{V}_\epsilon)$ satisfying the following constraints:
\begin{equation}\label{eq-cst-vphi}
	\left\{\begin{aligned}
		\forall u\,{\in}\,\mathcal{U},&~~\,\left|\varphi(u)\right| = 1\\
		\forall v\,{\in}\,\mathcal{V},&~~\,\left|\varphi^{-1}[v]\right| = 1\\
		\epsilon\in\varphi(\epsilon)&~
	\end{aligned}\right.
\end{equation}
where $\mathcal{P}(\cdot)$ is the powerset, and $\varphi^{-1}[v]\,{\subset}\,\mathcal{U}_\epsilon$ denotes the preimage of any singleton $\{v\}\,{\in}\,\mathcal{P}(\mathcal{V}_\epsilon)$ by $\varphi$. Note that braces are omitted for singletons ($\varphi^{-1}[\{v\}]$ is written $\varphi^{-1}[v]$). Let $\mathcal{A}_\epsilon(\mathcal{U},\mathcal{V})$ be the set of all $\epsilon$-assignments from  $\mathcal{U}$ to $\mathcal{V}$. 
\end{definition}
\noindent
Note that the edit operation $\epsilon\,{\rightarrow}\,\epsilon$ is always selected, but we could equivalently consider this operation as being never selected. Compared to the definition of bijections involved in the sLSPAE (Problem \ref{def-squaredlsap}), an $\epsilon$-assignment does not constrain the $\epsilon$-elements directly. Indeed, by definition of $\varphi$ above, we have
\begin{equation}\label{eq-unconst}
1\leq|\varphi(\epsilon)|\leq m\,{+}\,1~~~\text{and}~~~
1\leq|\varphi^{-1}[\epsilon]|\,{\leq}\,n\,{+}\,1.
\end{equation}
For example the $\epsilon$-assignment in Fig.~\ref{fig-assignedit}(b) is given by
$$
	1\,{\rightarrow}\,\{4\},~2\,{\rightarrow}\,\{6\},~3\,{\rightarrow}\,\{5\},~4\,{\rightarrow}\,\{\epsilon\},~\epsilon\,{\rightarrow}\,\{1,2,3,7,\epsilon\}.
$$
Definition~\ref{def-assigneps} also implies the existence of the sets:
\begin{equation}
\begin{aligned}
	&\mathcal{U}^\sharp=\left\{u\,{\in}\,\mathcal{U}\,|\,\varphi(u)\,{\subset}\,\mathcal{V} \right\}=\varphi^{-1}[\mathcal{V}]\cap\mathcal{U}\\
	&\mathcal{V}^\sharp=\left\{v\,{\in}\,\mathcal{V}\,|\,\varphi^{-1}[v]\,{\subset}\,\mathcal{U} \right\}=\varphi[\mathcal{U}]\cap\mathcal{V}
\end{aligned}
\end{equation}
where $\varphi^{-1}[\mathcal{V}]$ denotes the preimage of the set composed of all singletons of $\mathcal{P}(\mathcal{V})$ by $\varphi$. Similarly $\varphi[\mathcal{U}]$ denotes the image of $\mathcal{U}$ by $\varphi$. Then the function (Fig. \ref{fig-assignedit}(c))
\begin{equation}
	\widetilde{\varphi}\,{:}\,
        \left(\begin{array}{ccl}
                \mathcal{U}^\sharp&\rightarrow&\mathcal{V}^\sharp\\
                u&\mapsto& v~~\mbox{with}~\{v\}\,{=}\,\varphi(u)\\
              \end{array}\right)
\end{equation}
is bijective. Indeed, given $u\,{\in}\,\mathcal{U}^\sharp$, by
Eq.~\ref{eq-cst-vphi} we have $|\varphi(u)|\,{=}\,1$. Since $u\,{\in}\,\mathcal{U}^\sharp$,
there is $v\,{\in}\,\mathcal{V}$ such that $\varphi(u)\,{=}\,\{v\}$. Moreover, by Eq.~\ref{eq-cst-vphi}
we still have $|\varphi^{-1}[v]|\,{=}\,1$, and since
$\varphi(u)\,{=}\,\{v\}$, then $\varphi^{-1}[v]\,{=}\,\{u\}\,{\subset}\,\mathcal{U}$. Hence
$v\,{\in}\,\mathcal{V}^\sharp$. Moreover, by Eq.~\ref{eq-cst-vphi}, for any
$v\,{\in}\,\mathcal{V}^\sharp$ there is a single
$u\,{\in}\,\mathcal{U}$ such that $\{v\}\,{=}\,\varphi(u)$. We have
$u\,{\in}\,\mathcal{U}$ and $\varphi(u)\,{=}\,\{v\}\,{\subset}\,\mathcal{V}$, thus
$u\,{\in}\,\mathcal{U}^\sharp$.  Hence, for any
$v\,{\in}\,\mathcal{V}^\sharp$ there is a single
$u\,{\in}\,\mathcal{U}^\sharp$ such that
$\widetilde{\varphi}(u)\,{=}\,v$.

The function $\widetilde{\varphi}$
is thus bijective. It may be understood as the restriction of
$\varphi$ to the set of elements of $\mathcal{U}$ which are not mapped
onto $\epsilon$. An $\epsilon$-assignment may thus be understood as
a bijection on which the bijectivity constraint is relaxed for
$\epsilon$.

Let $\mathcal{I}(\mathcal{U},\mathcal{V})$ be the set of all injections
from a subset of $\mathcal{U}$ onto $\mathcal{V}$. Let
$\varphi\,{\in}\,\mathcal{A}_\epsilon(\mathcal{U},\mathcal{V})$ be an
$\epsilon$-assignment. Note that the bijective restriction
$\widetilde{\varphi}$ of $\varphi$ is by definition injective from
$\mathcal{U}^\sharp$ onto $\mathcal{V}$, and
hence belongs to $\mathcal{I}(\mathcal{U},\mathcal{V})$. Consider a mapping
  $\widetilde{\varphi}\,{:}\,\mathcal{U}_s\,{\subseteq}\,\mathcal{U}\,{\rightarrow}\,\mathcal{V}$
  of $\mathcal{I}(\mathcal{U},\mathcal{V})$. Then the 
  mapping $\varphi\,{:}\,\mathcal{U}_\epsilon\,{\rightarrow}\,\mathcal{P}(\mathcal{V}_\epsilon)$ defined by:
\begin{equation}\label{eq-phisharp}
	\left\lbrace\begin{aligned}
		\forall u\,{\in}\,\mathcal{U}_s,&~~\varphi(u)=\{\widetilde{\varphi}(u)\}\\
		\forall u\,{\in}\,\mathcal{U}\,{\setminus}\,\mathcal{U}_s,&~~\varphi(u)=\{\epsilon\}\\
		~&~~\varphi(\epsilon)=\mathcal{V}_\epsilon\setminus\widetilde{\varphi}[\mathcal{U}_s]
	\end{aligned}\right.
\end{equation}
belongs to $\mathcal{A}_\epsilon(\mathcal{U},\mathcal{V})$
by construction. More generally $\varphi$ and $\widetilde{\varphi}$ are linked as follows.
\begin{proposition}\label{prop:bijAESA}
The following mapping is bijective:
\begin{equation}
	\psi:\left(\begin{array}{ccc}
		\mathcal{A}_\epsilon(\mathcal{U},\mathcal{V})&\rightarrow &\mathcal{I}(\mathcal{U},\mathcal{V})\\
		\varphi & \mapsto & \widetilde{\varphi}
	\end{array}\right)
\end{equation}
\end{proposition}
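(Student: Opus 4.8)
The plan is to exhibit an explicit two-sided inverse of $\psi$, namely the map $\Phi\,{:}\,\mathcal{I}(\mathcal{U},\mathcal{V})\,{\rightarrow}\,\mathcal{A}_\epsilon(\mathcal{U},\mathcal{V})$ sending a partial injection $\widetilde{\varphi}\,{:}\,\mathcal{U}_s\,{\subseteq}\,\mathcal{U}\,{\rightarrow}\,\mathcal{V}$ to the mapping $\varphi$ defined by Eq.~\ref{eq-phisharp}. The discussion preceding the statement already records the two facts that make $\psi$ and $\Phi$ well defined: for $\varphi\,{\in}\,\mathcal{A}_\epsilon(\mathcal{U},\mathcal{V})$ the restriction $\widetilde{\varphi}$ is injective on $\mathcal{U}^\sharp$ with values in $\mathcal{V}$, hence lies in $\mathcal{I}(\mathcal{U},\mathcal{V})$; and conversely the $\varphi$ built from a partial injection through Eq.~\ref{eq-phisharp} satisfies the three constraints of Eq.~\ref{eq-cst-vphi} by construction. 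So the proof reduces to checking $\psi\,{\circ}\,\Phi\,{=}\,\mathrm{id}$ and $\Phi\,{\circ}\,\psi\,{=}\,\mathrm{id}$.

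For $\psi\,{\circ}\,\Phi\,{=}\,\mathrm{id}$, I would start from a partial injection $\widetilde{\varphi}\,{:}\,\mathcal{U}_s\,{\rightarrow}\,\mathcal{V}$, set $\varphi\,{=}\,\Phi(\widetilde{\varphi})$, and identify the set $\mathcal{U}^\sharp$ attached to $\varphi$. If $u\,{\in}\,\mathcal{U}_s$ then $\varphi(u)\,{=}\,\{\widetilde{\varphi}(u)\}\,{\subseteq}\,\mathcal{V}$, so $u\,{\in}\,\mathcal{U}^\sharp$; if $u\,{\in}\,\mathcal{U}\,{\setminus}\,\mathcal{U}_s$ then $\varphi(u)\,{=}\,\{\epsilon\}\,{\not\subseteq}\,\mathcal{V}$, so $u\,{\notin}\,\mathcal{U}^\sharp$. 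Hence $\mathcal{U}^\sharp\,{=}\,\mathcal{U}_s$, and on this common domain the bijective restriction of $\varphi$ sends $u$ to the unique element of $\varphi(u)\,{=}\,\{\widetilde{\varphi}(u)\}$, i.e. $\psi(\Phi(\widetilde{\varphi}))\,{=}\,\widetilde{\varphi}$.

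For $\Phi\,{\circ}\,\psi\,{=}\,\mathrm{id}$, I would fix $\varphi\,{\in}\,\mathcal{A}_\epsilon(\mathcal{U},\mathcal{V})$, put $\widetilde{\varphi}\,{=}\,\psi(\varphi)$ (a partial injection with domain $\mathcal{U}_s\,{=}\,\mathcal{U}^\sharp$), let $\varphi'\,{=}\,\Phi(\widetilde{\varphi})$, and prove $\varphi'\,{=}\,\varphi$ by a pointwise check on $\mathcal{U}_\epsilon$. On $\mathcal{U}^\sharp$ both maps return $\varphi(u)$. For $u\,{\in}\,\mathcal{U}\,{\setminus}\,\mathcal{U}^\sharp$, the constraint $|\varphi(u)|\,{=}\,1$ together with $\varphi(u)\,{\not\subseteq}\,\mathcal{V}$ forces $\varphi(u)\,{=}\,\{\epsilon\}\,{=}\,\varphi'(u)$. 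The only delicate point is the value at $\epsilon$: I claim $\varphi(\epsilon)\,{=}\,\mathcal{V}_\epsilon\,{\setminus}\,\mathcal{V}^\sharp$, which equals $\varphi'(\epsilon)$ since $\widetilde{\varphi}[\mathcal{U}^\sharp]\,{=}\,\mathcal{V}^\sharp$. Indeed $\epsilon\,{\in}\,\varphi(\epsilon)$ by Eq.~\ref{eq-cst-vphi} and $\epsilon\,{\notin}\,\mathcal{V}^\sharp\,{\subseteq}\,\mathcal{V}$, so $\epsilon$ lies on both sides; and for $v\,{\in}\,\mathcal{V}$, using $|\varphi^{-1}[v]|\,{=}\,1$, the singleton $\varphi^{-1}[v]$ is either contained in $\mathcal{U}$ (then $v\,{\in}\,\mathcal{V}^\sharp$ and $\epsilon\,{\notin}\,\varphi^{-1}[v]$, i.e. $v\,{\notin}\,\varphi(\epsilon)$) or equals $\{\epsilon\}$ (then $v\,{\notin}\,\mathcal{V}^\sharp$ and $v\,{\in}\,\varphi(\epsilon)$), so $v\,{\in}\,\varphi(\epsilon)\,{\Leftrightarrow}\,v\,{\notin}\,\mathcal{V}^\sharp$. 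This settles the claim, hence $\varphi'\,{=}\,\varphi$, and with both composites equal to the identity, $\psi$ is bijective with $\psi^{-1}\,{=}\,\Phi$.

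The argument is essentially routine once the inverse $\Phi$ is written down; the one place that demands care is translating the preimage constraint $|\varphi^{-1}[v]|\,{=}\,1$ into the explicit description of $\varphi(\epsilon)$ — that is, matching the ``bijectivity relaxed at $\epsilon$'' behaviour of an $\epsilon$-assignment with the complement $\mathcal{V}_\epsilon\,{\setminus}\,\widetilde{\varphi}[\mathcal{U}_s]$ appearing in Eq.~\ref{eq-phisharp}, and keeping track of the always-present operation $\epsilon\,{\rightarrow}\,\epsilon$.
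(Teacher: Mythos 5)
Your proof is correct, and it takes a recognizably different route from the paper's on the half that matters. Both arguments use the same reconstruction formula (Eq.~\ref{eq-phisharp}): the paper uses it to assert surjectivity, then proves injectivity separately by a contrapositive case analysis --- given $\varphi_1\,{\not=}\,\varphi_2$, it chases through the cases $u\,{\in}\,\mathcal{U}$ versus $u\,{=}\,\epsilon$ (the latter splitting again on whether the witnessing $u'$ lies in $\mathcal{U}_1^\sharp$) to exhibit a point where $\widetilde{\varphi}_1$ and $\widetilde{\varphi}_2$ disagree. You instead promote the reconstruction to an explicit two-sided inverse $\Phi$ and verify both composites; injectivity then falls out of $\Phi\,{\circ}\,\psi\,{=}\,\mathrm{id}$ rather than from a case split. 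The substantive content of your left-inverse check --- that the constraints of Eq.~\ref{eq-cst-vphi} force $\varphi(\epsilon)\,{=}\,\mathcal{V}_\epsilon\,{\setminus}\,\mathcal{V}^\sharp$ and $\varphi(u)\,{=}\,\{\epsilon\}$ off $\mathcal{U}^\sharp$, so that $\varphi$ is fully determined by $\widetilde{\varphi}$ --- is exactly the structural fact the paper's case analysis circles around without stating, and your translation of $|\varphi^{-1}[v]|\,{=}\,1$ into the dichotomy $v\,{\in}\,\varphi(\epsilon)\,{\Leftrightarrow}\,v\,{\notin}\,\mathcal{V}^\sharp$ is handled correctly. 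What your approach buys is a cleaner, more reusable statement ($\psi^{-1}\,{=}\,\Phi$, which Lemma~\ref{prop:edtosquared2} later uses implicitly anyway) and the elimination of the nested case analysis; what the paper's buys is that it never needs to verify that $\Phi$ lands back on the original $\varphi$, only that distinct inputs stay distinct. One small point of rigor common to both: you, like the paper, defer the check that $\Phi(\widetilde{\varphi})$ actually satisfies Eq.~\ref{eq-cst-vphi} to a ``by construction'' remark; that is acceptable here but is the one remaining unexpanded step.
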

\begin{proof}
  Consider a mapping
  $\widetilde{\varphi}\,{:}\,\mathcal{U}_s\,{\subseteq}\,\mathcal{U}\,{\rightarrow}\,\mathcal{V}$
  of $\mathcal{I}(\mathcal{U},\mathcal{V})$. The associated
  mapping $\varphi\,{\in}\,\mathcal{A}_\epsilon(\mathcal{U},\mathcal{V})$ can
  be reconstructed  according to Eq.~\ref{eq-phisharp}, and one can easily show that $\varphi$
belongs to $\mathcal{A}_\epsilon(\mathcal{U},\mathcal{V})$
by construction. The mapping $\psi$
is thus surjective. We show that it is bijective. Consider two elements
$\varphi_1,\varphi_2\in\mathcal{A}_\epsilon(\mathcal{U},\mathcal{V})$
such that $\varphi_1\,{\not=}\,\varphi_2$.  There is necessarily $u\,{\in}\,\mathcal{U}_\epsilon$ such that
$\varphi_1(u)\,{\not=}\,\varphi_2(u)$:
\begin{itemize}
\item If $u\,{\in}\,\mathcal{U}$, then: 
  \begin{itemize}
  \item If $\varphi_1(u)\,{=}\,\{v\}$ and $\varphi_2(u)\,{=}\,\{v'\}$ with
    $\epsilon\,{\not\in}\,\{v,v'\}$ then
    $u\,{\in}\, \mathcal{U}_1^\sharp\cap \mathcal{U}_2^\sharp$,
    but in this case we have $\widetilde{\varphi}_1(u)\,{=}\,v\,{\not=}\,\widetilde{\varphi}_2(u)\,{=}\,v'$, which implies that
    $\widetilde{\varphi}_1\,{\not=}\,\widetilde{\varphi}_2$.  
  \item Else, suppose w.l.o.g. that
    $\varphi_1(u)\,{=}\,\{\epsilon\}$ and
    $\varphi_2(u)\,{\not=}\,\{\epsilon\}$, then
    $u\,{\not\in}\,\mathcal{U}_1^\sharp$ and
    $u\,{\in}\,\mathcal{U}_2^\sharp$ which implies that
    $\widetilde{\varphi}_1\,{\not=}\,\widetilde{\varphi}_2$ (the $\widetilde{\varphi}_i$ are not defined on the same subset of $\mathcal{U}$).
  \end{itemize}
\item If $u\,{=}\,\epsilon$, consider w.l.o.g. that
  $v\,{\in}\,\varphi_1(\epsilon)\,{\setminus}\,\varphi_2(\epsilon)$.
  By Eq.~\ref{eq-unconst}, we have $\epsilon\,{\in}\,\varphi_1(\epsilon)\,{\cap}\,\varphi_2(\epsilon)$, consequently 
  $\epsilon\not=v\,{\in}\,\mathcal{V}$. Since $v\,{\not\in}\,\varphi_2(\epsilon)$
  it exists $u^\prime\,{\in}\,\mathcal{U}$ such that
  $\varphi_2(u^\prime)\,{=}\,\{v\}$.  Then we have
  $u^\prime\,{\in}\,\mathcal{U}_2^\sharp$.
  \begin{itemize}
  \item If
    $u^\prime\,{\not\in}\,\mathcal{U}_1^\sharp$,
    $\widetilde{\varphi}_1$ and $\widetilde{\varphi}_2$ are not
    defined on the same subset of $\mathcal{U}$ and are consequently
    not equal.
  \item If
    $u^\prime\,{\in}\,\mathcal{U}_1^\sharp$,
    since $v\,{\in}\,\varphi_1(\epsilon)$ then
    $\varphi_1(u^\prime)\,{\not=}\,\{v\}\,{=}\,\varphi_2(u^\prime)$ from Eq.~\ref{eq-cst-vphi},
    and so $\widetilde{\varphi}_1\,{\not=}\,\widetilde{\varphi}_2$.
  \end{itemize}
\end{itemize}
In all cases we get $\widetilde{\varphi}_1\,{\not=}\,\widetilde{\varphi}_2$. The mapping $\psi$ is thus injective and hence bijective.
\end{proof}
\noindent
A simple consequence is given by the following equality.
\begin{corollary}\label{coro-eqAI}
$|\mathcal{A}_\epsilon(\mathcal{U},\mathcal{V})|\,{=}\,|\mathcal{I}(\mathcal{U},\mathcal{V})|$.
\end{corollary}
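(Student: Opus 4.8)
The plan is to read off the corollary directly from Proposition~\ref{prop:bijAESA}. That proposition exhibits the explicit map $\psi$ that sends an $\epsilon$-assignment $\varphi$ to its bijective restriction $\widetilde{\varphi}$, and shows that $\psi$ is a bijection from $\mathcal{A}_\epsilon(\mathcal{U},\mathcal{V})$ onto $\mathcal{I}(\mathcal{U},\mathcal{V})$. Since $\mathcal{U}$ and $\mathcal{V}$ are finite, both of these sets are finite, and the existence of a bijection between two finite sets forces their cardinalities to coincide. This is the whole proof; there is essentially no obstacle left, because all the combinatorial bookkeeping — well-definedness of $\widetilde{\varphi}$, the reconstruction formula of Eq.~\ref{eq-phisharp}, and the surjectivity and injectivity of $\psi$ — was already carried out in the proof of the proposition.

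For good measure I would append a one-line sanity check that gives the common value in closed form: a partial injection in $\mathcal{I}(\mathcal{U},\mathcal{V})$ is determined by the choice of a domain $\mathcal{U}_s\,{\subseteq}\,\mathcal{U}$ of size $k$, for $0\,{\le}\,k\,{\le}\,\min\{n,m\}$, together with an injective map of $\mathcal{U}_s$ into $\mathcal{V}$, whence $|\mathcal{I}(\mathcal{U},\mathcal{V})|=\sum_{k=0}^{\min\{n,m\}}\comb{n}{k}\arrang{m}{k}$; and by Eq.~\ref{eq-phisharp} an $\epsilon$-assignment is recovered from exactly this data, so the same count applies to $\mathcal{A}_\epsilon(\mathcal{U},\mathcal{V})$. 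I would keep the main text to the one-line argument via $\psi$ and leave the explicit formula as an optional remark.
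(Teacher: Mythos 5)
Your proof is correct and matches the paper exactly: the corollary is stated there as a direct consequence of Proposition~\ref{prop:bijAESA}, with no further argument needed, and your appeal to the bijection $\psi$ between two finite sets is precisely that. Your optional closed-form count is also consistent with Theorem~\ref{th-nbass}, which the paper derives from this same corollary.
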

\noindent
This allows to easily enumerate the number of $\epsilon$-assignments.
\begin{theorem}\label{th-nbass}
	There is 
	\begin{equation*}
	|\mathcal{A}_\epsilon(\mathcal{U},\mathcal{V})|=\hspace{-0.16cm}\sum_{p=0}^{\min\{n,m\}}\hspace{-0.16cm}\comb{n}{p}\comb{m}{p}p!
	\end{equation*}
	possible $\epsilon$-assignments between two sets
        $\mathcal{U}$ and $\mathcal{V}$ with $n\,{=}\,|\mathcal{U}|$,
        $m\,{=}\,|\mathcal{V}|$. $\comb{n}{p}$ is the number of permutations (bijections), and $\arrang{m}{p}$ the number of partial permutations (injections).
\end{theorem}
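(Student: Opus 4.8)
The plan is to invoke Corollary~\ref{coro-eqAI}, which reduces the count to $|\mathcal{I}(\mathcal{U},\mathcal{V})|$, the number of injective maps defined on some subset of $\mathcal{U}$ and taking values in $\mathcal{V}$. I would then enumerate $\mathcal{I}(\mathcal{U},\mathcal{V})$ by stratifying it according to the size of the domain. For a fixed integer $p$, let $\mathcal{I}_p$ be the subset of $\mathcal{I}(\mathcal{U},\mathcal{V})$ consisting of those injections whose domain has cardinality exactly $p$. Every element of $\mathcal{I}(\mathcal{U},\mathcal{V})$ lies in exactly one $\mathcal{I}_p$, and since an injection cannot map a set of size larger than $m\,{=}\,|\mathcal{V}|$ into $\mathcal{V}$ (and of course the domain cannot exceed $n\,{=}\,|\mathcal{U}|$), the index $p$ ranges precisely over $0,\ldots,\min\{n,m\}$. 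Hence $|\mathcal{I}(\mathcal{U},\mathcal{V})|\,{=}\,\sum_{p=0}^{\min\{n,m\}}|\mathcal{I}_p|$.

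The second step is to count $|\mathcal{I}_p|$ by the obvious three-way choice. Specifying an element of $\mathcal{I}_p$ amounts to: choosing the domain, a $p$-element subset $\mathcal{U}_s\,{\subseteq}\,\mathcal{U}$, which can be done in $\comb{n}{p}$ ways; choosing the image, a $p$-element subset $\mathcal{V}_s\,{\subseteq}\,\mathcal{V}$, which can be done in $\comb{m}{p}$ ways; and choosing a bijection $\mathcal{U}_s\,{\rightarrow}\,\mathcal{V}_s$, of which there are $p!$. These choices are independent and each triple yields a distinct injection, and conversely every injection of $\mathcal{I}_p$ arises from exactly one such triple (its domain, its image, and itself viewed as a bijection onto its image). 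Therefore $|\mathcal{I}_p|\,{=}\,\comb{n}{p}\comb{m}{p}\,p!\,{=}\,\comb{n}{p}\arrang{m}{p}$.

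Combining the two steps gives
\begin{equation*}
|\mathcal{A}_\epsilon(\mathcal{U},\mathcal{V})|=|\mathcal{I}(\mathcal{U},\mathcal{V})|=\sum_{p=0}^{\min\{n,m\}}|\mathcal{I}_p|=\sum_{p=0}^{\min\{n,m\}}\comb{n}{p}\comb{m}{p}\,p!,
\end{equation*}
which is the claimed formula.

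There is no real obstacle here: the argument is an elementary counting decomposition, and the only points requiring a word of care are (i) that the strata $\mathcal{I}_p$ genuinely partition $\mathcal{I}(\mathcal{U},\mathcal{V})$, so that summing their cardinalities is legitimate, (ii) that the range of $p$ is cut off at $\min\{n,m\}$ rather than at $n$ or $m$ separately, and (iii) the bijectivity of the triple-to-injection correspondence, so that no partial injection is counted twice or omitted. The substantive content — that $\epsilon$-assignments are in bijection with partial injections — has already been established in Proposition~\ref{prop:bijAESA} and recorded in Corollary~\ref{coro-eqAI}, so the theorem follows once that identification is in hand.
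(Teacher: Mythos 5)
Your proposal is correct and follows essentially the same route as the paper: both reduce the count to $|\mathcal{I}(\mathcal{U},\mathcal{V})|$ via Corollary~\ref{coro-eqAI}, stratify the injections by the size $p$ of their domain, and count $\comb{n}{p}$ choices of domain times $p!\,\comb{m}{p}$ injections from it. Your write-up is somewhat more explicit about the partition into strata and the bijectivity of the triple-to-injection correspondence, but the underlying argument is identical.
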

\begin{proof}
  Consider the set $\mathcal{I}(\mathcal{U},\mathcal{V})$ and the
  notations of Eq.~\ref{eq-phisharp}. We have $\comb{n}{p}$ subsets
  $\mathcal{U}_s$ of $\mathcal{U}$ with
  $|\mathcal{U}_s|\,{=}\,p\,{\leq}\,\min\{n,m\}$. For each subset there exists
  $p!\,\comb{m}{p}$ injective mappings from $\mathcal{U}_s$ to
  $\mathcal{V}$. So for each $p\,{=}\,0,\ldots,\min\{n,m\}$, the
  number of injective mappings is $p!\,\comb{n}{p}\comb{m}{p}$. Note
  that $p\,{=}\,0$ corresponds to the case where all elements of
  $\mathcal{U}$ and $\mathcal{V}$ are assigned to $\epsilon$.
\end{proof}
\noindent
Remark that there exists much more $\epsilon$-assignments than injections. Indeed, suppose that $n\leq m$, we have
\begin{equation*}
	|\mathcal{A}_\epsilon(\mathcal{U},\mathcal{V})|=\arrang{m}{n}+\sum_{p=0}^{n-1}\arrang{m}{p}\comb{n}{p}>\arrang{m}{n}
\end{equation*}
However, there is much less $\epsilon$-assignments than bijections enumerated over $n\,{+}\,m$ elements, as considered by the sLSAPE.
\begin{corollary}
  If w.l.o.g we suppose that $n\leq m$ we have:
  \[
  \comb{n+m}{n}\leq |\mathcal{A}_\epsilon(\mathcal{U},\mathcal{V})|\leq \frac{(n+m)!}{m!}<(n+m)!
  \]
\end{corollary}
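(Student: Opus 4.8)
The plan is to read everything off the closed form of Theorem~\ref{th-nbass}. Since $n\le m$ we have $\min\{n,m\}=n$, so
\[
|\mathcal{A}_\epsilon(\mathcal{U},\mathcal{V})|=\sum_{p=0}^{n}\comb{n}{p}\comb{m}{p}\,p!\,.
\]
The observation that drives both bounds is the elementary chain $1\le p!\le n!$, valid for every integer $p$ with $0\le p\le n$. Multiplying the nonnegative coefficient $\comb{n}{p}\comb{m}{p}$ by these three quantities and summing over $p=0,\ldots,n$ gives
\[
\sum_{p=0}^{n}\comb{n}{p}\comb{m}{p}\ \le\ |\mathcal{A}_\epsilon(\mathcal{U},\mathcal{V})|\ \le\ n!\sum_{p=0}^{n}\comb{n}{p}\comb{m}{p}\,.
\]

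Next I would invoke the Chu--Vandermonde identity $\sum_{p=0}^{n}\comb{n}{p}\comb{m}{p}=\comb{n+m}{n}$ (immediate from $\comb{m}{p}=\comb{m}{m-p}$ and the Vandermonde convolution $\sum_{k}\comb{n}{k}\comb{m}{j-k}=\comb{n+m}{j}$ with $j=m$, the truncation to $p\le n$ being harmless since $n\le m$). Substituting it into the two ends of the chain yields
\[
\comb{n+m}{n}\ \le\ |\mathcal{A}_\epsilon(\mathcal{U},\mathcal{V})|\ \le\ n!\,\comb{n+m}{n}=\frac{(n+m)!}{m!}\,,
\]
which are exactly the first two inequalities. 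For the last one, $\frac{(n+m)!}{m!}<(n+m)!$ because $m!\ge 2$; this uses the tacit assumption that the sets are non-degenerate ($m\ge2$), and if one insists on allowing $m\le1$ the inequality merely becomes non-strict. Optionally I would also remark that the second inequality above is already strict when $n\ge2$, since then the term $p=0$ satisfies $0!<n!$.

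I do not expect a genuine obstacle; the only thing to get right is to expand $\comb{n+m}{n}$ against the \emph{same} building blocks $\comb{n}{p}\comb{m}{p}$ that appear in Theorem~\ref{th-nbass}. It is worth flagging that the more reflexive move --- writing $\comb{n+m}{n}=\sum_{p}\comb{n}{p}\comb{m}{n-p}$ and comparing term by term with $\comb{m}{p}p!=m!/(m-p)!$ --- does \emph{not} work, because $m!/(m-p)!\ge\comb{m}{n-p}$ fails in general (already for $n=1$, $m=3$, $p=0$). A purely combinatorial variant is also available via Corollary~\ref{coro-eqAI}: the lower bound follows by injecting the $n$-subsets of an $(n+m)$-element set into the partial injections $\mathcal{U}\rightharpoonup\mathcal{V}$ (send a subset to the order-preserving matching between the complement of its $\mathcal{U}$-part and its $\mathcal{V}$-part), and the upper bound by injecting each partial injection $\mathcal{U}\rightharpoonup\mathcal{V}$ into a total injection $\mathcal{U}\to\mathcal{V}\sqcup\mathcal{U}$ (map the undefined elements identically into a disjoint copy of $\mathcal{U}$), of which there are exactly $(n+m)!/m!$.
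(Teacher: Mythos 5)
Your argument is exactly the paper's: apply $1\leq p!\leq n!$ termwise to the sum of Theorem~\ref{th-nbass} and collapse $\sum_{p=0}^{n}\comb{n}{p}\comb{m}{p}$ via the Vandermonde identity to $\comb{n+m}{n}$, giving both bounds and then $\frac{(n+m)!}{m!}<(n+m)!$. Your side remark that the final strict inequality needs $m\geq 2$ is a correct refinement the paper silently omits, but otherwise there is nothing to add.
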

\begin{proof}
  We have by a classical result on binomial coefficient:
  \[
  \sum_{p=0}^n\comb{n}{p}\comb{m}{p}=\comb{n+m}{n}=\frac{(n+m)!}{n!m!}
  \]
  Hence from Theorem~\ref{th-nbass} we have:
  \[
  \begin{aligned}
  &|\mathcal{A}_\epsilon(\mathcal{U},\mathcal{V})|\leq n!\sum_{p=0}^n\comb{n}{p}\comb{m}{p}
  =\frac{(n+m)!}{m!}<(n+m)!\\
  &|\mathcal{A}_\epsilon(\mathcal{U},\mathcal{V})|\geq \sum_{p=0}^n\comb{n}{p}\comb{m}{p}=\comb{n+m}{n}
  \end{aligned}
  \]
\end{proof}
\noindent
As for bijections, $\epsilon$-assignments can be equivalently represented by matrices, which are useful for analyzing the associated linear sum assignment problem and for deriving a solution.
\subsection{Matrix form}\label{sec-mtxform}
Given two sets $\mathcal{U}\,{=}\,\{u_i\}_{i=1,\ldots,n}$ and $\mathcal{V}\,{=}\,\{v_j\}_{j=1,\ldots,m}$, any $\epsilon$-assignment $\varphi\,{\in}\,\mathcal{A}_\epsilon(\mathcal{U},\mathcal{V})$ is in one-to-one correspondence with a matrix $\mathbf{X}\,{\in}\,\{0,1\}^{(n+1)\times(m+1)}$ such that its $n$ first rows represent $\mathcal{U}$ (row $i$ corresponds to $u_i$), its $m$ first columns represent $\mathcal{V}$ (column $j$ corresponds to $v_j$), its row $(n\,{+}\,1)$ and its column $(m\,{+}\,1)$ represent the null element $\epsilon$, and
\begin{equation*}
	\left\lbrace\begin{aligned}
		&x_{i,j}=\delta_{\varphi(u_i)=\{v_j\}},&&\forall i\,{=}\,1,\ldots,n,~\forall j\,{=}\,1,\ldots,m&&\text{(substitutions)}\\
		&x_{i,m+1}=\delta_{\varphi(u_i)=\{\epsilon\}},&&\forall i\,{=}\,1,\ldots,n&&\text{(removals)}\\
		&x_{n+1,j}=\delta_{\varphi^{-1}[v_j]=\{\epsilon\}},&&\forall j\,{=}\,1,\ldots,m&&\text{(insertions)}\\
		&x_{n+1,m+1}=1&&~&&~
	\end{aligned}\right.
\end{equation*}
where $\delta_{r}\,{=}\,1$ if the relation $r$ is true, or $\delta_{r}\,{=}\,0$ else.

The matrix $\mathbf{X}$ has the general form:
\begin{equation}\label{eq-mtx-assigneps}
	\begin{blockarray}{cc|c}
	&v_1\cdots v_m&\epsilon \\
	\begin{block}{c(c|c)}
		u_1&&\\
		\vdots&\mathbf{X}^\text{sub}& \mathbf{x}^\text{rem} \\
		u_n&  &\\\cline{1-3}
		\epsilon&\mathbf{x}^\text{ins}&1\\
	\end{block}
	\end{blockarray}\in\{0,1\}^{(n+1)\times(m+1)}
\end{equation}
where $\mathbf{X}^{\text{sub}}\,{\in}\,\{0,1\}^{n\times m}$ represents substitutions, $\mathbf{x}^{\text{rem}}\,{\in}\,\{0,1\}^{n\times 1}$ removals, and $\mathbf{x}^{\text{ins}}\,{\in}\,\{0,1\}^{1\times m}$ insertions. For example, the matrix which represents the $\epsilon$-assignment shown in Fig.~\ref{fig-assignedit}(b) is given by:
\begin{equation*}\begin{small}
	\begin{blockarray}{cccccccc|c}
	&v_1&v_2&v_3&v_4&v_5&v_6&v_7&\epsilon &&\\
	\begin{block}{c(ccccccc|c)}
		u_1&0 & 0 & 0 & 1 & 0 & 0 & 0 & 0\\
		u_2&0 & 0 & 0 & 0 & 0 & 1 & 0 & 0\\
		u_3&0 & 0 & 0 & 0 & 1 & 0 & 0 & 0\\
		u_4&0 & 0 & 0 & 0 & 0 & 0 & 0 & 1\\\cline{1-9}
		\epsilon &1 & 1 & 1 & 0 & 0 & 0 & 1 & 1\\
	\end{block}
	\end{blockarray}
\end{small}\end{equation*}
Due to the constraints on $\varphi$ (Definition \ref{def-assigneps}), such a matrix has a $1$ on each of its $n$ first rows, and a $1$ on each of its $m$ first columns:
\begin{equation}\label{eq-assignedX}
	\left\lbrace\begin{aligned}
		&\sum_{j=1}^{m+1}x_{i,j}=1,&\forall i\,{=}\,1,\ldots,n&~~~~~~(\textit{i.e.}~|\varphi(u_i)|=1)\\
		&\sum_{i=1}^{n+1}x_{i,j}=1,&\forall j\,{=}\,1,\ldots,m&~~~~~~(\textit{i.e.}~|\varphi^{-1}[v_j]|=1)\\
		&x_{n+1,m+1}=1&&&
	\end{aligned}\right.
\end{equation}

Reciprocally, any matrix $\mathbf{X}\,{\in}\,\{0,1\}^{(n+1)\times(m+1)}$ satisfying Eq.~\ref{eq-assignedX} represents an $\epsilon$-assignment $\varphi$ such that:
\begin{equation}
	\left\{\begin{aligned}
	\forall u_i\,{\in}\,\mathcal{U},&&\varphi(u_i)&=\left\lbrace\begin{aligned}
		&\{v_j\}&&\text{if}~\,\exists j\,{\in}\,\{1,\ldots,m\}~|~x_{i,j}\,{=}\,1\\
		&\{\epsilon\}&&\text{else}~(\textit{i.e.}~x_{i,m+1}\,{=}\,1)
	\end{aligned}\right.\\
	~&&\varphi(\epsilon)&=\left\{v_j\,{\in}\,\mathcal{V}_\epsilon~|~x_{n+1,j}\,{=}\,1\right\}
	\end{aligned}\right.
\end{equation}
Let $\mathcal{S}_{n,m,\epsilon}$ be the set of all matrices in $\{0,1\}^{(n+1)\times(m+1)}$ satisfying Eq.~\ref{eq-assignedX}. It is the matrix form of $\mathcal{A}_\epsilon(\mathcal{U},\mathcal{V})$. Compared to a permutation matrix, where each row and each column sum to $1$, the last row and the last column of a matrix satisfying Eq.~\ref{eq-assignedX} satisfies the following relaxed constraints (Eq.~\ref{eq-unconst}):
\begin{equation*}
	1\leq\sum_{i=1}^{n+1}x_{i,m+1}\leq n+1~~~\text{and}~~~1\leq\sum_{j=1}^{m+1}x_{n+1,j}\leq m+1.
\end{equation*}

Remark that any $\epsilon$-assignment $\varphi\,{\in}\,\mathcal{A}_\epsilon(\mathcal{U},\mathcal{V})$ can be equivalently represented by a bipartite graph $G\,{=}\,((\mathcal{U}_\epsilon,\mathcal{V}_\epsilon),\mathcal{E})$ such that each edit operation performed according to $\varphi$ corresponds to an edge of $\mathcal{E}$. This graph is a subgraph of the complete bipartite graph $K_{n+1,m+1}$ generated from the two sets $\mathcal{U}_\epsilon$ and $\mathcal{V}_\epsilon$. Note that 
the matrix $\mathbf{X}$ associated with $\varphi$ corresponds to the node adjacency matrix of $G$. Reciprocally, any subgraph of $K_{n+1,m+1}$ represented by a node adjacency matrix $\mathbf{X}$ satisfying Eq.~\ref{eq-assignedX} defines an $\epsilon$-assignment.
\subsection{Partial assignments with edition}\label{sec-partial}
A partial assignment from $\mathcal{U}$ onto $\mathcal{V}$ is an assignment wherein all elements are not assigned, \textit{i.e.} a bijection from a subset of $\mathcal{U}$ to a subset of $\mathcal{V}$, or equivalently a partial permutation matrix. This can be defined similarly in the context of $\epsilon$-assignments.
\begin{definition}[partial $\epsilon$-assignment]\label{def-partialeps}
A partial $\epsilon$-assignment from $\mathcal{U}$ to $\mathcal{V}$ is a mapping $\varphi:\mathcal{U}_\epsilon\rightarrow\mathcal{P}(\mathcal{V}_\epsilon)$ satisfying the following set of constraints:
\begin{equation}\label{eq-cst-vphi-partial}
	\left\{\begin{aligned}
		\forall u\,{\in}\,\mathcal{U},&~~\,\left|\varphi(u)\right| \in \{0,1\}\\
		\forall v\,{\in}\,\mathcal{V},&~~\,\left|\varphi^{-1}[v]\right| \in \{0,1\}\\
		\epsilon\in\varphi(\epsilon)&~
	\end{aligned}\right.
\end{equation}
which relaxes the one defining $\epsilon$-assignments (Eq.~\ref{eq-cst-vphi})
\end{definition}
\noindent
A partial $\epsilon$-assignment from $\mathcal{U}$ to $\mathcal{V}$ is
an $\epsilon$-assignment from a subset
$\mathcal{U}^\prime\,{\subseteq}\,\mathcal{U}$ onto a subset
$\mathcal{V}^\prime\,{\subseteq}\,\mathcal{V}$,
\textit{i.e.} an $\epsilon$-assignment in
$\mathcal{A}_\epsilon(\mathcal{U}^\prime,\mathcal{V}^{\,\prime})$.

A partial $\epsilon$-assignment can be equivalently represented by a matrix $\mathbf{X}\,{\in}\,\{0,1\}^{(n+1)\times(m+1)}$, having the structure of matrices of $S_{n,m,\epsilon}$ (Eq.~\ref{eq-mtx-assigneps}), and satisfying the set of constraints
\begin{equation}
\left\lbrace\begin{aligned}
	&\sum_{j=1}^{m+1}x_{i,j}\leq 1,&&\forall i\,{=}\,1,\ldots,n\\
	&\sum_{i=1}^{n+1}x_{i,j}\leq 1,&&\forall j\,{=}\,1,\ldots,m\\
	&x_{n+1,m+1}\,{=}\,1&&
\end{aligned}\right.
\end{equation}
This relaxes the ones defined by Eq.~\ref{eq-assignedX}.

For example, the following matrix represents a partial $\epsilon$-assignment with two unassigned elements: $u_2$ and $v_2$.
\begin{equation*}\begin{small}
	\begin{blockarray}{cccccccc|c}
	&v_1&v_2&v_3&v_4&v_5&v_6&v_7&\epsilon &&\\
	\begin{block}{c(ccccccc|c)}
		u_1&0 & 0 & 0 & 1 & 0 & 0 & 0 & 0\\
		u_2&0 & 0 & 0 & 0 & 0 & 0 & 0 & 0\\
		u_3&0 & 0 & 0 & 0 & 1 & 0 & 0 & 0\\
		u_4&0 & 0 & 0 & 0 & 0 & 0 & 0 & 1\\\cline{1-9}
		\epsilon &1 & 0 & 1 & 0 & 0 & 0 & 1 & 1\\
	\end{block}
	\end{blockarray}
\end{small}\end{equation*}
Note that the matrix $\mathbf{X}$ defines the node adjacency matrix of a bipartite graph having some isolated nodes.

\section{Minimal Linear Sum Assignment with Edition}\label{sec-pbform}
In order to simplify the forthcoming expressions, we assume w.l.o.g. that $\mathcal{U}\,{=}\,\{1,\ldots,n\}$, $\mathcal{V}\,{=}\,\{1,\ldots,m\}$, $\mathcal{U}_\epsilon\,{=}\,\mathcal{U}\,{\cup}\,\{n\,{+}\,1\}$ and $\mathcal{V}_\epsilon\,{=}\,\mathcal{V}\,{\cup}\,\{m\,{+}\,1\}$, \textit{i.e.} the element $\epsilon$ corresponds to $n\,{+}\,1$ in $\mathcal{U}_\epsilon$ and to $m\,{+}\,1$ in $\mathcal{V}_\epsilon$.
\subsection{Edit cost and problem formulation}
The definition of an $\epsilon$-assignment from $\mathcal{U}$ to $\mathcal{V}$ does not rely on the nature of the sets, \textit{i.e.} on the nature of the underlying data. In order to select a relevant $\epsilon$-assignment, among all assignments from $\mathcal{U}$ to $\mathcal{V}$, each possible edit operation $i\,{\rightarrow}\,j$ is penalized by a non-negative cost $c_{i,j}$. All possible costs can be represented by an edit cost matrix $\mathbf{C}\,{\in}\,[0,{+\infty})^{(n+1)\times(m+1)}$ having the same structure as the one of matrices in $S_{n,m,\epsilon}$ (Eq.~\ref{eq-mtx-assigneps}):
\begin{equation}\label{eq:matrixC}
	\mathbf{C} = \begin{blockarray}{@{\hspace{2pt}}c|c@{\hspace{2pt}}cc}
	1\cdots m&\epsilon &&\\
	\begin{block}{(@{\hspace{2pt}}c|c)@{\hspace{2pt}}cc}\vspace{-0.15cm}
		&&&1\\\vspace{-0.15cm}
		\mathbf{C}^\text{sub}& \mathbf{c}^\text{rem} &&\vdots\\
		&  &&n\\\cline{1-4}
		\mathbf{c}^\text{ins}&0&&\epsilon\\
	\end{block}
	\end{blockarray}
\end{equation}
where $c^{\text{sub}}_{i,j}$ penalizes the substitution
$i\,{\rightarrow}\,j$ for all pair
$(i,j)\,{\in}\,\mathcal{U}\,{\times}\mathcal{V}$, $c_{i,m+1}$
penalizes the removal $i\,{\rightarrow}\,\epsilon$ for all
$i\,{\in}\,\mathcal{U}$, $c_{n+1,j}$ penalizes the insertion
$\epsilon\,{\rightarrow}\,j$ for all $j\,{\in}\,\mathcal{V}$, and
$c_{n+1,m+1}\,{=}\,0$ associates a zero cost to the mapping 
$\epsilon\,{\rightarrow}\,\epsilon$. Note that the edit cost matrix
associates a cost to each edge of the complete bipartite graph
$K_{n+1,m+1}$.

Let $\varphi\,{\in}\,\mathcal{A}_\epsilon(\mathcal{U},\mathcal{V})$ be an $\epsilon$-assignment. Its cost is defined as
\begin{equation}\label{eq-Aeps}
\begin{aligned}
	A_\epsilon(\varphi,\mathbf{C})&=\sum_{i\in\mathcal{U}_\epsilon}\,\sum_{j\in\varphi(i)}c_{i,j}\\
	&=\underset{\text{substitutions}}{\underbrace{\sum_{\substack{i\in\mathcal{U}\\\varphi(i)=\{j\}}}\hspace{-0.15cm} c_{i,j}}}~+\underset{\text{removals}}{\underbrace{\sum_{\substack{i\in\mathcal{U}\\\varphi(i)=\{\epsilon\}}}\hspace{-0.1cm}c_{i,m+1}}}~+\underset{\text{insertions}}{\underbrace{\sum_{\substack{j\in\mathcal{V}\\\varphi^{-1}[j]=\{\epsilon\}}}\hspace{-0.18cm}c_{n+1,j}}}\\\\
        &=\underset{\text{substitutions}}{\underbrace{\sum_{\substack{i\in\mathcal{U}^\sharp}} c_{i,\widetilde{\varphi}(i)}}}~+\underset{\text{removals}}{\underbrace{\sum_{\substack{i\in\mathcal{U}\setminus\mathcal{U}^\sharp}}\hspace{-0.1cm}c_{i,m+1}}}~+\underset{\text{insertions}}{\underbrace{\sum_{\substack{j\in\mathcal{V}\setminus\mathcal{V}^\sharp}}\hspace{-0.1cm}c_{n+1,j}}}.
      \end{aligned}
\end{equation}
\begin{problem}[LSAPE]\label{pb-assigneps}
Given to sets $\mathcal{U}$ and $\mathcal{V}$, the linear sum assignment problem with edition (LSAPE) consists in finding an $\epsilon$-assignment having a minimal cost among all $\epsilon$-assignments transforming $\mathcal{U}$ into $\mathcal{V}$, \textit{i.e.} satisfying 
\begin{equation}\label{eq-lsapevphi}
		\underset{\varphi\in\mathcal{A}_\epsilon(\mathcal{U},\mathcal{V})}{\argmin}\,A_\epsilon(\varphi,\mathbf{C}).
\end{equation}
\end{problem}
\noindent
This problem is equivalent to the sLSAPE, as demonstrated in the following section.
\subsection{Link to the sLSAPE}
Recall that the sLAPE finds a squared $\epsilon$-assignment from
$\mathcal{U}$ to $\mathcal{V}$, \textit{i.e.} a bijection from
$\mathcal{U}\,{\cup}\,\mathcal{E}_\mathcal{U}$ to
$\mathcal{V}\,{\cup}\,\mathcal{E}_\mathcal{V}$, with
$\mathcal{E}_\mathcal{U}\,{=}\,\{\epsilon_j,~ j\,{\in}\,\mathcal{V}\}$
and $\mathcal{E}_\mathcal{V}\,{=}\,\{\epsilon_i,~
i\,{\in}\,\mathcal{U}\}$, and such that $i\,{\in}\,\mathcal{U}$
(resp. $\epsilon_j\,{\in}\,\mathcal{E}_\mathcal{U}$) cannot be
assigned to $\epsilon_k\,{\in}\,\mathcal{E}_\mathcal{V}$
(resp. $l\,{\in}\,\mathcal{V}$) with $k\,{\not=}\,i$
(resp. $l\,{\not=}\,j$). This is penalized through the cost values by
setting a large value to such forbidden couplings. Also, the cost of
assigning any element of $\mathcal{E}_\mathcal{U}$ to any element of
$\mathcal{E}_\mathcal{V}$ is null (equivalent to
$\epsilon\,{\rightarrow}\,\epsilon$, which has also a zero cost). We
assume that any squared $\epsilon$-assignment has a non-infinite cost
$A_\epsilon$.
\begin{lemma}\label{prop:edtosquared}
  Each $\epsilon$-assignment $\varphi\,{\in}\,\mathcal{A}_\epsilon(\mathcal{U},\mathcal{V})$ is associated to a bijection $\psi\,{\in}\,\mathcal{S}_{\mathcal{E}}(\mathcal{U},\mathcal{V})$ (Problem~\ref{def-squaredlsap}) such that:
\begin{equation}\label{eq-varphipsi1}
	\left\{\begin{array}{ll}
      \forall i\,{\in}\,\mathcal{U}^\sharp,&\psi(i)=\varphi(i)\\
	  \forall i\,{\in}\,\mathcal{U}\,{\setminus}\,\mathcal{U}^\sharp,&\psi(i)=\epsilon_i\\
	  \forall j\,{\in}\,\mathcal{V}\,{\setminus}\,\mathcal{V}^\sharp,&\psi(\epsilon_j)=j
  	\end{array}\right.
\end{equation}
and the restriction of $\psi$ to
\begin{equation}\label{eq-varphipsi2}
  	\left(\begin{array}{rcl}
  		\mathcal{E}_\mathcal{U}^\sharp=\{\epsilon_j\in \mathcal{E}_\mathcal{U} \,|\, 
  j\in \mathcal{V}^\sharp\}&\rightarrow&
  \mathcal{E}_\mathcal{V}^\sharp=\{\epsilon_i\in \mathcal{E}_\mathcal{V}\,|\,
 i\in \mathcal{U}^\sharp\}\\
 		\epsilon_j&\mapsto&\epsilon_i
  	\end{array}\right)
\end{equation}
is bijective. Moreover, mappings $\varphi$ and $\psi$ have the same
cost (Eq.~\ref{eq-Aepssq} and Eq.~\ref{eq-Aeps}).
\end{lemma}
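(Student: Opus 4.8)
The plan is to construct the bijection $\psi$ explicitly from a given $\epsilon$-assignment $\varphi$, verify it is a well-defined bijection of the augmented sets $\mathcal{U}\,{\cup}\,\mathcal{E}_\mathcal{U}\,{\rightarrow}\,\mathcal{V}\,{\cup}\,\mathcal{E}_\mathcal{V}$ respecting the forbidden-coupling structure of Problem~\ref{def-squaredlsap}, and then compare the two cost expressions term by term. The three defining clauses of Eq.~\ref{eq-varphipsi1} handle: (i) the "real" substitutions $i\,{\in}\,\mathcal{U}^\sharp$, which are copied verbatim from $\varphi$ (here $\varphi(i)\,{=}\,\{\widetilde{\varphi}(i)\}$, so reading $\psi(i)\,{=}\,\widetilde{\varphi}(i)\,{\in}\,\mathcal{V}$ is the natural identification); (ii) the removals $i\,{\in}\,\mathcal{U}\,{\setminus}\,\mathcal{U}^\sharp$, sent to their private null element $\epsilon_i\,{\in}\,\mathcal{E}_\mathcal{V}$; (iii) the insertions $j\,{\in}\,\mathcal{V}\,{\setminus}\,\mathcal{V}^\sharp$, which are hit by $\psi(\epsilon_j)\,{=}\,j$ from $\mathcal{E}_\mathcal{U}$. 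What is left undefined by these clauses is precisely $\psi$ on $\mathcal{E}_\mathcal{U}^\sharp\,{=}\,\{\epsilon_j\,:\,j\,{\in}\,\mathcal{V}^\sharp\}$, and Eq.~\ref{eq-varphipsi2} asks that this residual part be a bijection onto $\mathcal{E}_\mathcal{V}^\sharp\,{=}\,\{\epsilon_i\,:\,i\,{\in}\,\mathcal{U}^\sharp\}$; since both index sets have the same cardinality $|\mathcal{U}^\sharp|\,{=}\,|\mathcal{V}^\sharp|\,{=}\,p$ (the bijectivity of $\widetilde{\varphi}$ from Section~\ref{sec-assed}), such a bijection exists, and any choice will do — cost-neutrally, because $c(\epsilon_j,\epsilon_i)\,{=}\,0$. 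A convenient canonical choice is to send $\epsilon_j\mapsto\epsilon_{\widetilde{\varphi}^{-1}(j)}$, matching each insertion-side null to the null of the $\mathcal{U}$-element that $\varphi$ assigned to $j$.

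First I would record the cardinality bookkeeping: with $p\,{=}\,|\mathcal{U}^\sharp|\,{=}\,|\mathcal{V}^\sharp|$, the clauses of Eq.~\ref{eq-varphipsi1} define $\psi$ injectively on $\mathcal{U}^\sharp$ (image $=\widetilde{\varphi}[\mathcal{U}^\sharp]\,{=}\,\mathcal{V}^\sharp$, size $p$), on $\mathcal{U}\,{\setminus}\,\mathcal{U}^\sharp$ (image $=\mathcal{E}_\mathcal{V}\,{\setminus}\,\mathcal{E}_\mathcal{V}^\sharp$, size $n\,{-}\,p$), and the third clause defines $\psi$ on $\{\epsilon_j:j\,{\in}\,\mathcal{V}\,{\setminus}\,\mathcal{V}^\sharp\}\,{=}\,\mathcal{E}_\mathcal{U}\,{\setminus}\,\mathcal{E}_\mathcal{U}^\sharp$ (image $=\mathcal{V}\,{\setminus}\,\mathcal{V}^\sharp$, size $m\,{-}\,p$). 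These three image sets are pairwise disjoint, so $\psi$ as defined so far is an injection from the complement of $\mathcal{E}_\mathcal{U}^\sharp$ onto the complement of $\mathcal{E}_\mathcal{V}^\sharp$; extending by any bijection $\mathcal{E}_\mathcal{U}^\sharp\,{\rightarrow}\,\mathcal{E}_\mathcal{V}^\sharp$ yields a bijection of the full augmented sets (both of size $n\,{+}\,m$). Then I would check the sLSAPE legality constraints: elements of $\mathcal{U}$ are sent either to $\mathcal{V}$ or to their own $\epsilon_i$ (never $\epsilon_k$, $k\,{\neq}\,i$); elements $\epsilon_j\,{\in}\,\mathcal{E}_\mathcal{U}$ with $j\,{\notin}\,\mathcal{V}^\sharp$ are sent to $j$ itself (never $l\,{\neq}\,j$); and elements $\epsilon_j\,{\in}\,\mathcal{E}_\mathcal{U}^\sharp$ are sent into $\mathcal{E}_\mathcal{V}$, which is allowed. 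Hence $\psi\,{\in}\,\mathcal{S}_\mathcal{E}(\mathcal{U},\mathcal{V})$ and has finite cost.

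Finally I would equate the costs. Comparing the last line of Eq.~\ref{eq-Aeps} with the decomposition of $A(\psi,c)$ in Eq.~\ref{eq-Aepssq}: the substitution block $\sum_{i\in\mathcal{U}^\sharp}c_{i,\widetilde{\varphi}(i)}$ matches $\sum_{i\in\mathcal{U}^\sharp}c(i,\psi(i))$ since $\psi(i)\,{=}\,\widetilde{\varphi}(i)$ there; the removal block $\sum_{i\in\mathcal{U}\setminus\mathcal{U}^\sharp}c_{i,m+1}$ matches $\sum_{i\in\mathcal{U}\setminus\mathcal{U}^\sharp}c(i,\epsilon_i)$ because in the sLSAPE cost matrix $c(i,\epsilon_i)$ is precisely the removal cost, identified with $c_{i,m+1}\,{=}\,c^\text{rem}_i$; the insertion block $\sum_{j\in\mathcal{V}\setminus\mathcal{V}^\sharp}c_{n+1,j}$ matches $\sum_{j\in\mathcal{V}\setminus\mathcal{V}^\sharp}c(\epsilon_j,j)$, identified with $c_{n+1,j}\,{=}\,c^\text{ins}_j$; and the remaining contributions of $\psi$, namely $\sum_{\epsilon_j\in\mathcal{E}_\mathcal{U}^\sharp}c(\epsilon_j,\psi(\epsilon_j))$ with $\psi(\epsilon_j)\,{\in}\,\mathcal{E}_\mathcal{V}^\sharp$, all vanish since $c(\epsilon_j,\epsilon_i)\,{=}\,0$. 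Summing gives $A(\psi,c)\,{=}\,A_\epsilon(\varphi,\mathbf{C})$. I do not anticipate a genuine obstacle here — the content is essentially the observation that the $\epsilon$-assignment already determines everything except the cost-free pairing among the surplus null elements; the only point demanding a little care is the cardinality argument $|\mathcal{U}^\sharp|\,{=}\,|\mathcal{V}^\sharp|$ guaranteeing the residual bijection exists, and the bookkeeping that the three explicitly-defined pieces of $\psi$ have disjoint images exactly covering the complement of $\mathcal{E}_\mathcal{V}^\sharp$, both of which follow from the bijectivity of $\widetilde{\varphi}$ established earlier.
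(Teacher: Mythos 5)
Your construction is exactly the paper's: the same three explicit clauses, the same observation that $|\mathcal{U}^\sharp|\,{=}\,|\mathcal{V}^\sharp|$ guarantees a (cost-free, arbitrary) completing bijection $\mathcal{E}_\mathcal{U}^\sharp\,{\rightarrow}\,\mathcal{E}_\mathcal{V}^\sharp$, the same disjoint-pieces argument for bijectivity of $\psi$, and the same term-by-term cost matching. The proposal is correct and follows the paper's proof essentially verbatim, adding only some optional extra bookkeeping (the explicit canonical choice $\epsilon_j\,{\mapsto}\,\epsilon_{\widetilde{\varphi}^{-1}(j)}$ and the check of the forbidden-coupling constraints).
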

\begin{proof}
  Let $\varphi\,{\in}\,\mathcal{A}_\epsilon(\mathcal{U},\mathcal{V})$ be an $\epsilon$-assignment, and let $\widetilde{\varphi}$ be its associated injection. Consider the mapping $\psi$ defined by Eq.~\ref{eq-varphipsi1} and Eq.~\ref{eq-varphipsi2}. We show that $\psi\,{\in}\,\mathcal{S}_{\mathcal{E}}(\mathcal{U},\mathcal{V})$.
  \begin{table}
      \[\begin{small}
      \begin{array}{|l|l|l|}
        \hline
        \text{Mapping}&\text{Initial set}&\text{Arrival set}\\
        \hline
        \psi^{\text{sub}}\,{=}\,\widetilde{\varphi}&\mathcal{U}^\sharp&\mathcal{V}^\sharp\\
        \psi^{\text{rem}}&      \mathcal{U}\setminus\mathcal{U}^\sharp&\mathcal{E}_\mathcal{V}\setminus\mathcal{E}_\mathcal{V}^\sharp\\
        \psi^{\text{ins}}&\mathcal{E}_\mathcal{U}\setminus\mathcal{E}_\mathcal{U}^\sharp   &  \mathcal{V}\setminus\mathcal{V}^\sharp\\
        \psi^{\text{comp}}&\mathcal{E}_\mathcal{U}^\sharp&\mathcal{E}_\mathcal{V}^\sharp\\
        \hline
      \end{array}
      \end{small}\]
      \caption{Arrival and terminal sets involved in the construction of a squared $\epsilon$-assignment from an $\epsilon$-assignment.}
      \label{tab:build_lsape_map}
    \end{table}
  The mapping $\psi\,{:}\,\mathcal{U}\,{\cup}\,\mathcal{E}_\mathcal{U}\,{\rightarrow}\,\mathcal{V}\,{\cup}\,\mathcal{E}_\mathcal{V}$ is built as described in Table~\ref{tab:build_lsape_map}:
  \begin{enumerate}
  \item[$\bullet$]\textit{Substitutions}. When $i\,{\in}\,\mathcal{U}^\sharp$, we have $\psi(i)\,{=}\,\widetilde{\varphi}(i)$. Since $\widetilde{\varphi}$ is a bijection from $\mathcal{U}^\sharp$ onto $\mathcal{V}^\sharp=\widetilde{\varphi}[\mathcal{U}^\sharp]$, the restriction $\psi^{\text{sub}}\,{:}\,\mathcal{U}^\sharp\,{\rightarrow}\,\psi[\mathcal{U}^\sharp]\,{=}\,\mathcal{V}^\sharp$ of $\psi$ is also a bijection.
  \item[$\bullet$]\textit{Removals/insertions}. When $i\,{\in}\,\mathcal{U}\,{\setminus}\,\mathcal{U}^\sharp$ (reps. $j\,{\in}\,\mathcal{V}\,{\setminus}\,\mathcal{V}^\sharp$), we have $\psi(i)\,{=}\,\epsilon_i$ (resp. $\psi(\epsilon_j)\,{=}\,j$). By definition, there is 
  a natural bijective mapping from any $i\,{\in}\,\mathcal{U}$ (resp. $\epsilon_j\,{\in}\,\mathcal{E}_\mathcal{U}$) to $\epsilon_i\,{\in}\,\mathcal{E}_\mathcal{V}$ (reps. $j\,{\in}\,\mathcal{V}$). So the restriction $\psi^{\text{rem}}\,{:}\,\mathcal{U}\,{\setminus}\,\mathcal{U}^\sharp\,{\rightarrow}\,\mathcal{E}_\mathcal{V}\,{\setminus}\,\mathcal{E}^\sharp_\mathcal{V}$ of $\psi$, and the restriction $\psi^{\text{ins}}\,{:}\,\mathcal{E}_\mathcal{U}\,{\setminus}\,\mathcal{E}^\sharp_\mathcal{U}\,{\rightarrow}\,\mathcal{V}\,{\setminus}\,\mathcal{V}^\sharp$, are bijective.
  \item[$\bullet$]\textit{Completion.} Meanwhile all edit operations have been defined through the two last items, the mapping $\psi$ is not yet complete. Indeed, it is not yet defined from $\mathcal{E}^\sharp_\mathcal{U}$ to $\mathcal{E}^\sharp_\mathcal{V}$. Since $\psi^{\text{sub}}$ is a bijection, we have $|\mathcal{U}^\sharp|\,{=}\,|\mathcal{V}^\sharp|$ and by consequence $|\mathcal{E}^\sharp_\mathcal{U}|\,{=}\,|\mathcal{E}^\sharp_\mathcal{U}|$. So the restriction $\psi^{\text{comp}}\,{:}\,\mathcal{E}^\sharp_\mathcal{U}\,{\rightarrow}\,\mathcal{E}^\sharp_\mathcal{V}$ of $\psi$ can be any bijection from $\mathcal{E}^\sharp_\mathcal{U}$ to $\mathcal{E}^\sharp_\mathcal{V}$.
\end{enumerate}
    All
    initial sets and arrival sets are disjoints and the mapping defined
    between each couple of sets is bijective. The mapping $\psi$ is
    thus bijective. Moreover, considering that any mapping of
    $\epsilon_j\in \mathcal{E}_\mathcal{U }^\sharp$ onto
    $\psi^{\text{comp}}(\epsilon_j)\in \mathcal{E}_\mathcal{V}^\sharp$ is associated to a $0$ cost, both mappings have the same cost.
\end{proof}
\noindent
Note that one $\epsilon$-assignment corresponds to several bijections of $\mathcal{S}_\mathcal{E}(\mathcal{U},\mathcal{V})$ due to the bijective mapping $\psi^{\text{comp}}$. The proof also show that all these bijections have the same cost.
\begin{lemma}\label{prop:edtosquared2}
  Any bijection $\psi\,{\in}\,\mathcal{S}_\mathcal{E}(\mathcal{U},\mathcal{V})$ is associated to an $\epsilon$-assignment $\varphi\,{\in}\,\mathcal{A}_\epsilon(\mathcal{U},\mathcal{V})$ 
such that:
	\begin{equation}\label{eq-psivarphi}
		\left\{\begin{array}{ll}
			\varphi(i)=\{\psi(i)\}&\text{if}~\,\psi(i)\,{\in}\,\mathcal{V}\\
			\varphi(i)=\{\epsilon\}&\text{if}~\,\psi(i)\,{\in}\,\mathcal{E}_\mathcal{V}\\
			\varphi(\epsilon)=\psi[\mathcal{E}_\mathcal{U}^\sharp]\cup\{\epsilon\}&\text{with}~\,\mathcal{E}_\mathcal{U}^\sharp\,{=}\,\{\epsilon_j\,{\in}\,\mathcal{E}_\mathcal{U}~|~\psi(\epsilon_j)\,{\in}\,\mathcal{V}\}
		\end{array}\right.
	\end{equation}	  
  Mappings $\psi$ and $\varphi$ have the same cost.
\end{lemma}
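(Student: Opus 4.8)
The plan is to run the construction of Lemma~\ref{prop:edtosquared} backwards: starting from $\psi\,{\in}\,\mathcal{S}_\mathcal{E}(\mathcal{U},\mathcal{V})$, take the map $\varphi$ defined by Eq.~\ref{eq-psivarphi}, verify that it meets the three constraints of Definition~\ref{def-assigneps} (Eq.~\ref{eq-cst-vphi}), and then compare the two objective functionals term by term. First I would record what finiteness of $A_\epsilon(\psi,\mathbf{C})$ buys us: $\psi$ uses none of the forbidden couplings of Problem~\ref{def-squaredlsap}, so for every $i\,{\in}\,\mathcal{U}$ we have $\psi(i)\,{\in}\,\mathcal{V}\cup\{\epsilon_i\}$, and for every $\epsilon_j\,{\in}\,\mathcal{E}_\mathcal{U}$ we have $\psi(\epsilon_j)\,{\in}\,\{v_j\}\cup\mathcal{E}_\mathcal{V}$. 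In particular the two alternatives in the definition of $\varphi(i)$ are exhaustive and mutually exclusive, so $\varphi$ is a well-defined map $\mathcal{U}_\epsilon\,{\rightarrow}\,\mathcal{P}(\mathcal{V}_\epsilon)$; moreover $|\varphi(u)|\,{=}\,1$ for every $u\,{\in}\,\mathcal{U}$ and $\epsilon\,{\in}\,\varphi(\epsilon)$ hold directly by construction, which are the first and third constraints of Eq.~\ref{eq-cst-vphi}.

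The delicate point is the middle constraint $|\varphi^{-1}[v_j]|\,{=}\,1$ for each $v_j\,{\in}\,\mathcal{V}$, and I expect this to be the only place where care is really needed, namely to prevent the ``$v_j$ substituted from $\mathcal{U}$'' and ``$v_j$ reached through $\epsilon_j$'' cases from overlapping. Let $w$ be the unique $\psi$-preimage of $v_j$ (unique since $\psi$ is a bijection). If $w\,{=}\,u_i\,{\in}\,\mathcal{U}$, then $\varphi(u_i)\,{=}\,\{v_j\}$, so $u_i\,{\in}\,\varphi^{-1}[v_j]$; no other element of $\mathcal{U}$ is sent to $v_j$ by $\varphi$, since that would contradict injectivity of $\psi$, and $v_j\,{\notin}\,\varphi(\epsilon)$ because $v_j\,{\in}\,\psi[\mathcal{E}_\mathcal{U}^\sharp]$ would force $\psi(\epsilon_j)\,{=}\,v_j$, again contradicting $\psi(u_i)\,{=}\,v_j$; hence $\varphi^{-1}[v_j]\,{=}\,\{u_i\}$. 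If instead $w\,{=}\,\epsilon_j$, then $\epsilon_j\,{\in}\,\mathcal{E}_\mathcal{U}^\sharp$, so $v_j\,{\in}\,\varphi(\epsilon)$ and thus $\epsilon\,{\in}\,\varphi^{-1}[v_j]$, while no $u_i\,{\in}\,\mathcal{U}$ satisfies $\varphi(u_i)\,{=}\,\{v_j\}$ (this would give $\psi(u_i)\,{=}\,v_j$, impossible by injectivity); hence $\varphi^{-1}[v_j]\,{=}\,\{\epsilon\}$. In both cases the preimage is a singleton, so $\varphi\,{\in}\,\mathcal{A}_\epsilon(\mathcal{U},\mathcal{V})$.

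Finally I would match the costs. With $\varphi$ as above one checks directly that $\mathcal{U}^\sharp\,{=}\,\{u_i\,{\in}\,\mathcal{U}:\psi(u_i)\,{\in}\,\mathcal{V}\}$, that $\widetilde{\varphi}(u_i)\,{=}\,\psi(u_i)$ on $\mathcal{U}^\sharp$, that $\mathcal{U}\setminus\mathcal{U}^\sharp\,{=}\,\{u_i:\psi(u_i)\,{=}\,\epsilon_i\}$, and that $\mathcal{V}\setminus\mathcal{V}^\sharp\,{=}\,\psi[\mathcal{E}_\mathcal{U}^\sharp]\,{=}\,\{v_j:\psi(\epsilon_j)\,{=}\,v_j\}$ (the last identity again uses that, with finite cost, the only $\epsilon_{j'}$ that can reach $v_j$ is $\epsilon_j$). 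Splitting $A(\psi,c)$ as in Eq.~\ref{eq-Aepssq} into its substitution, removal and insertion parts, and using $c(u_i,v_j)\,{=}\,c_{i,j}$, $c(u_i,\epsilon_i)\,{=}\,c_{i,m+1}$ and $c(\epsilon_j,v_j)\,{=}\,c_{n+1,j}$ together with the fact that the remaining couplings $\epsilon_j\,{\mapsto}\,\epsilon_i\,{\in}\,\mathcal{E}_\mathcal{V}$ cost $0$, each of the three sums coincides with the corresponding sum in the last line of Eq.~\ref{eq-Aeps} written for $\varphi$; since $c_{n+1,m+1}\,{=}\,0$ as well, we conclude $A_\epsilon(\varphi,\mathbf{C})\,{=}\,A(\psi,c)$, which finishes the proof.
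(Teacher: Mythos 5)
Your proof is correct, but it is organized differently from the paper's. The paper's proof is essentially a two-line reduction: it restricts $\psi$ to $\mathcal{U}^\sharp=\{u\in\mathcal{U}\mid\psi(u)\in\mathcal{V}\}$, observes that this restriction is an injection of $\mathcal{I}(\mathcal{U},\mathcal{V})$, and then invokes Proposition~\ref{prop:bijAESA} (via Eq.~\ref{eq-phisharp}) to obtain the unique associated $\epsilon$-assignment, asserting without detail that the result coincides with Eq.~\ref{eq-psivarphi} and that the costs agree. You instead take Eq.~\ref{eq-psivarphi} as the definition of $\varphi$ and verify the three constraints of Eq.~\ref{eq-cst-vphi} from scratch, which makes your argument more self-contained: in particular you make explicit the one point the paper leaves implicit, namely that the finiteness assumption on $A_\epsilon(\psi,\cdot)$ rules out the forbidden couplings and therefore prevents the two ways of reaching a given $v_j$ (through some $u_i$ or through $\epsilon_j$) from overlapping, which is exactly what makes $|\varphi^{-1}[v_j]|=1$ hold. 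Your identification $\mathcal{V}\setminus\mathcal{V}^\sharp=\psi[\mathcal{E}_\mathcal{U}^\sharp]$ and the term-by-term matching of the three sums in Eq.~\ref{eq-Aepssq} and Eq.~\ref{eq-Aeps} (with the $\epsilon_j\rightarrow\epsilon_i$ couplings contributing $0$) is also a more explicit version of the paper's one-sentence cost claim. What the paper's route buys is brevity by reusing Proposition~\ref{prop:bijAESA}; what yours buys is a direct check that the specific formula announced in the lemma statement really does define an element of $\mathcal{A}_\epsilon(\mathcal{U},\mathcal{V})$, rather than leaving the reader to confirm that Eq.~\ref{eq-phisharp} applied to the restriction of $\psi$ reproduces Eq.~\ref{eq-psivarphi}. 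Both arguments are sound.
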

\begin{proof}
  Let $\psi\,{\in}\,\mathcal{S}_\mathcal{E}(\mathcal{U},\mathcal{V})$ be a squared $\epsilon$-assignment. Since it defines a bijection from
  $\mathcal{U}\cup\mathcal{E}_{\mathcal{U}}$ to
  $\mathcal{V}\cup\mathcal{E}_{\mathcal{V}}$, its restriction
  $\widetilde{\varphi}$ to $\mathcal{U}^\sharp =\{ u\in\mathcal{U}\,|\, \psi(u)\in \mathcal{V}\}$ is injective from
  $\mathcal{U}^\sharp$ to $\mathcal{V}$. Such an injective mapping
  corresponds to a unique $\epsilon$-assignment from
  $\mathcal{U}$ to $\mathcal{V}$ (Proposition~\ref{prop:bijAESA})
  whose cost is provided by Eq.~\ref{eq-Aeps} (with
  $\mathcal{V}^\sharp=\widetilde{\varphi}[\mathcal{U}^\sharp]$). By construction (Eq.~\ref{eq-psivarphi}), $\varphi$ has the same cost as $\psi$ (Eq.~\ref{eq-Aepssq}).
\end{proof}
\noindent
A simple consequence of Lemma~\ref{prop:edtosquared} and Lemma~\ref{prop:edtosquared2} is given by the following property.
\begin{proposition}
   The LSAPE and the sLSAPE are equivalent, their respective solutions provide the same minimal cost $A_\epsilon$.
\end{proposition}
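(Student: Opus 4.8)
The plan is to deduce the Proposition directly from Lemma~\ref{prop:edtosquared} and Lemma~\ref{prop:edtosquared2}, which together establish a cost-preserving correspondence between the feasible sets of the two problems. First I would observe that Lemma~\ref{prop:edtosquared} gives a map that sends each $\epsilon$-assignment $\varphi\,{\in}\,\mathcal{A}_\epsilon(\mathcal{U},\mathcal{V})$ to a squared $\epsilon$-assignment $\psi\,{\in}\,\mathcal{S}_{\mathcal{E}}(\mathcal{U},\mathcal{V})$ with $A_\epsilon(\varphi,\mathbf{C})\,{=}\,A(\psi,c)$; conversely, Lemma~\ref{prop:edtosquared2} sends each $\psi\,{\in}\,\mathcal{S}_{\mathcal{E}}(\mathcal{U},\mathcal{V})$ to some $\varphi\,{\in}\,\mathcal{A}_\epsilon(\mathcal{U},\mathcal{V})$ with the same cost. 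Hence the image of the objective $A_\epsilon(\cdot,\mathbf{C})$ over $\mathcal{A}_\epsilon(\mathcal{U},\mathcal{V})$ and the image of the objective $A(\cdot,c)$ over $\mathcal{S}_{\mathcal{E}}(\mathcal{U},\mathcal{V})$ are the same subset of $[0,+\infty)$, so their minima coincide.

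Concretely I would argue as follows. Let $\hat\varphi$ be an optimal $\epsilon$-assignment for the LSAPE. By Lemma~\ref{prop:edtosquared} there is $\psi\,{\in}\,\mathcal{S}_{\mathcal{E}}(\mathcal{U},\mathcal{V})$ with $A(\psi,c)\,{=}\,A_\epsilon(\hat\varphi,\mathbf{C})$, so the minimal sLSAPE cost is at most the minimal LSAPE cost. For the reverse inequality, let $\hat\psi$ be an optimal squared $\epsilon$-assignment for the sLSAPE; by Lemma~\ref{prop:edtosquared2} there is $\varphi\,{\in}\,\mathcal{A}_\epsilon(\mathcal{U},\mathcal{V})$ with $A_\epsilon(\varphi,\mathbf{C})\,{=}\,A(\hat\psi,c)$, so the minimal LSAPE cost is at most the minimal sLSAPE cost. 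Combining the two inequalities yields equality of the optimal values. In particular, an optimal solution of one problem is mapped, via the corresponding lemma, to an optimal solution of the other, so solving either problem solves the other up to the explicit transformations of Eq.~\ref{eq-varphipsi1}--\ref{eq-varphipsi2} and Eq.~\ref{eq-psivarphi}.

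I do not expect any serious obstacle: the content has already been done inside the two lemmas, and what remains is the standard "equality of optima from a cost-preserving correspondence in both directions" argument. The only point deserving a word of care is the standing assumption (made just before Lemma~\ref{prop:edtosquared}) that every squared $\epsilon$-assignment has finite cost $A_\epsilon$; this is what guarantees that the optimization over $\mathcal{S}_{\mathcal{E}}(\mathcal{U},\mathcal{V})$ is not vacuous and that the cost identities in the lemmas are comparisons of finite quantities, so that the two inequalities above can legitimately be combined. One may also note, as already remarked after Lemma~\ref{prop:edtosquared}, that the correspondence is many-to-one on the sLSAPE side because of the free completion bijection $\psi^{\text{comp}}$, but since all those preimages share the same cost this does not affect the equality of minima. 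A full write-up is therefore a two-line application of the two lemmas.
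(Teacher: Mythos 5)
Your proposal is correct and follows exactly the route the paper intends: the proposition is stated as a direct consequence of Lemma~\ref{prop:edtosquared} and Lemma~\ref{prop:edtosquared2}, and your two-inequality argument from the cost-preserving maps in both directions is precisely that consequence spelled out. The remarks about finiteness of costs and the many-to-one completion $\psi^{\text{comp}}$ match the paper's own surrounding discussion.
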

\noindent
Note that even if the minimal cost is the same for the LSAPE and the sLSAPE,
the proof of this proposition shows that solving a sLSAPE
requires to define additionally a
bijective mapping between two sets of $\epsilon$-values. This mapping
being useless in terms of edit costs, computing an optimal $\epsilon$-assignment should be more efficient than its squared version.
\subsection{Matrix form and linear programming}\label{sec-linprog}
Let $\mathbf{C}\,{\in}\,[0,{+\infty})^{(n+1)\times(m+1)}$ be an edit cost matrix. From Eq.~\ref{eq-Aeps}, the cost associated to an $\epsilon$-assignment given by a matrix $\mathbf{X}\,{\in}\,\mathcal{S}_{n,m,\epsilon}$ (Eq.~\ref{eq-assignedX} and Eq.~\ref{eq-mtx-assigneps}), is measured by
\begin{equation}
	A_\epsilon(\mathbf{X},\mathbf{C})=\sum_{i=1}^{n+1}\sum_{j=1}^{m+1}c_{i,j}x_{i,j}.
\end{equation}
The LSAPE (Problem~\ref{pb-assigneps}) can then be rewritten as finding a matrix of $\mathcal{S}_{n,m,\epsilon}$ having a minimal cost, \textit{i.e.} satisfying
\begin{equation}\label{eq-lsapemat}
	\underset{\mathbf{X}\in \mathcal{S}_{n,m,\epsilon}}{\argmin}~A_\epsilon(\mathbf{X},\mathbf{C})
\end{equation}
The LSAPE can be rewritten as a linear programming problem, as described in the following.
\begin{figure*}[!t]
\begin{equation*}
\begin{small}
	\begin{blockarray}{@{\extracolsep{-0.5cm}}rcccccccccccccccc}
		~ &_{(1,1)}&\ldots &_{(1,m)}&_{(1,\epsilon)}&_{(2,1)}&\ldots &_{(2,m)}&_{(2,\epsilon)}&   \cdots    &_{(n,1)}&\ldots &_{(n,m)}&_{(n,\epsilon)}&_{(\epsilon,1)}&\ldots &_{(\epsilon,m)}\vspace{0.1cm}\\
		\begin{block}{@{\extracolsep{-0.5cm}}r(cccc|cccc|c|cccc|ccc)}
		1 &~\,1& \cdots & 1 & 1 & ~ &   ~   & ~ & ~ & ~      & ~ &   ~   & ~ & ~ & ~ & ~      & ~\\
		2 & ~ & ~      & ~ & ~ &~\,1&\cdots & 1 & 1 &        & ~ &   ~   & ~ & ~ &   &        & ~\\
		~  & ~ & ~      & ~ & ~ &   &       &   &   & \cdots & ~ &   ~   & ~ & ~ &   &        & ~\\
		n  & ~ & ~      & ~ & ~ &   &       &   &   &        &~\,1&\cdots & 1 & 1 &   &        &  \\\cline{2-17}
1   & ~\,1 & ~      & ~ & ~ & ~\,1 &       &   &   &        & ~\,1 &   ~   & ~ & ~ & ~\,1 &        & \\
		~  &   & \ddots &   & ~ &   &\ddots &   &   & \cdots & ~ &\ddots & ~ & ~ &   & \ddots & ~\\
 m  &   &        & 1 & 0 &   &       & 1 & 0 &        & ~ &   ~   & 1 &0  &   &        & 1\\
 	\end{block}
	\end{blockarray}
\end{small}
\end{equation*}
\caption{Constraint matrix $\mathbf{L}$. Its $n$ first rows represent $\mathcal{U}$ and its $m$ last rows represent $\mathcal{V}$. Missing values are equal to $0$.\label{fig-mat}}
\end{figure*}

Let $\mathbf{x}\,{=}\,\text{vec}(\mathbf{X})\,{\in}\,\{0,1\}^{(n+1)(m+1)-1}$ be the
vectorization of $\mathbf{X}$ obtained by concatenating its
rows and by removing its last element (since $x_{n+1,m+1}c_{n+1,m+1}\,{=}\,0$). Then the
constraints can be rewritten as the linear system of equations
$\mathbf{L}\mathbf{x}\,{=}\,\mathbf{1}$, where the matrix
$\mathbf{L}\,{\in}\,\{0,1\}^{(n+m)\times[(n+1)(m+1)-1]}$ is given by Fig.~\ref{fig-mat}, \textit{i.e.} 
\begin{equation}
\forall (i,j),~~\left\lbrace
\begin{aligned}
	&l_{k,(i,j)}=\delta_{k=i},&&\forall k\,{=}\,1,\ldots,n\\
	&l_{n+k,(i,j)}=\delta_{k=j},&&\forall k\,{=}\,1,\ldots,m
\end{aligned}\right.
\end{equation}
where $\delta_{a=b}\,{=}\,1$ if $a\,{=}\,b$ and $0$ else. The $n$ first rows of $\mathbf{L}$ represent the constraints on the
rows of $\mathbf{X}$ ($|\varphi(i)|\,{=}\,1, \forall i\,{\in}\,\mathcal{U}$)
and its $m$ last rows represent the constraints on the columns of
$\mathbf{X}$ ($|\varphi^{-1}[j]|\,{=}\,1, \forall j\,{\in}\,\mathcal{V}$). Each element $l_{k,(i,j)}\,{=}\,1$ corresponds to a possible edit operation:
\begin{enumerate}
\item[$\bullet$] a substitution $i\rightarrow j$, when $k\,{=}\,i\,{\in}\,\mathcal{U}$ or $n\,{+}\,j$ with $j\,{\in}\,\mathcal{V}$,
\item[$\bullet$] a removal $i\rightarrow\epsilon$ when $k\,{=}\,i\,{\in}\,\mathcal{U}$ and $j\,{=}\,m\,{+}\,1$, or
\item[$\bullet$] an insertion $\epsilon\rightarrow j$ when $k\,{=}\,n\,{+}\,j$ with $j\,{\in}\,\mathcal{V}$, and $i\,{=}\,n\,{+}\,1$.
\end{enumerate}
Note that $\epsilon$ is not represented as a row of $\mathbf{L}$ since it is unconstrained. The matrix $\mathbf{L}$ represents the discrete domain whereon solutions to the LSAPE are defined.

Contrary to the constraint matrix involved in the LASP, which
corresponds to the node-edge incidence matrix of the complete
bipartite graph $K_{n,m}$ with node sets $\mathcal{U}$ and
$\mathcal{V}$, the constraint matrix $\mathbf{L}$ can be viewed as a
node-edge incidence matrix of the mixed bipartite graph
$K_{n,m,\epsilon}$ composed of $K_{n,m}$ and the two bipartite
digraphs
$((\mathcal{V},\{\epsilon\}),\mathcal{V}\,{\times}\,\{\epsilon\})$ and
$((\mathcal{U},\{\epsilon\}),\mathcal{U}\,{\times}\,\{\epsilon\})$
representing insertion and removal operations, respectively. It is
important to remark that $K_{n,m,\epsilon}$ does not have any arc from
$\epsilon$ to $\mathcal{U}\,{\cup}\,\mathcal{V}$. Due to the
definition of $\mathbf{x}$, $\mathbf{L}\mathbf{x}$ selects exactly one
edge or arc in the neighborhood of each node in
$\mathcal{U}\,{\cup}\,\mathcal{V}$. An $\epsilon$-assignment can thus
be assimilated to a subgraph of $K_{n,m,\epsilon}$ having each node in
$\mathcal{U}\,{\cup}\,\mathcal{V}$ connected to only one other
node. Solving the LSAPE consists in computing such a subgraph that
minimizes the objective functional $A_\epsilon$.

Similarly, let $\mathbf{c}\,{=}\,\text{vec}(\mathbf{C})\,{\in}\,[0,+\infty)^{(n+1)(m+1)-1}$ be the vectorization of the edit cost matrix $\mathbf{C}$. With these notations, the LSAPE given by Eq.~\ref{eq-lsapemat} consists in selecting a vector satisfying
\begin{equation}\label{eq-lsapecx}
		\argmin\left\{\mathbf{c}^T\mathbf{x}~|~\mathbf{L}\mathbf{x}\,{=}\,\mathbf{1},~\mathbf{x}\in\{0,1\}^{(n+1)(m+1)-1}\right\}
\end{equation}
which is a binary linear program.

By definition, each column of $\mathbf{L}$ sums to no more than $2$,
and its rows can be partitioned into two sets, $\mathcal{U}$ and
$\mathcal{V}$, such that a $1$ appears in each column at most once in
each set (Fig.~\ref{fig-mat}). Then the matrix $\mathbf{L}$ is totally unimodular. By standard tools in linear programming, a binary linear programming
problem with totally unimodular constraint matrix has always a binary
optimal solution, see \cite{sier15} for more details on linear programming problems. So the LSAPE has a binary optimal solution. As for the LSAP (and the sLSAPE), this shows that the LSAPE can be solved with linear programming tools.
%
\subsection{Primal-dual problem and admissible transformations}
Consider the LSAPE expressed as a binary linear program (Eq.~\ref{eq-lsapecx}). By standard tools in duality theory and linear programming \cite{sier15}, the problem dual to the LSAPE consists in finding two vectors $\mathbf{u}\,{\in}\,[0,+\infty)^{n}$ and $\mathbf{v}\,{\in}\,[0,+\infty)^{m}$, which associate a real value to each element of the sets $\mathcal{U}$ and $\mathcal{V}$ respectively, and satisfying
\begin{equation*}
	\argmax\left\{\mathbf{1}^T_n\mathbf{u}+\mathbf{1}^T_m\mathbf{v}~|~\mathbf{L}^T\begin{pmatrix}
	\mathbf{u}\\
	\mathbf{v}
	\end{pmatrix}\leq\mathbf{c},\,\begin{pmatrix}
	\mathbf{u}\\
	\mathbf{v}
	\end{pmatrix}\in[0,+\infty)^{n+m}\right\}
\end{equation*}
or equivalently
\begin{equation}\label{eq-duallsape}
	\begin{aligned}
		&\underset{(\mathbf{u},\mathbf{v})}{\argmax}\left\{E(\mathbf{u},\mathbf{v})\overset{\text{def.}}{=}\mathbf{1}^T_n\mathbf{u}+\mathbf{1}^T_m\mathbf{v}\right\}\\
		&~~~~~~~\text{s.t.}~\mathbf{u}\,{\in}\,[0,+\infty)^n,\,\mathbf{v}\,{\in}\,[0,+\infty)^{m}\\
		&~~~~~~~~~~~~~u_i+v_j\leq c_{i,j},~\forall i=1,\ldots,n,~\forall j=1,\ldots,m\\
		&~~~~~~~~~~~~~u_i\leq c_{i,m+1},~\forall i=1,\ldots,n\\
		&~~~~~~~~~~~~~v_j\leq c_{n+1,j},~\forall j=1,\ldots,m
	\end{aligned}
\end{equation}
One can remark that the objective functional $E(\mathbf{u},\mathbf{v})$ is the same as the one involved in the problem dual to the LSAP, see \cite{bur09}, but the two last constraints are different.

By the strong duality theorem, if $\mathbf{X}$ is a solution to the LSAPE (primal problem), then its dual problem has an optimal solution $(\mathbf{u},\mathbf{v})$ and the solutions $(\mathbf{X},(\mathbf{u},\mathbf{v}))$ satisfy
\begin{equation}\label{eq-primdual}
	A_\epsilon(\mathbf{X},\mathbf{C})\,{=}\,E(\mathbf{u},\mathbf{v}).
\end{equation}
Such a primal-dual problem can be solved by finding transformations of the edit cost matrix, called admissible. We adapt the notion of admissible transformations of cost matrices (or equivalent cost matrices), developed in the context of the classical assignment problem to edit cost matrices, see for instance \cite{bur09}.

Note that the dual variables can be expressed as a pair $(\mathbf{u},\mathbf{v})$ such that $\mathbf{u}\,{\in}\,[0,+\infty)^{n+1}$ with $u_{n+1}\,{=}\,0$, and $\mathbf{v}\,{\in}\,[0,+\infty)^{m+1}$ with $v_{m+1}\,{=}\,0$, without altering objective functional $E(\mathbf{u},\mathbf{v})$ and leading to a rewriting of the three last constraints in Eq.~\ref{eq-duallsape} as
\begin{equation}\label{eq-cstdual}
	u_i+v_j\leq c_{i,j},~~\forall i=1,\ldots,n\,{+}\,1,~j=1,\ldots,m\,{+}\,1.
\end{equation}
We use this trick to simplify the forthcoming expressions.
\begin{definition}[$\epsilon$-admissible transformation]
  Let $T(\mathbf{C},\overline{\mathbf{C}})$ be the transformation of an edit cost matrix
  $\mathbf{C}$ into a matrix $\overline{\mathbf{C}}$, with $\mathbf{C},\overline{\mathbf{C}}\,{\in}\,[0,{+\infty})^{(n+1)\times(m+1)}$. $T$ is called $\epsilon$-admissible with index
  $A(T)\,{\in}\,\mathbb{R}$ if $\overline{\mathbf{C}}$ is an edit cost matrix and
   \[
   A_\epsilon(\mathbf{X},\mathbf{C})\,{=}\,A_\epsilon(\mathbf{X},\overline{\mathbf{C}})+A(T)
   \]
  for any $\epsilon$-assignment $\mathbf{X}$.
\end{definition}
\noindent
By consequence, an $\epsilon$-assignment $\mathbf{X}$ which minimizes
$A_\epsilon(\mathbf{X},\mathbf{C})$ also minimizes
$A_\epsilon(\mathbf{X},\overline{\mathbf{C}})+A(T)$, and so the edit
cost matrices $\mathbf{C}$ and $\overline{\mathbf{C}}$ are said to be
equivalent (w.r.t. the LSAPE). Remark that transformations are assumed to produce edit cost matrices. In particular, transformed matrices must satisfy $\overline{\mathbf{C}}\,{\geq}\,\mathbf{0}$.
\begin{lemma}\label{lem-opt}
  Let $T(\mathbf{C},\overline{\mathbf{C}})$ be an $\epsilon$-admissible transformation. Let
  $\hat{\mathbf{X}}$ be an $\epsilon$-assignment satisfying
  $A_\epsilon(\hat{\mathbf{X}},\overline{\mathbf{C}})\,{=}\,0$. Then
  $\hat{\mathbf{X}}$ represents an optimal $\epsilon$-assignment, with total cost
  $A_\epsilon(\hat{\mathbf{X}},\mathbf{C})\,{=}\,A(T)$.
\end{lemma}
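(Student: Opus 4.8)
The plan is to argue directly from the definition of $\epsilon$-admissibility together with the non-negativity built into the notion of an edit cost matrix. First I would recall that, since $T(\mathbf{C},\overline{\mathbf{C}})$ is $\epsilon$-admissible, the identity $A_\epsilon(\mathbf{X},\mathbf{C})=A_\epsilon(\mathbf{X},\overline{\mathbf{C}})+A(T)$ holds for \emph{every} $\epsilon$-assignment $\mathbf{X}$, with the same constant $A(T)$. In particular this applies both to the given $\hat{\mathbf{X}}$ and to an arbitrary competitor $\mathbf{X}$.

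Next I would use the hypothesis $A_\epsilon(\hat{\mathbf{X}},\overline{\mathbf{C}})=0$ to evaluate the constant: plugging $\hat{\mathbf{X}}$ into the admissibility identity gives $A_\epsilon(\hat{\mathbf{X}},\mathbf{C})=0+A(T)=A(T)$, which already establishes the claimed value of the total cost. Then, for any other $\epsilon$-assignment $\mathbf{X}\in\mathcal{S}_{n,m,\epsilon}$, the admissibility identity yields $A_\epsilon(\mathbf{X},\mathbf{C})=A_\epsilon(\mathbf{X},\overline{\mathbf{C}})+A(T)$. The key observation is that $\overline{\mathbf{C}}$ is required to be an edit cost matrix, hence $\overline{\mathbf{C}}\geq\mathbf{0}$ entrywise, and $\mathbf{X}\geq\mathbf{0}$ as well, so $A_\epsilon(\mathbf{X},\overline{\mathbf{C}})=\sum_{i,j}\overline{c}_{i,j}x_{i,j}\geq 0$. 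Therefore $A_\epsilon(\mathbf{X},\mathbf{C})\geq A(T)=A_\epsilon(\hat{\mathbf{X}},\mathbf{C})$, which shows that no $\epsilon$-assignment has strictly smaller cost than $\hat{\mathbf{X}}$ with respect to $\mathbf{C}$; that is, $\hat{\mathbf{X}}$ is optimal for the LSAPE.

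There is essentially no obstacle here: the argument is a two-line consequence of the definition, exactly paralleling the classical fact that a zero-cost assignment in an equivalent (reduced) cost matrix is optimal. The only point that deserves a word of care is the explicit invocation of $\overline{\mathbf{C}}\geq\mathbf{0}$, which is why the definition of $\epsilon$-admissible transformation insisted that $\overline{\mathbf{C}}$ again be an edit cost matrix; without that, $A_\epsilon(\mathbf{X},\overline{\mathbf{C}})$ could be negative and the optimality conclusion would fail. I would state this reliance explicitly in the proof. One could alternatively phrase the same argument through the primal–dual relation of Eq.~\ref{eq-primdual}, but the direct entrywise-nonnegativity argument is shorter and self-contained, so that is the route I would take.
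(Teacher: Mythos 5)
Your proof is correct and follows essentially the same route as the paper's: evaluate the admissibility identity at $\hat{\mathbf{X}}$ to get $A_\epsilon(\hat{\mathbf{X}},\mathbf{C})=A(T)$, then use $\overline{\mathbf{C}}\geq\mathbf{0}$ to bound $A_\epsilon(\mathbf{X},\overline{\mathbf{C}})\geq 0$ and hence $A_\epsilon(\mathbf{X},\mathbf{C})\geq A(T)$ for every competitor $\mathbf{X}$. Your explicit remark that the non-negativity of $\overline{\mathbf{C}}$ is the load-bearing hypothesis is a welcome clarification, but the argument itself is identical.
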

\begin{proof}
Let $\mathbf{X}$ be an arbitrary $\epsilon$-assignment. Under the conditions of the lemma, we have $\overline{\mathbf{C}}\geq\mathbf{0}$ and then $A_\epsilon(\mathbf{X},\overline{\mathbf{C}})\geq 0$. So $A_\epsilon(\mathbf{X},\mathbf{C})=A_\epsilon(\mathbf{X},\overline{\mathbf{C}})+A(T)\geq A(T)=A_\epsilon(\hat{\mathbf{X}},\overline{\mathbf{C}})+A(T)=\,A_\epsilon(\hat{\mathbf{X}},\mathbf{C})$, and $A_\epsilon(\mathbf{X},\mathbf{C})$ reaches its minimum value for $\mathbf{X}\,{=}\,\hat{\mathbf{X}}$.
\end{proof}
\noindent
Admissible transformations of edit cost matrices are linked to the primal-dual equation \ref{eq-primdual} as follows.
\begin{proposition}\label{theo-admuv}
  Let $\mathbf{C}$ be an edit
  cost matrix. Let $\mathbf{a}\in\mathbb{R}^{n+1}$ and
  $\mathbf{b}\in\mathbb{R}^{m+1}$ be two vectors such that
  $a_{n+1}=b_{m+1}=0$ and~\,$\mathbf{a}\mathbf{1}^T_{m+1}+\mathbf{1}_{n+1}\mathbf{b}^T\leq\mathbf{C}$. The
  transformation $T(\mathbf{C},\overline{\mathbf{C}})$ such that
  $\overline{\mathbf{C}}=\mathbf{C}\,{-}\,\mathbf{a}\mathbf{1}^T_{m+1}\,{-}\,\mathbf{1}_{n+1}\mathbf{b}^T$ is $\epsilon$-admissible with index $A(T)\,{=}\,E(\mathbf{a},\mathbf{b})$.
\end{proposition}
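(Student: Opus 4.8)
The plan is to verify the two defining requirements of an $\epsilon$-admissible transformation directly: first that $\overline{\mathbf{C}}$ is a valid edit cost matrix, and second that the cost of any $\epsilon$-assignment is shifted by the constant $A(T) = E(\mathbf{a},\mathbf{b})$. First I would check that $\overline{\mathbf{C}}$ has the right structure: its entries are $\overline{c}_{i,j} = c_{i,j} - a_i - b_j$, so by the hypothesis $\mathbf{a}\mathbf{1}^T_{m+1} + \mathbf{1}_{n+1}\mathbf{b}^T \leq \mathbf{C}$ we get $\overline{\mathbf{C}} \geq \mathbf{0}$, and since $a_{n+1} = b_{m+1} = 0$ the corner entry is preserved, $\overline{c}_{n+1,m+1} = c_{n+1,m+1} = 0$. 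Hence $\overline{\mathbf{C}}$ is an edit cost matrix of the form prescribed by Eq.~\ref{eq:matrixC}.

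The main computation is then to expand $A_\epsilon(\mathbf{X},\overline{\mathbf{C}})$ for an arbitrary $\epsilon$-assignment $\mathbf{X} \in \mathcal{S}_{n,m,\epsilon}$. Writing $A_\epsilon(\mathbf{X},\overline{\mathbf{C}}) = \sum_{i=1}^{n+1}\sum_{j=1}^{m+1} \overline{c}_{i,j} x_{i,j} = \sum_{i,j} c_{i,j} x_{i,j} - \sum_{i,j} a_i x_{i,j} - \sum_{i,j} b_j x_{i,j}$, the first term is $A_\epsilon(\mathbf{X},\mathbf{C})$. For the second term I would use the row constraints of Eq.~\ref{eq-assignedX}: for each $i = 1,\ldots,n$ we have $\sum_{j=1}^{m+1} x_{i,j} = 1$, so $\sum_{i,j} a_i x_{i,j} = \sum_{i=1}^{n} a_i + a_{n+1}\sum_{j} x_{n+1,j} = \sum_{i=1}^{n} a_i = \mathbf{1}^T_n \mathbf{a}$, the last equality because $a_{n+1} = 0$ kills the unconstrained row sum. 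Symmetrically, using the column constraints $\sum_{i=1}^{n+1} x_{i,j} = 1$ for $j = 1,\ldots,m$ and $b_{m+1} = 0$, the third term equals $\mathbf{1}^T_m \mathbf{b}$. Therefore $A_\epsilon(\mathbf{X},\overline{\mathbf{C}}) = A_\epsilon(\mathbf{X},\mathbf{C}) - (\mathbf{1}^T_n\mathbf{a} + \mathbf{1}^T_m\mathbf{b}) = A_\epsilon(\mathbf{X},\mathbf{C}) - E(\mathbf{a},\mathbf{b})$, which rearranges to the required identity $A_\epsilon(\mathbf{X},\mathbf{C}) = A_\epsilon(\mathbf{X},\overline{\mathbf{C}}) + A(T)$ with $A(T) = E(\mathbf{a},\mathbf{b})$.

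There is no serious obstacle here; the only subtlety — and the one point worth stating carefully — is the role of the padding convention $a_{n+1} = b_{m+1} = 0$ together with the relaxed constraints Eq.~\ref{eq-unconst} on the $\epsilon$-row and $\epsilon$-column. Because $|\varphi(\epsilon)|$ and $|\varphi^{-1}[\epsilon]|$ are \emph{not} fixed, the sums $\sum_j x_{n+1,j}$ and $\sum_i x_{i,m+1}$ are not constant across $\epsilon$-assignments; it is precisely the vanishing of $a_{n+1}$ and $b_{m+1}$ that makes those variable contributions disappear, so that the shift $E(\mathbf{a},\mathbf{b})$ is genuinely independent of $\mathbf{X}$. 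I would make this explicit in the write-up. The conclusion about equivalence of $\mathbf{C}$ and $\overline{\mathbf{C}}$ then follows immediately from the definition of $\epsilon$-admissibility, and noting $E(\mathbf{a},\mathbf{b}) = \mathbf{1}^T_{n+1}\mathbf{a} + \mathbf{1}^T_{m+1}\mathbf{b} = \mathbf{1}^T_n\mathbf{a} + \mathbf{1}^T_m\mathbf{b}$ reconciles the two index formulas.
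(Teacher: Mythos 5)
Your proposal is correct and follows essentially the same route as the paper's proof: expand $A_\epsilon(\mathbf{X},\overline{\mathbf{C}})$ entrywise, use the row and column constraints of Eq.~\ref{eq-assignedX} together with $a_{n+1}=b_{m+1}=0$ to collapse the correction terms to $E(\mathbf{a},\mathbf{b})$, and check nonnegativity and the zero corner entry of $\overline{\mathbf{C}}$. Your explicit remark on why the vanishing of $a_{n+1}$ and $b_{m+1}$ neutralizes the unconstrained $\epsilon$-row and $\epsilon$-column sums is a welcome clarification that the paper leaves implicit, but it is the same argument.
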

\begin{proof}Let $\mathbf{X}$ be any $\epsilon$-assignment. We have:
	\begin{equation*}
	\begin{aligned}
	A_\epsilon(\mathbf{X},\overline{\mathbf{C}})&=\sum_{i=1}^{n+1}\sum_{j=1}^{m+1}(c_{i,j}-a_i-b_j)x_{i,j}=A_\epsilon(\mathbf{X},\mathbf{C})-\sum_{i=1}^{n}a_i\underset{=1}{\underbrace{\sum_{j=1}^{m+1}x_{i,j}}}-\sum_{j=1}^{m}b_j\underset{=1}{\underbrace{\sum_{i=1}^{n+1}x_{i,j}}}\\
	~&=A_\epsilon(\mathbf{X},\mathbf{C})-\left(\mathbf{1}^T_{n+1}\mathbf{a}+\mathbf{1}^T_{m+1}\mathbf{b}\right)=A_\epsilon(\mathbf{X},\mathbf{C})-E(\mathbf{a},\mathbf{b}).
	\end{aligned}
	\end{equation*}
	So $A_\epsilon(\mathbf{X},\mathbf{C})=A_\epsilon(\mathbf{X},\overline{\mathbf{C}})+E(\mathbf{a},\mathbf{b})$. Remark that $\overline{\mathbf{C}}\geq\mathbf{0}$ with $c_{n+1,m+1}\,{=}\,0$, so $\overline{\mathbf{C}}$ is an edit cost matrix and the transformation is $\epsilon$-admissible.
\end{proof}
\noindent
Consider a dual solution $(\mathbf{u},\mathbf{v})$. Recall that $u_{n+1}\,{=}\,v_{m+1}\,{=}\,0$. By definition the dual solution fulfils the conditions of Proposition~\ref{theo-admuv}. This implies the following complementary slackness condition.
\begin{proposition}\label{th-compclack}
$(\mathbf{X},(\mathbf{u},\mathbf{v}))$ solves the LSAPE and its dual problem (primal-dual equation Eq.~\ref{eq-primdual}) iff $A_\epsilon(\mathbf{X},\overline{\mathbf{C}})\,{=}\,0$ with $$\overline{\mathbf{C}}\,{=}\,\mathbf{C}\,{-}\,\mathbf{u}\mathbf{1}^T_{m+1}\,{-}\,\mathbf{1}_{n+1}\mathbf{v}^T$$
\textit{i.e.} for all $(i,j)\,{\in}\,\{1,\ldots,n\,{+}\,1\}\,{\times}\,\{1,\ldots,m\,{+}\,1\}$ 
\begin{equation}\label{eq-compslack}
	\begin{aligned}
		&x_{i,j}\overline{c}_{i,j}\,{=}\,0~\Leftrightarrow~\left(\left(x_{i,j}\,{=}\,1\right)\wedge\left(\overline{c}_{i,j}\,{=}\,0\right)\right)\vee\left(\left(x_{i,j}\,{=}\,0\right)\wedge\left(\overline{c}_{i,j}\,{\geq}\,0\right)\right)\\
		&\Leftrightarrow~\left(\left(x_{i,j}\,{=}\,1\right)\wedge\left(c_{i,j}\,{=}\,u_i\,{+}\,v_j\right)\right)\vee\left(\left(x_{i,j}\,{=}\,0\right)\wedge\left(c_{i,j}\,{\geq}\,u_i\,{+}\,v_j\right)\right)
	\end{aligned}
\end{equation}
The optimal cost is $A_\epsilon(\mathbf{X},\mathbf{C})\,{=}\,E(\mathbf{u},\mathbf{v})$.
\end{proposition}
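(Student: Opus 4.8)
The plan is to derive the whole statement from the $\epsilon$-admissibility of the transformation attached to the dual variables, together with Lemma~\ref{lem-opt} and elementary linear programming duality. As in the preceding paragraph, $(\mathbf{u},\mathbf{v})$ denotes a dual-feasible pair with $u_{n+1}=v_{m+1}=0$, so that by Eq.~\ref{eq-cstdual} it satisfies $\mathbf{u}\mathbf{1}^T_{m+1}+\mathbf{1}_{n+1}\mathbf{v}^T\leq\mathbf{C}$ and thus meets the hypotheses of Proposition~\ref{theo-admuv} with $\mathbf{a}=\mathbf{u}$ and $\mathbf{b}=\mathbf{v}$. First I would apply that proposition: the transformation $T(\mathbf{C},\overline{\mathbf{C}})$ with $\overline{\mathbf{C}}=\mathbf{C}-\mathbf{u}\mathbf{1}^T_{m+1}-\mathbf{1}_{n+1}\mathbf{v}^T$ is $\epsilon$-admissible with index $A(T)=E(\mathbf{u},\mathbf{v})$; in particular $\overline{\mathbf{C}}$ is an edit cost matrix (so $\overline{\mathbf{C}}\geq\mathbf{0}$) and, for \emph{every} $\epsilon$-assignment $\mathbf{X}$,
\[ A_\epsilon(\mathbf{X},\mathbf{C})=A_\epsilon(\mathbf{X},\overline{\mathbf{C}})+E(\mathbf{u},\mathbf{v}). \]
This single identity carries both implications.

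For ``$A_\epsilon(\mathbf{X},\overline{\mathbf{C}})=0\;\Rightarrow\;$ optimality'', I would invoke Lemma~\ref{lem-opt} applied to the $\epsilon$-admissible $T$ above: it gives immediately that $\mathbf{X}$ is an optimal $\epsilon$-assignment with $A_\epsilon(\mathbf{X},\mathbf{C})=A(T)=E(\mathbf{u},\mathbf{v})$. Since $(\mathbf{u},\mathbf{v})$ is dual feasible and its objective equals the attained primal optimum, weak duality forces $(\mathbf{u},\mathbf{v})$ to be dual optimal; hence $(\mathbf{X},(\mathbf{u},\mathbf{v}))$ solves the primal and the dual problems, and Eq.~\ref{eq-primdual} holds. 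Conversely, if $(\mathbf{X},(\mathbf{u},\mathbf{v}))$ solves the LSAPE and its dual, then by the strong duality theorem---the LSAPE being feasible and bounded and, by Section~\ref{sec-linprog}, sharing its value with its LP relaxation whose dual is Eq.~\ref{eq-duallsape}---the primal-dual equation Eq.~\ref{eq-primdual}, namely $A_\epsilon(\mathbf{X},\mathbf{C})=E(\mathbf{u},\mathbf{v})$, holds; substituting it into the displayed identity yields $A_\epsilon(\mathbf{X},\overline{\mathbf{C}})=0$.

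It then remains to unfold $A_\epsilon(\mathbf{X},\overline{\mathbf{C}})=0$ into the coordinatewise equivalences of Eq.~\ref{eq-compslack}. Writing $A_\epsilon(\mathbf{X},\overline{\mathbf{C}})=\sum_{i=1}^{n+1}\sum_{j=1}^{m+1}x_{i,j}\overline{c}_{i,j}$ (the term dropped by the vectorization is $x_{n+1,m+1}\overline{c}_{n+1,m+1}$, and $\overline{c}_{n+1,m+1}=c_{n+1,m+1}-u_{n+1}-v_{m+1}=0$), each summand is nonnegative because $\mathbf{X}$ is binary and $\overline{\mathbf{C}}\geq\mathbf{0}$, so the sum is zero iff $x_{i,j}\overline{c}_{i,j}=0$ for every $(i,j)$. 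Since $x_{i,j}\in\{0,1\}$, this amounts to ``$x_{i,j}=1$ and $\overline{c}_{i,j}=0$'' or ``$x_{i,j}=0$'' (in which case $\overline{c}_{i,j}\geq0$ is automatic), and replacing $\overline{c}_{i,j}=c_{i,j}-u_i-v_j$ turns $\overline{c}_{i,j}=0$ into $c_{i,j}=u_i+v_j$ and $\overline{c}_{i,j}\geq0$ into $c_{i,j}\geq u_i+v_j$, which is exactly Eq.~\ref{eq-compslack}. The optimal-cost assertion $A_\epsilon(\mathbf{X},\mathbf{C})=E(\mathbf{u},\mathbf{v})$ is then nothing but Eq.~\ref{eq-primdual}, already established. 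The only step needing genuine care is the appeal to duality: one must make sure there is no duality gap between the \emph{binary} program Eq.~\ref{eq-lsapecx} and the dual Eq.~\ref{eq-duallsape}, which is precisely what the total-unimodularity argument of Section~\ref{sec-linprog} secures; the rest is the elementary observation that a sum of nonnegative terms vanishes iff every term does.
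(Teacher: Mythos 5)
Your proposal is correct and follows essentially the same route as the paper: both directions rest on Proposition~\ref{theo-admuv} (applied with $\mathbf{a}=\mathbf{u}$, $\mathbf{b}=\mathbf{v}$) to obtain the identity $A_\epsilon(\mathbf{X},\mathbf{C})=A_\epsilon(\mathbf{X},\overline{\mathbf{C}})+E(\mathbf{u},\mathbf{v})$, with Lemma~\ref{lem-opt} supplying the ``$A_\epsilon(\mathbf{X},\overline{\mathbf{C}})=0\Rightarrow$ optimality'' direction and substitution of Eq.~\ref{eq-primdual} supplying the converse. Your additional remarks --- that weak duality is what certifies $(\mathbf{u},\mathbf{v})$ as dual optimal, and the explicit term-by-term unfolding of the vanishing nonnegative sum into Eq.~\ref{eq-compslack} --- merely make explicit steps the paper leaves implicit.
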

\begin{proof}
Suppose that $(\mathbf{X},(\mathbf{u},\mathbf{v}))$ is a solution to the primal-dual equation \ref{eq-primdual}. By Eq.~\ref{eq-cstdual} we have $\mathbf{u}\mathbf{1}^T\,{+}\,\mathbf{1}\mathbf{v}^T\leq\mathbf{C}$. Then by Proposition \ref{theo-admuv}, $\overline{\mathbf{C}}\,{=}\,\mathbf{C}-(\mathbf{u}\mathbf{1}^T\,{+}\,\mathbf{1}\mathbf{v}^T)$ is an $\epsilon$-admissible transformation: $A_\epsilon(\mathbf{X},\mathbf{C})\,{=}\,A_\epsilon(\mathbf{X},\overline{\mathbf{C}})+E(\mathbf{u},\mathbf{v})$. Since $A_\epsilon(\mathbf{X},\mathbf{C})\,{=}\,E(\mathbf{u},\mathbf{v})$ by Eq.~\ref{eq-primdual}, so $A_\epsilon(\mathbf{X},\overline{\mathbf{C}})\,{=}\,0$, which demonstrates the first part ($\Rightarrow$). Now suppose that $\mathbf{X}$ is an $\epsilon$-assignment, and $\overline{\mathbf{C}}$ is an edit cost matrix such that $\overline{\mathbf{C}}\,{=}\,\mathbf{C}-(\mathbf{u}\mathbf{1}^T\,{+}\,\mathbf{1}\mathbf{v}^T)$ and $A_\epsilon(\mathbf{X},\overline{\mathbf{C}})\,{=}\,0$. By Lemma \ref{lem-opt} $\mathbf{X}$ is optimal with cost $A_\epsilon(\mathbf{X},\mathbf{C})\,{=}\,E(\mathbf{u},\mathbf{v})$.
\end{proof}
\noindent
This condition is the same for the LSAP, but $\mathbf{X}$ and $\mathbf{C}$ do not satisfy the same constraints. Solving the LSAP is equivalent to find a transformed cost matrix having $k$ independent zero elements ($k\,{=}\,\min\{n,m\}$). Elements of a matrix are independent if none of them occupies the same row or column. According to K{\"o}nig's theorem and Eq.~\ref{eq-compslack}, this independent set of zero elements exactly corresponds to an optimal assignment \cite{kuhn55,munk57,bour71}. Hungarian-type algorithms are based on this property. In order to derive similar algorithms for solving the LSAPE, the notion of independent elements needs to be adapted.
\begin{definition}[$\epsilon$-independent]\label{def-epsindep}
Let $\mathcal{S}$ be a set of elements of a $(n\,{+}\,1)\,{\times}\,(m\,{+}\,1)$ edit cost matrix $\mathbf{C}$. Elements of $\mathcal{S}$ are $\epsilon$-independent if there is at most one element of $\mathcal{S}$ on each row $i\,{=}\,1,\ldots,n$ of $\mathbf{C}$, and at most one element of $\mathcal{S}$ on each column $j\,{=}\,1,\ldots,m$ of $\mathbf{C}$. The set $\mathcal{S}$ is maximal when there is exactly one element of $\mathcal{S}$ on each row $i\,{=}\,1,\ldots,n$, and exactly one element of $\mathcal{S}$ on each column $i\,{=}\,1,\ldots,n$.
\end{definition}
\noindent
Obviously, by definition, any $\epsilon$-assignment contains a maximal set of $\epsilon$-independent elements equals to $1$. This implies the following property.
\begin{corollary}\label{cor-compslack}
Let $\mathcal{S}$ be a maximal set of $\epsilon$-independent elements of an edit cost matrix $\mathbf{C}$. Let $\mathbf{u}$ and $\mathbf{v}$ be two vectors such that 
\begin{equation}
\left\lbrace\begin{aligned}
&c_{i,j}=u_i+v_j,&&\forall (i,j)\,{\in}\,\mathcal{S},\\
&c_{i,j}\geq u_i+v_j,&&\forall (i,j)\,{\not\in}\,\mathcal{S}.
\end{aligned}\right.
\end{equation}
Then the matrix $\mathbf{X}$ constructed such that 
$x_{i,j}=\delta_{(i,j)\in\mathcal{S}}$
is an optimal $\epsilon$-assignment for the LSAPE, and $(\mathbf{u},\mathbf{v})$ solves its dual problem.
\end{corollary}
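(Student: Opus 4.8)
The plan is to read this corollary as a direct repackaging of Proposition~\ref{theo-admuv}, Lemma~\ref{lem-opt} and Proposition~\ref{th-compclack}, so the proof is essentially verification. First I would extend $\mathbf{u}$ and $\mathbf{v}$ to vectors of $[0,+\infty)^{n+1}$ and $[0,+\infty)^{m+1}$ by setting $u_{n+1}=v_{m+1}=0$, following the convention introduced around Eq.~\ref{eq-cstdual}; then the two hypotheses on $\mathcal{S}$ read $c_{i,j}=u_i+v_j$ for $(i,j)\in\mathcal{S}$ and $c_{i,j}\geq u_i+v_j$ for $(i,j)\notin\mathcal{S}$ over the full index range $\{1,\ldots,n+1\}\times\{1,\ldots,m+1\}$. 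In particular $\mathbf{u}\mathbf{1}_{m+1}^T+\mathbf{1}_{n+1}\mathbf{v}^T\leq\mathbf{C}$, so the matrix $\overline{\mathbf{C}}=\mathbf{C}-\mathbf{u}\mathbf{1}_{m+1}^T-\mathbf{1}_{n+1}\mathbf{v}^T$ satisfies $\overline{\mathbf{C}}\geq\mathbf{0}$ and $\overline{c}_{n+1,m+1}=c_{n+1,m+1}-u_{n+1}-v_{m+1}=0$; hence $\overline{\mathbf{C}}$ is an edit cost matrix, and Proposition~\ref{theo-admuv} shows that $T(\mathbf{C},\overline{\mathbf{C}})$ is $\epsilon$-admissible with index $A(T)=E(\mathbf{u},\mathbf{v})$. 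Taking $j=m+1$ and $i=n+1$ in the inequality also recovers $u_i\leq c_{i,m+1}$ and $v_j\leq c_{n+1,j}$, i.e.\ $(\mathbf{u},\mathbf{v})$ is feasible for the dual (Eq.~\ref{eq-duallsape}).

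Next I would check that $\mathbf{X}$, defined by $x_{i,j}=\delta_{(i,j)\in\mathcal{S}}$ together with the convention $x_{n+1,m+1}=1$, is an $\epsilon$-assignment, i.e.\ satisfies Eq.~\ref{eq-assignedX}. This is exactly where the maximality of $\mathcal{S}$ enters: it guarantees precisely one element of $\mathcal{S}$ on each row $i=1,\ldots,n$ and precisely one on each column $j=1,\ldots,m$, which is $\sum_{j=1}^{m+1}x_{i,j}=1$ for $i\leq n$ and $\sum_{i=1}^{n+1}x_{i,j}=1$ for $j\leq m$; appending $(n+1,m+1)$ to $\mathcal{S}$ does not disturb $\epsilon$-independence since the last row and last column carry no constraint.

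Then comes the one short computation: $A_\epsilon(\mathbf{X},\overline{\mathbf{C}})=\sum_{i=1}^{n+1}\sum_{j=1}^{m+1}\overline{c}_{i,j}x_{i,j}=\sum_{(i,j)\in\mathcal{S}}(c_{i,j}-u_i-v_j)=0$, using the equality part of the hypothesis. Applying Lemma~\ref{lem-opt} to the $\epsilon$-admissible transformation $T$ and the $\epsilon$-assignment $\mathbf{X}$ gives that $\mathbf{X}$ is optimal for the LSAPE with total cost $A_\epsilon(\mathbf{X},\mathbf{C})=A(T)=E(\mathbf{u},\mathbf{v})$. Finally, since $\overline{\mathbf{C}}=\mathbf{C}-\mathbf{u}\mathbf{1}_{m+1}^T-\mathbf{1}_{n+1}\mathbf{v}^T$ with $A_\epsilon(\mathbf{X},\overline{\mathbf{C}})=0$ and $(\mathbf{u},\mathbf{v})\geq\mathbf{0}$, the ``$\Leftarrow$'' direction of Proposition~\ref{th-compclack} yields that $(\mathbf{X},(\mathbf{u},\mathbf{v}))$ solves the LSAPE together with its dual, so $(\mathbf{u},\mathbf{v})$ is an optimal dual solution as claimed.

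I do not expect any genuine obstacle here; the only delicate points are purely bookkeeping: the harmless extension of the dual variables by zero last components (so that Propositions~\ref{theo-admuv} and~\ref{th-compclack} apply verbatim), and the identification of ``$\mathcal{S}$ maximal'' with the row/column-sum constraints Eq.~\ref{eq-assignedX} that make the indicator matrix of $\mathcal{S}$ a legitimate $\epsilon$-assignment. As an alternative to citing Proposition~\ref{th-compclack} for the dual claim, one could argue directly from weak duality — $E(\mathbf{u},\mathbf{v})\leq\min_{\mathbf{X}}A_\epsilon(\mathbf{X},\mathbf{C})\leq A_\epsilon(\mathbf{X},\mathbf{C})$ for any dual-feasible pair and any $\epsilon$-assignment — and note that the established equality $A_\epsilon(\mathbf{X},\mathbf{C})=E(\mathbf{u},\mathbf{v})$ forces optimality on both sides; but reusing the complementary-slackness characterization already proved is the tidiest route.
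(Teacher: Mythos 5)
Your proposal is correct and follows essentially the same route as the paper: verify that maximality of $\mathcal{S}$ makes $\mathbf{X}$ a valid $\epsilon$-assignment, observe that the hypotheses are exactly the complementary slackness condition of Eq.~\ref{eq-compslack}, and invoke Proposition~\ref{th-compclack}. The paper's own proof is just a terser version of this; your extra bookkeeping (extending the dual variables by zero last components and the explicit computation of $A_\epsilon(\mathbf{X},\overline{\mathbf{C}})=0$ via Proposition~\ref{theo-admuv} and Lemma~\ref{lem-opt}) only makes explicit what the paper leaves implicit.
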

\begin{proof}
By construction and from Definition \ref{def-epsindep}, $\mathbf{X}$ defines an $\epsilon$-assignment. Also, $\mathbf{X}$ and $\mathbf{C}$ satisfy the complementary slackness condition given by Eq.~\ref{eq-compslack}. So by Proposition~\ref{th-compclack} $(\mathbf{X},(\mathbf{u},\mathbf{v}))$ solves the LSAPE and its dual problem.
\end{proof}
\noindent
The Hungarian-type algorithm proposed in the following section finds such a maximal set of $\epsilon$-independent elements.


\section{A Hungarian algorithm for solving the LSAPE}\label{sec-algo}
We propose to compute a solution to the LSAPE (Eq.~\ref{pb-assigneps} or Eq.~\ref{eq-lsapemat}), and its dual problem (Eq.~\ref{eq-duallsape}), by adapting the Hungarian algorithm that initially solves the LSAP and it dual problem \cite{kuhn55,munk57,bour71,law76,bur09}. There is several versions of the Hungarian algorithm. We adapt the one presented in~\cite{law76,bur09}, which is known to be more efficient both in time and space.

We consider an edit cost matrix $\mathbf{C}\,{\in}\,[0,+\infty)^{(n+1)\times(m+1)}$  (Eq.~\ref{eq:matrixC}) between two sets $\mathcal{U}\,{=}\,\{1,\ldots,n\}$ and $\mathcal{V}\,{=}\,\{1,\ldots,m\}$. Recall that row $n\,{+}\,1$ of $\mathbf{C}$ encodes the cost of inserting 
each element of $\mathcal{V}$ into $\mathcal{U}$. Similarly, column $m\,{+}\,1$ encodes the cost of removing each element of $\mathcal{U}$. Consider the sets $\mathcal{U}_\epsilon\,{=}\,\mathcal{U}\,{\cup}\,\{n\,{+}\,1\}$ and $\mathcal{V}_\epsilon\,{=}\,\mathcal{V}\,{\cup}\,\{m\,{+}\,1\}$. The proposed adaptation of the Hungarian algorithm takes into account the relaxation of the constraints associated to removal and insertion operations.

Intuitively, the algorithm proceeds like the classical Hungarian algorithm. First, it computes an initial partial $\epsilon$-assignment (Section~\ref{sec-partial}), together with a pair of dual variables, such that the complementary slackness condition is satisfied (Eq.~\ref{eq-compslack}). Then, while maintaining this condition, the set of assigned elements of $\mathcal{V}$ (substituted or removed) is iteratively augmented by solving a restricted primal-dual problem. When all elements of $\mathcal{V}$ are assigned, some elements of $\mathcal{U}$ may not be assigned. So the algorithm proceeds until they are, as before. Since the complementary slackness condition is preserved through the iterations, the algorithm ends with an optimal solution (Prop.~\ref{th-compclack}). This is detailed in the following sections.
\subsection{Encoding and pre-processing}
In the following algorithms, a partial $\epsilon$-assignment is not
encoded by its associated binary matrix $\mathbf{X}\,{\in}\,\{0,1\}^{(n+1)\times(m+1)}$ (Section~\ref{sec-partial}) but by a pair of vectors $(\rho,\varrho)$ with $\rho\,{\in}\,\{0,1,\ldots,m\,{+}\,1\}^n$, $\varrho\,{\in}\,\{0,1,\ldots,n\,{+}\,1\}^m$, and such that
\begin{equation}
	\begin{array}{l}
		\rho_i=\left\{\begin{array}{ll}
			j&\text{if}~\exists j\,{\in}\,\{1,\ldots,m\,{+}\,1\}~|~x_{i,j}=1\\
			0&\text{else~~(unassigned)}
		\end{array}\right.\\
		\varrho_j=\left\{\begin{array}{ll}
			i&\text{if}~\exists i\,{\in}\,\{1,\ldots,n\,{+}\,1\}~|~x_{i,j}=1\\
			0&\text{else~~(unassigned)}
		\end{array}\right.
	\end{array}
\end{equation}
Note that unassigned elements of the partial $\epsilon$-assignment are assigned to $0$. The matrix $\mathbf{X}$ can be easily reconstructed from $\rho$ and $\varrho$.

We want to compute an initial partial $\epsilon$-assignment $\mathbf{X}$, and a pair of dual variables $(\mathbf{u},\mathbf{v})$, such that Eq.~\ref{eq-compslack} is satisfied. To this we adapt the basic pre-processing step involved in the Hungarian algorithm \cite{bur09}. Algorithm~\ref{algo-preproc} details this step. The minimum cost on each row $i\,{\in}\,\mathcal{U}$ of $\mathbf{C}$ allows to construct $\mathbf{u}$. Then, the minimum cost on each column $j\,{\in}\,\mathcal{V}$ of $\mathbf{C}\,{-}\,\mathbf{u}\mathbf{1}^T_{m+1}$ allows to construct $\mathbf{v}$. By construction we have $\mathbf{C}\geq\mathbf{u}\mathbf{1}^T_{m+1}\,{+}\,\mathbf{1}_{n+1}\mathbf{v}^T$. Let $\overline{\mathbf{C}}\,{=}\,\mathbf{C}\,{-}\,\mathbf{u}\mathbf{1}^T_{m+1}\,{-}\,\mathbf{1}_{n+1}\mathbf{v}^T$ be the transformed edit cost matrix (Proposition~\ref{theo-admuv}). This transformation guaranties the existence of at least one element $j\,{\in}\,\mathcal{V}$ satisfying $c_{i,j}\,{=}\,u_i\,{+}\,v_j$, or equivalently $\overline{c}_{i,j}\,{=}\,0$, for each $i\,{\in}\,\mathcal{U}$. In other terms, each row $i\,{\in}\,\mathcal{U}$ of $\overline{C}$ contains at least one $0$. Similarly, each column $j\,{\in}\,\mathcal{V}$ of $\overline{C}$ contains at least one $0$. So it is possible to deduce a partial $\epsilon$-assignment, eventually reduced to one pair of assigned elements only, satisfying Eq.~\ref{eq-compslack}. Algorithm~\ref{algo-preproc} finds such a set by a scanline approach. Note that the whole algorithm has a $O(nm)$ time complexity.

\begin{algorithm}[!t]
\Begin(PreProcessing${(\mathbf{C})}$){
	$u_{n+1},v_{m+1}\leftarrow 0$,~\,$\rho\leftarrow\mathbf{0}_n$,~\,$\varrho\leftarrow\mathbf{0}_m$\\
	\textbf{for}~\,$i\,{=}\,1,\ldots,n$\,~\textbf{do}~\,$u_i\leftarrow\min\,\{c_{i,j}\,|\,j\,{=}\,1,\ldots,m\,{+}\,1\}$\\
	\textbf{for}~\,$j\,{=}\,1,\ldots,m$\,~\textbf{do}~\,$v_j\leftarrow\min\,\{c_{i,j}\,{-}\,u_i\,|\,i\,{=}\,1,\ldots,n\,{+}\,1\}$\\
	\For{$i\,{=}\,1,\ldots,n$}{
		\For{$j\,{=}\,1,\ldots,m$}{
			\If{$(\varrho_j\,{=}\,0)\wedge(c_{i,j}\,{=}\,u_i\,{+}\,v_j)$}{
				$\rho_i\,{\leftarrow}\,j$,\,~$\varrho_j\,{\leftarrow}\,i$\quad{\color{dark-gray}// $i$ assigned to $j$}\\
				\textbf{break}
			}
		}
		\textbf{if}~\,$(\rho_i\,{=}\,0)\wedge(c_{i,m+1}\,{=}\,u_i)$\,~\textbf{then}~\,$\rho_i\,{\leftarrow}\,m\,{+}\,1$\quad{\color{dark-gray}// $i\,{\rightarrow}\,\epsilon$}
	}
	\For{$j\,{=}\,1,\ldots,m$}{
		\textbf{if}~\,$(\varrho_j\,{=}\,0)\wedge(c_{n+1,j}\,{=}\,v_j)$~\,\textbf{then}~\,$\varrho_j\,{\leftarrow}\,n\,{+}\,1$\quad{\color{dark-gray}// $\epsilon\,{\rightarrow}\,j$}
	}
	\Return{$((\rho,\varrho),(\mathbf{u},\mathbf{v}))$}
}
\caption{Basic pre-processing\label{algo-preproc}}
\end{algorithm}
Algorithm~\ref{algo-preproc} has a $O(nm)$ time complexity. More sophisticated pre-processing methods can be adapted to find an initial partial $\epsilon$-assignment, see \cite{bur09} for more details on these methods.
\begin{example}\label{ex-red}
Consider the edit cost matrix
\begin{equation}\begin{small}
	\mathbf{C}=\left(\begin{array}{ccccc|c}
		7 & 11 & 9 & 8 & 9 & 10\\
		2 & 8 & 8 & 5 & 7 & 3\\
		1 & 7 & 6 & 6 & 9 & 5\\
		3 & 7 & 6 & 2 & 2 & 3\\\cline{1-6}
		4 & 2 & 2 & 7 & 8 & 0
	\end{array}\right)
\end{small}\end{equation}
First part of Algorithm \ref{algo-preproc} computes
\begin{equation*}\begin{small}
\mathbf{u}=\left(\begin{array}{c}7\\2\\1\\2\\\cline{1-1} 0\end{array}\right)~\text{and}~\mathbf{v}=\left(\begin{array}{c}0\\2\\2\\0\\0\\\cline{1-1} 0\end{array}\right),~\text{so}~~\overline{\mathbf{C}}=\mathbf{C}-\mathbf{u}\mathbf{1}^T\,{-}\,\mathbf{1}\mathbf{v}^T=\left(\begin{array}{ccccc|c}
		0 & 2 & 0 & 1 & 2 & 3\\
		0 & 4 & 4 & 3 & 5 & 1\\
		0 & 4 & 3 & 5 & 8 & 4\\
		1 & 3 & 2 & 0 & 0 & 1\\\cline{1-6}
		4 & 0 & 0 & 7 & 8 & 0
\end{array}\right).
\end{small}\end{equation*}
Due to the distribution of the zeros in $\overline{\mathbf{C}}$, which correspond to the elements of $\mathbf{C}$ satisfying $c_{i,j}\,{=}\,u_i\,{+}\,v_j$, only partial assignments with edition can be constructed. Second part of Algorithm \ref{algo-preproc} leads to $\rho\,{=}\,(1,0,0,4)$ and $\varrho\,{=}\,(1,5,5,4,0)$, equivalently represented by
\begin{equation*}\begin{small}
	\overline{\mathbf{C}}=\left(\begin{array}{lllll|c}
		0^{*} & 2 & 0 & 1 & 2 & 3\\
		0 & 4 & 4 & 3 & 5 & 1\\
		0 & 4 & 3 & 5 & 8 & 4\\
		1 & 3 & 2 & 0^* & 0 & 1\\\cline{1-6}
		4 & 0^* & 0^* & 7 & 8 & 0
	\end{array}\right),
\end{small}\end{equation*}
where $\overline{c}_{i,j}\,{=}\,0^*$ corresponds to $i\,{\rightarrow}\,j$ in the partial $\epsilon$-assignment.
\end{example}
\subsection{Augmenting paths}
Consider a partial $\epsilon$-assignment $(\rho,\varrho)$ (or equivalently $\mathbf{X}$) and a pair of dual variables $(\mathbf{u},\mathbf{v})$ satisfying Eq.~\ref{eq-compslack}. The partial $\epsilon$-assignment may be empty (no pre-processing), \textit{i.e.} $\rho\,{=}\,\mathbf{0}_n$, $\varrho\,{=}\,\mathbf{0}_m$, $\mathbf{u}\,{=}\,\mathbf{0}_{n+1}$ and $\mathbf{v}\,{=}\,\mathbf{0}_{m+1}$. 
Assume that at least one element $k\,{\in}\,\mathcal{V}$ is not assigned to an element of $\mathcal{U}_\epsilon$, for instance $k\,{=}\,5\,{\in}\,\mathcal{V}$ in Example~\ref{ex-red}. We want to find a new partial $\epsilon$-assignment such that:
\begin{enumerate}
\item $k$ becomes assigned to an element of $\mathcal{U}_\epsilon$,
\item all previously assigned elements are still assigned, and
\item there is a pair of dual variables such that Eq.~\ref{eq-compslack} is satisfied.
\end{enumerate}
Let $\mathcal{S}_{n,m,\epsilon,k}$ be the set of all matrices representing a partial $\epsilon$-assignment which satisfies constraints 1 and 2 above.

To solve this sub-problem of the LSAPE, we consider a subgraph of the mixed bipartite graph $K_{n,m,\epsilon}$ (Section~\ref{sec-linprog}) whose edges and arcs satisfy $c_{i,j}\,{=}\,u_i\,{+}\,v_j$, \textit{i.e.} $\overline{c}_{i,j}\,{=}\,0$. Let $G^0$ be this subgraph. By definition, the bipartite graph which represents the current partial $\epsilon$-assignment $\mathbf{X}$ is a subgraph of $G^0$ (its node adjacency matrix is $\mathbf{X}$). It defines a set of assigned nodes and edges on $G^0$, an edge $(i,j)$ being assigned iff $x_{i,j}\,{=}\,1$ in the current partial $\epsilon$-assignment.

The new partial $\epsilon$-assignment is constructed by computing minimal paths in $K_{n,m,\epsilon}$, \textit{i.e.} paths in $G^0$. By definition, a path $P$ in $K_{n,m,\epsilon}$ alternate an element of a set and an element of the other. Also, since there is no arc from $\epsilon$ to any other element of $\mathcal{U}\,{\cup}\,\mathcal{V}$ in $K_{n,m,\epsilon}$ (Section~\ref{sec-linprog}), $\epsilon$ can only be a terminal node of a path. The length of a path, penalized by the edit costs, is defined by
\begin{equation}\label{eq-lngth}
	\gamma_\mathbf{C}(P)=\sum_{(i,j)\subset P}c_{i,j}.
\end{equation}
\begin{definition}[alternating and augmenting paths]
	A path in $K_{n,m,\epsilon}$ is alternating if its edges/arcs are alternatively unassigned and assigned. An alternating path, which starts with an unassigned element $k$ of $\mathcal{V}$ (resp. $\mathcal{U}$) and ends with
\begin{enumerate}
\item an unassigned element of $\mathcal{U}$ (resp. $\mathcal{V}$), or
\item the null element $\epsilon$ of $\mathcal{U}_\epsilon$ or $\mathcal{V}_\epsilon$,
\end{enumerate}
is an augmenting path if its length defined by Eq.~\ref{eq-lngth} is minimal among all the alternating paths connecting $k$ to an element satisfying one of the two cases above.
\end{definition}
\noindent
Contrary to the classical definition of augmenting paths, see \cite{bur09}, an augmenting path as defined above may end with an assigned arc. Assume that the starting node of the path is $k\,{\in}\,\mathcal{V}$. Then, these two configurations can be encountered:
\begin{figure}[!t]
\begin{center}
\begin{tabular}{ccc}
\includegraphics[scale=0.5]{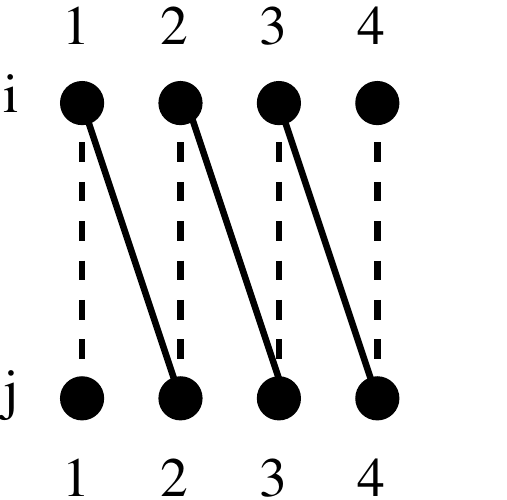}&\includegraphics[scale=0.5]{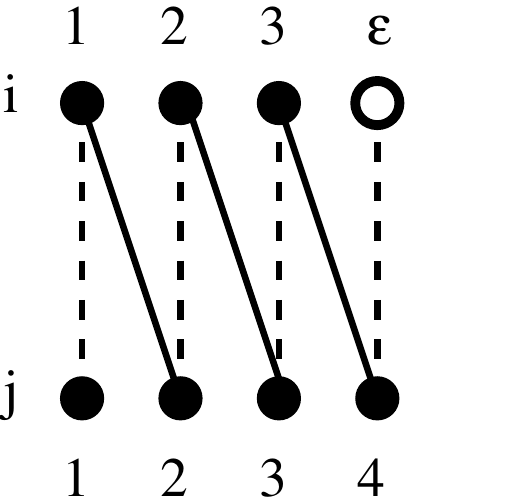}
&\includegraphics[scale=0.5]{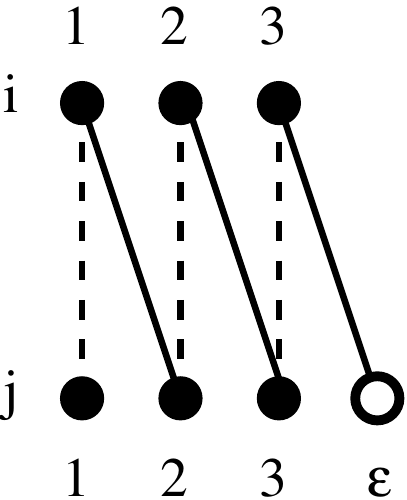}\\
(a) last edge is unassigned&(b) last edge is unassigned&(c) last edge is assigned\\
only substitutions&ending with a removal&ending with an insertion
\end{tabular}
\end{center}
\caption{\label{fig-augpath}Three cases of augmenting paths starting with an unassigned edge $(i_1,j_1)$. Unassigned edges are dotted.}
\end{figure}
\begin{enumerate}
\item[$\bullet$]If $i_l\,{\in}\,\mathcal{U}_\epsilon$ is the last node, the path can be represented as a sequence $(k\,{=}\,j_1,i_1,j_2,i_2,\ldots,j_l,i_l)$ with $j_s\,{\in}\,\mathcal{V}$ and $i_r\,{\in}\,\mathcal{U}$ for $r\,{=}\,1,\ldots,l\,{-}\,1$. Since the first edge $(j_1,i_1)$ is unassigned and since the path is alternating, the last edge $(j_l,i_l)$ is also unassigned (Fig.~\ref{fig-augpath}(a,b)), as in the classical case.
\item[$\bullet$]If $\epsilon\,{\in}\,\mathcal{V}_\epsilon$ is the last node, the path can be represented as a sequence $(k\,{=}\,j_1,i_1,j_2,i_2,\ldots,j_l,i_l,\epsilon)$ with $j_s\,{\in}\,\mathcal{V}$ and $i_r\,{\in}\,\mathcal{U}$. In this case the last arc $(i_l,\epsilon)$ is assigned, \textit{i.e.} $i_l$ is assigned to $\epsilon$ (Fig.~\ref{fig-augpath}(c)).
\end{enumerate}
Once an augmenting path has been found, a new partial $\epsilon$-assignment is constructed by assigning unassigned edges/arcs of the path, and reciprocally, as follows.
\begin{proposition}\label{prop-subprob}
	Let $\mathbf{X}$ be the matrix representation of a partial $\epsilon$-assignment, and $(\mathbf{u},\mathbf{v})$ be a pair of dual variables ($u_{n+1}\,{=}\,v_{m+1}\,{=}\,0$), such that the complementary slackness condition (Eq.~\ref{eq-compslack}) is satisfied. Let $P$ be an augmenting path starting with an unassigned element $k\,{\in}\,\mathcal{V}$ (resp. $\mathcal{U}$). The matrix $\mathbf{X}^\prime$, defined by
	$$
		x^\prime_{i,j}=\begin{cases}
			1-x_{i,j} & \text{if}~(i,j)\,{\subset}\,P\\
			x_{i,j}   & \text{else}
		\end{cases},
		$$
represents an optimal partial $\epsilon$-assignment, \textit{i.e.} satisfying
\begin{equation}\label{eq-subprob}
  \mathbf{X}^\prime\in\argmin\left\{A_\epsilon(\mathbf{Y},\mathbf{C})\,|\,\mathbf{Y}\,{\in}\,\mathcal{S}_{n,m,\epsilon,k}\right\},
\end{equation}
Its total cost is equal to $A_\epsilon(\mathbf{X}^\prime,\mathbf{C})=\gamma_{\overline{\mathbf{C}}}(P)+E(\mathbf{u},\mathbf{v})$, with $\overline{\mathbf{C}}\,{=}\,\mathbf{C}\,{-}\,\mathbf{u}\mathbf{1}_{m+1}^T\,{-}\,\mathbf{1}_{n+1}\mathbf{v}^T$.
\end{proposition}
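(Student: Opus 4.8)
The plan is to verify three things in order: (i) that $\mathbf{X}^\prime$ is indeed the matrix of a \emph{partial} $\epsilon$-assignment lying in $\mathcal{S}_{n,m,\epsilon,k}$; (ii) that the pair $(\mathbf{X}^\prime,(\mathbf{u},\mathbf{v}))$ still satisfies the complementary slackness condition of Eq.~\ref{eq-compslack}, which by Corollary~\ref{cor-compslack} / Proposition~\ref{th-compclack} forces optimality over the appropriate domain; and (iii) the cost formula $A_\epsilon(\mathbf{X}^\prime,\mathbf{C})=\gamma_{\overline{\mathbf{C}}}(P)+E(\mathbf{u},\mathbf{v})$. The first and third are essentially bookkeeping; the heart of the argument is step (ii), and more precisely the subtlety that optimality must be argued relative to the constrained domain $\mathcal{S}_{n,m,\epsilon,k}$ rather than all of $\mathcal{S}_{n,m,\epsilon}$.

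For (i), I would trace the effect of flipping edges along $P$. Since $P$ is alternating and starts at the unassigned column-node $k$, the flip gives $k$ exactly one assigned incident edge (constraint~1 of $\mathcal{S}_{n,m,\epsilon,k}$), leaves every interior node of $P$ with the same number of assigned incident edges (an assigned edge is traded for an unassigned one and vice versa at each interior vertex), and at the terminal node either creates one new assignment at a previously unassigned element of $\mathcal{U}$ (case 1 of the augmenting-path definition) or, in the $\epsilon$-terminated case (Fig.~\ref{fig-augpath}(c)), \emph{removes} the assignment $i_l\to\epsilon$, which is harmless because $\epsilon$ is unconstrained. Hence every $i\le n$ row and every $j\le m$ column of $\mathbf{X}^\prime$ still sums to at most $1$, so $\mathbf{X}^\prime$ is a partial $\epsilon$-assignment; and all previously assigned nodes of $\mathcal{U}\cup\mathcal{V}$ not on $P$ are untouched, while those on $P$ remain assigned, giving constraint~2. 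Thus $\mathbf{X}^\prime\in\mathcal{S}_{n,m,\epsilon,k}$.

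For (ii), the key observation is that only edges/arcs $(i,j)\subset P$ change status, and every such edge lies in $G^0$, i.e.\ satisfies $\overline{c}_{i,j}=0$, equivalently $c_{i,j}=u_i+v_j$. Therefore for every $(i,j)$ with $x^\prime_{i,j}=1$ we still have $c_{i,j}=u_i+v_j$: either $(i,j)\notin P$, in which case $x_{i,j}=x^\prime_{i,j}=1$ and the condition held before, or $(i,j)\subset P$ and then $\overline{c}_{i,j}=0$ by construction of $G^0$. Meanwhile the inequality $c_{i,j}\ge u_i+v_j$ for all $(i,j)$ is the dual feasibility of $(\mathbf{u},\mathbf{v})$, which is unchanged. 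So Eq.~\ref{eq-compslack} holds for $(\mathbf{X}^\prime,(\mathbf{u},\mathbf{v}))$. The remaining point — the one I expect to be the real obstacle — is upgrading this to optimality over $\mathcal{S}_{n,m,\epsilon,k}$: one mimics the proof of Lemma~\ref{lem-opt}, noting that for \emph{any} $\mathbf{Y}\in\mathcal{S}_{n,m,\epsilon,k}$ one has $A_\epsilon(\mathbf{Y},\mathbf{C})=A_\epsilon(\mathbf{Y},\overline{\mathbf{C}})+\langle\mathbf{Y},\mathbf{u}\mathbf{1}^T+\mathbf{1}\mathbf{v}^T\rangle$, but because $\mathbf{Y}$ is only a \emph{partial} assignment the row/column sums are no longer identically $1$, so the dual contribution $\langle\mathbf{Y},\mathbf{u}\mathbf{1}^T+\mathbf{1}\mathbf{v}^T\rangle$ is not constant across $\mathcal{S}_{n,m,\epsilon,k}$. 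One must argue that over the restricted set $\mathcal{S}_{n,m,\epsilon,k}$ — where $k$ is forced to be assigned and all previously assigned elements stay assigned — the set of rows/columns carrying a $1$ is the same for $\mathbf{X}^\prime$ as for the minimizer (the augmenting path changes only the assigned/unassigned status along $P$, not which of the non-$P$ elements are covered), so that the dual sum $\langle\mathbf{Y},\mathbf{u}\mathbf{1}^T+\mathbf{1}\mathbf{v}^T\rangle$ takes a common value $E(\mathbf{u},\mathbf{v})$ minus the contributions of the finitely many nodes that remain unassigned in \emph{every} member of $\mathcal{S}_{n,m,\epsilon,k}$; combined with $A_\epsilon(\mathbf{Y},\overline{\mathbf{C}})\ge 0=A_\epsilon(\mathbf{X}^\prime,\overline{\mathbf{C}})$ this yields Eq.~\ref{eq-subprob}. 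This minimality of $\gamma$ along $P$ is exactly where the hypothesis that $P$ is a \emph{shortest} alternating path enters.

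Finally, for (iii): $A_\epsilon(\mathbf{X}^\prime,\mathbf{C})=A_\epsilon(\mathbf{X}^\prime,\overline{\mathbf{C}})+(\text{dual contribution})$ by Proposition~\ref{theo-admuv}; the first term decomposes as $A_\epsilon(\mathbf{X},\overline{\mathbf{C}})+\gamma_{\overline{\mathbf{C}}}(P)$ because flipping along $P$ adds the $\overline{\mathbf{C}}$-cost of the newly assigned edges of $P$ and subtracts that of the de-assigned ones — but \emph{all} edges of $P$ lie in $G^0$ and so have $\overline{\mathbf{C}}$-cost $0$, hence $A_\epsilon(\mathbf{X}^\prime,\overline{\mathbf{C}})=A_\epsilon(\mathbf{X},\overline{\mathbf{C}})=0$ and also $\gamma_{\overline{\mathbf{C}}}(P)=0$; one then re-expresses the total as $\gamma_{\overline{\mathbf{C}}}(P)+E(\mathbf{u},\mathbf{v})$ using that the dual sum equals $E(\mathbf{u},\mathbf{v})$ exactly when all of $\mathcal{U}$ and $\mathcal{V}$ are covered — which is precisely the situation $P$ is designed to move toward. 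I would present the cost computation carefully, tracking the signs of the $1-x_{i,j}$ flips along $P$, since this is where an off-by-one in the telescoping sum is easiest to commit.
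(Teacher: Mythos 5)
Your step (i) matches the paper's argument and is fine. The genuine problem is the hypothesis you inject at the start of step (ii): that every edge of $P$ lies in $G^0$, i.e.\ satisfies $\overline{c}_{i,j}=0$. Nothing in the statement grants this. An augmenting path is defined only as an alternating path of \emph{minimal length} among those joining $k$ to an unassigned element of $\mathcal{U}$ or to $\epsilon$; its unassigned edges may well have $\overline{c}_{i,j}>0$ (the paper treats ``$P$ contained in $G^0$'' as a special case \emph{after} the proposition, and the term $\gamma_{\overline{\mathbf{C}}}(P)$ in the cost formula is there precisely because it need not vanish). Under your assumption the formula degenerates to $A_\epsilon(\mathbf{X}^\prime,\mathbf{C})=E(\mathbf{u},\mathbf{v})$ and you prove only that special case. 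Worse, your claim that complementary slackness is preserved for $(\mathbf{X}^\prime,(\mathbf{u},\mathbf{v}))$ is false in general: a newly assigned edge of $P$ with $\overline{c}_{i,j}>0$ violates $x^\prime_{i,j}\overline{c}_{i,j}=0$. The proposition does not assert that slackness survives the flip; it asserts optimality \emph{within} $\mathcal{S}_{n,m,\epsilon,k}$ together with an exact cost formula.

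The paper's route avoids this entirely: it uses complementary slackness of the \emph{old} pair $(\mathbf{X},(\mathbf{u},\mathbf{v}))$ to split $A_\epsilon(\mathbf{X}^\prime,\overline{\mathbf{C}})$ into four sums according to whether $(i,j)\subset P$ and whether $x_{i,j}=1$. The assigned edges of $P$ contribute $0$ because $\overline{c}_{i,j}=0$ for them (that much the old slackness does give you), the off-path terms are unchanged and vanish, and only the previously unassigned edges of $P$ survive, giving $A_\epsilon(\mathbf{X}^\prime,\overline{\mathbf{C}})=\sum_{(i,j)\subset P,\,x_{i,j}=0}\overline{c}_{i,j}=\gamma_{\overline{\mathbf{C}}}(P)\geq 0$. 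Minimality of the length of $P$ over all alternating paths from $k$ is then what delivers optimality over $\mathcal{S}_{n,m,\epsilon,k}$ --- not a slackness argument about the new pair. Your closing remarks do correctly identify the two places where even the paper's argument is terse (an arbitrary $\mathbf{Y}\in\mathcal{S}_{n,m,\epsilon,k}$ need not arise from $\mathbf{X}$ by flipping a single alternating path, and the dual contribution of a partial assignment is a sum over covered nodes only), but flagging these does not repair the incorrect $G^0$ premise on which your own computation rests.
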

\begin{proof}
	Let $P$ be an augmenting path starting with an element $k\,{\in}\,\mathcal{V}$. We have $\mathbf{X}^\prime\,{\in}\,\{0,1\}^{(n+1)\times(m+1)}$ by definition. We show that any column $j\,{\in}\,\mathcal{V}$ of $\mathbf{X}^\prime$ sums to no more than $1$. First remark that for any $j$ included in $P$ we have:
$$
	\sum_{i=1}^{n+1}x^\prime_{i,j}=\sum_{(i,j)\subset P}x^\prime_{i,j}+\sum_{(i,j)\not\subset P}x^\prime_{i,j}=\sum_{(i,j)\subset P}(1-x_{i,j})+\sum_{(i,j)\not\subset P}x_{i,j}=\sum_{(i,j)\subset P}(1-x_{i,j})
$$
since there is at most one $i\,{\in}\,\mathcal{U}_\epsilon$ such that $x_{i,j}\,{=}\,1$, and this element is included in $P$. Moreover we have:
\begin{enumerate}
	\item[$\bullet$]For $j\,{=}\,k$, there is no $i\,{\in}\,\mathcal{U}_\epsilon$ such that $x_{i,k}\,{=}\,1$, then the neighbor $i_1$ of $k$ in $P$ satisfies $x_{i_1,k}\,{=}\,0$. So $x_{i_1,k}^\prime\,{=}\,1$ and $x_{i,k}^\prime\,{=}\,0$ for all $i\,{\in}\,\mathcal{U}_\epsilon\,{\setminus}\,\{i_1\}$.
	\item[$\bullet$]For any $j\,{\in}\,\mathcal{V}\,{\setminus}\,\{k\}$ included in $P$, $j$ has two neighbors $i_a$ and $i_b$ of $\mathcal{U}_\epsilon$ in $P$ satisfying for instance $x_{i_a,j}\,{=}\,1$ and $x_{i_b,j}\,{=}\,0$. So we have $x^\prime_{i_a,j}\,{=}\,0$, $x^\prime_{i_b,j}\,{=}\,1$, and $x^\prime_{i,j}\,{=}\,0$ for all $i\,{\in}\,\mathcal{U}_\epsilon\,{\setminus}\,\{i_a,i_b\}$.
\end{enumerate}
So $\sum_{i=1}^{n+1}x^\prime_{i,j}\in\{0,1\}$ for any $j$ included in $P$. When $j$ is not a node of $P$, we have $\sum_{i=1}^{n+1}x^\prime_{i,j}=\sum_{i=1}^{n+1}x_{i,j}\in\{0,1\}$ by definition.

Similarly, we have $\sum_{j=1}^{m+1}x^\prime_{i,j}\in\{0,1\}$ and so $\mathbf{X}^\prime$ is a partial $\epsilon$-assignment. This shows that any augmenting path starting with an unassigned element $k\,{\in}\,\mathcal{V}$ leads to a new partial $\epsilon$-assignment that assigns this element. Now we show that this partial $\epsilon$-assignment is optimal among all partial $\epsilon$-assignments in $\mathcal{S}_{n,m,\epsilon,k}$.

Since $\mathbf{X}$ and $\overline{\mathbf{C}}$ satisfy the complementary slackness condition, the total cost $A_\epsilon(\mathbf{X},\overline{\mathbf{C}})$ can be decomposed into $4$ terms:
\begin{equation*}
	A_\epsilon(\mathbf{X},\overline{\mathbf{C}})=\underset{=0~~(\overline{c}_{i,j}=0)}{\underbrace{\sum_{\stackrel{(i,j)\subset P}{x_{i,j}=1}}\overline{c}_{i,j}x_{i,j}}}+\underset{=0~~(\overline{c}_{i,j}\geq 0)}{\underbrace{\sum_{\stackrel{(i,j)\subset P}{x_{i,j}=0}}\overline{c}_{i,j}x_{i,j}}}+\underset{=0~~(\overline{c}_{i,j}=0)}{\underbrace{\sum_{\stackrel{(i,j)\not\subset P}{x_{i,j}=1}}\overline{c}_{i,j}x_{i,j}}}+\underset{=0~~(\overline{c}_{i,j}\geq 0)}{\underbrace{\sum_{\stackrel{(i,j)\not\subset P}{x_{i,j}=0}}\overline{c}_{i,j}x_{i,j}}}
\end{equation*}
Similarly we have for $\mathbf{X}^\prime$:
\begin{equation*}
	\begin{aligned}
			A_\epsilon(\mathbf{X}^\prime,\overline{\mathbf{C}})&=\underset{=0~~(\overline{c}_{i,j}=0)}{\underbrace{\sum_{\stackrel{(i,j)\subset P}{x_{i,j}=1}}\overline{c}_{i,j}(1-x_{i,j})}}+\underset{\geq 0~~(\overline{c}_{i,j}\geq 0)}{\underbrace{\sum_{\stackrel{(i,j)\subset P}{x_{i,j}=0}}\overline{c}_{i,j}(1-x_{i,j})}}+\underset{=0~~(\overline{c}_{i,j}=0)}{\underbrace{\sum_{\stackrel{(i,j)\not\subset P}{x_{i,j}=1}}\overline{c}_{i,j}x_{i,j}}}+\underset{=0~~(\overline{c}_{i,j}\geq 0)}{\underbrace{\sum_{\stackrel{(i,j)\not\subset P}{x_{i,j}=0}}\overline{c}_{i,j}x_{i,j}}}\\
			&=0+\underset{\geq 0~(\overline{c}_{i,j}\geq 0)}{\underbrace{\sum_{\stackrel{(i,j)\subset P}{x_{i,j}=0}}\overline{c}_{i,j}}}+0+0=\gamma_{\overline{\mathbf{C}}}(P)
	\end{aligned}
\end{equation*}
Since $P$ is a minimal path among all augmenting paths connecting $k$ to an unassigned element of $\mathcal{U}$ or to $\epsilon$, $\mathbf{X}^\prime$ is a partial $\epsilon$-assignment with a minimal cost.
\end{proof}
\noindent
If each edge/arc of $P$ satisfies $\overline{c}_{i,j}\,{=}\,0$ ($P$ is thus contained in $G^0$), then $\gamma_{\overline{\mathbf{C}}}(P)\,{=}\,0$ and by consequence $A_\epsilon(\mathbf{X}^\prime,\mathbf{C})\,{=}\,E(\mathbf{u},\mathbf{v})$. The sub-problem (Eq.~\ref{eq-subprob}) of the LSAPE can thus be solved by computing an augmenting path $P$, together with a transformed edit cost matrix $\overline{\mathbf{C}}$, such that $\gamma_{\overline{\mathbf{C}}}(P)\,{=}\,0$. This is detailed in Section~\ref{sec-augalg}. The new partial $\epsilon$-assignment is then obtained by assigning unassigned edges and by deassigning assigned edges along the path (Proposition~\ref{prop-subprob}), as detailed in Section~\ref{sec-mainalg}.
\begin{example}\label{ex-augpath1}
Consider the partial $\epsilon$-assignment defined by $\rho\,{=}\,(1,0,0,4)$ and $\varrho\,{=}\,(1,5,5,4,0)$ for the edit cost matrix
\begin{equation*}\begin{small}
	\left(\begin{array}{lllll|c}
		\mathbf{0^{*}} & {2} & {0} & 0 & 1 & {3}\\
		0 & 4 & 4 & 2 & 4 & 1\\
		0 & 4 & 3 & 4 & 7 & 4\\
		2 & 4 & 3 & \mathbf{0^*} & 0 & 2\\\cline{1-6}
		4 & \mathbf{0^*} & \mathbf{0^*} & 6 & 7 & 0
	\end{array}\right),~~~~~G^0=\vcenter{\hbox{\includegraphics[scale=0.44]{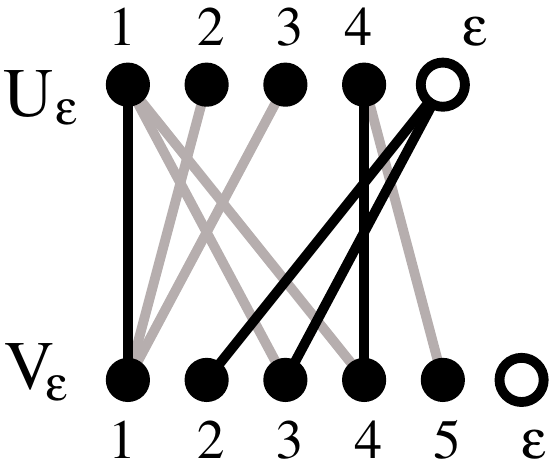}}}\end{small}
\end{equation*}
We want to assign $k\,{=}\,5\,{\in}\,\mathcal{V}$. In the corresponding bipartite graph $G^0$ above (gray and black edges), $k\,{=}\,5$ can be linked to an unassigned element of $\mathcal{U}$ ($2$ or $3$), by a path which alternates unassigned (gray) and assigned (black) edges, \textit{i.e.} an augmenting path. Consider the one ending with $2\,{\in}\,\mathcal{U}$:
\begin{equation*}
\vcenter{\hbox{\includegraphics[scale=0.44]{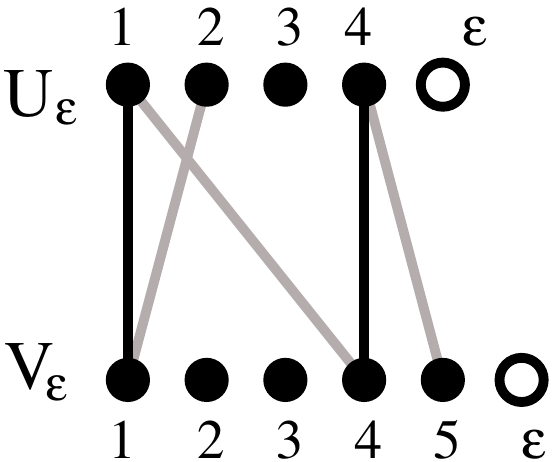}}}~\overset{\text{swap}}{\Longrightarrow}~\vcenter{\hbox{\includegraphics[scale=0.44]{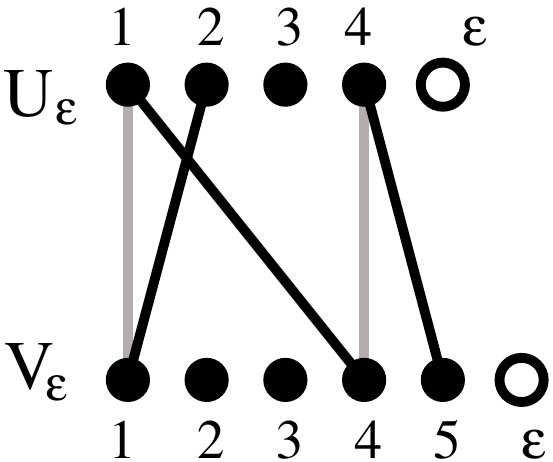}}}
\end{equation*}
By swapping assigned and unassigned edges along the path:
\begin{equation*}\begin{small}
	G^0=\vcenter{\hbox{\includegraphics[scale=0.44]{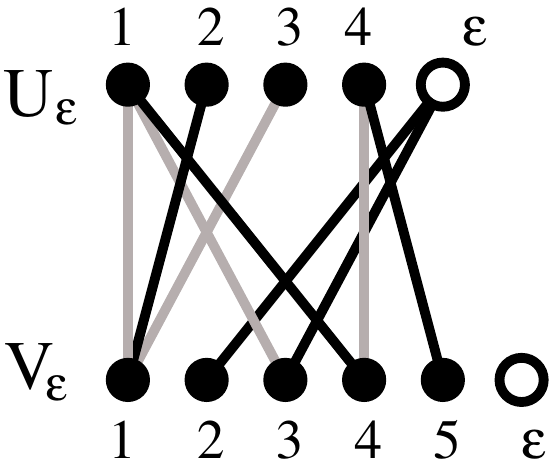}}}~~~~~\left(\begin{array}{lllll|c}
		\mathbf{0} & 2 & 0 & \mathbf{0^*} & 1 & 3\\
		\mathbf{0^*} & 4 & 4 & 2 & 4 & 1\\
		0 & 4 & 3 & 4 & 7 & 4\\
		2 & 4 & 3 & \mathbf{0} & \mathbf{0^*} & 2\\\cline{1-6}
		4 & 0^* & 0^* & 6 & 7 & 0
	\end{array}\right)
\end{small}\end{equation*}
we obtain the partial $\epsilon$-assignment $\rho\,{=}\,(4,1,0,5)$ and $\varrho\,{=}\,(2,5,5,1,4)$.
\end{example}
\noindent
The augmenting path can also ends with an element $\epsilon$, either $(n\,{+}\,1)\,{\in}\,\mathcal{U}_\epsilon$ or $(m\,{+}\,1)\,{\in}\,\mathcal{V}_\epsilon$. In one of these cases and contrary to the classical definition of augmenting paths \cite{bur09}, an augmenting path can end with an assigned arc, as illustrated in the following example. 
\begin{example}
Consider the partial $\epsilon$-assignment defined by $\rho\,{=}\,(4,1,2)$ and $\varrho\,{=}\,(2,3,0,1)$ for the edit cost matrix:
\begin{equation*}\begin{small}
	\left(\begin{array}{llll|c}
		0 & 0 & 3 & \mathbf{0^*} & 4\\
		\mathbf{0^*} & 2 & 0 & 2 & 7\\
		0 & \mathbf{0^*} & 2 & 3 & 6\\\cline{1-5}
		4 & 6 & 8 & 0 & 0
	\end{array}\right),~~~~G^0=\hspace{-0.18cm}\vcenter{\hbox{\includegraphics[scale=0.44]{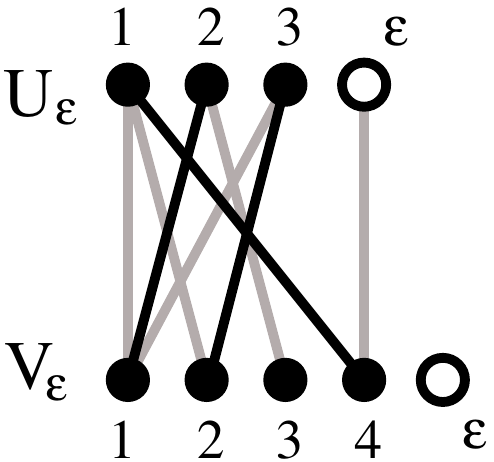}}}
	\end{small}\end{equation*}
The path starting with $3\,{\in}\,\mathcal{V}$ and ending with $(n\,{+}\,1)\,{\in}\,\mathcal{U}_\epsilon$ leads to $\rho=(1,2,3)$ and $\varrho=(1,2,3,4)$: 
\begin{equation*}\begin{small}
	\vcenter{\hbox{\includegraphics[scale=0.44]{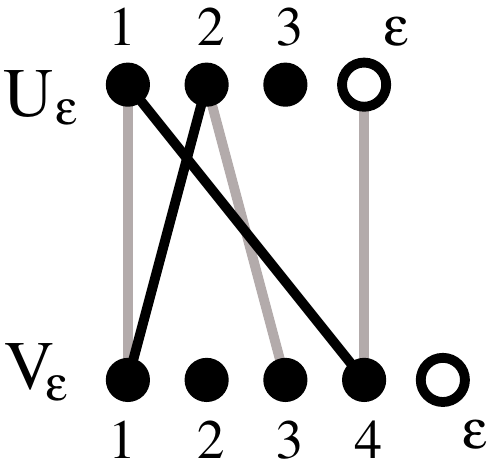}}}\overset{\text{swap}}{\Rightarrow}\vcenter{\hbox{\includegraphics[scale=0.44]{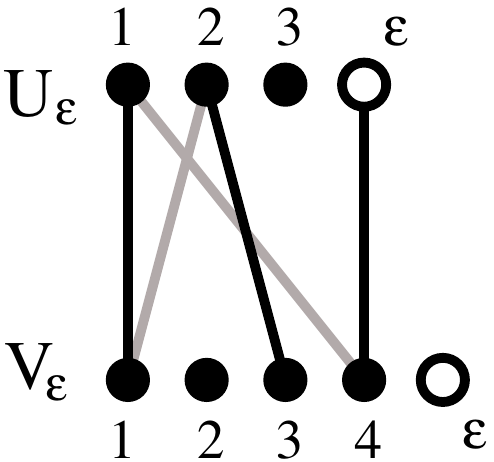}}} \Rightarrow \vcenter{\hbox{\includegraphics[scale=0.44]{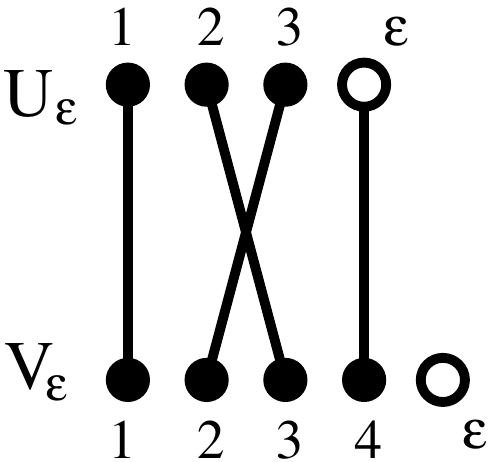}}}~~~~\left(\begin{array}{llll|c}
		\mathbf{0^*} & 0 & 3 & 0 & 4\\
		0 & 2 & \mathbf{0^*} & 2 & 7\\
		0 & \mathbf{0^*} & 2 & 3 & 6\\\cline{1-5}
		4 & 6 & 8 & \mathbf{0^*} & 0
	\end{array}\right)
\end{small}\end{equation*}
For the last case, consider the partial $\epsilon$-assignment defined by $\rho\,{=}\,(3,4,2)$ and $\varrho\,{=}\,(0,3,1)$. We want to assign $1\,{\in}\,\mathcal{V}$:
\begin{equation*}\begin{small}
	\left(\begin{array}{lll|l}
		2 & 3 & \mathbf{0^*}  & 4\\
		7 & 0 & 5  & \mathbf{0^*}\\
		0 & \mathbf{0^*} & 4  & 6\\\cline{1-4}
		4 & 6 & 0 & 0
	\end{array}\right),~~~~G^0=\hspace{-0.18cm}\vcenter{\hbox{\includegraphics[scale=0.44]{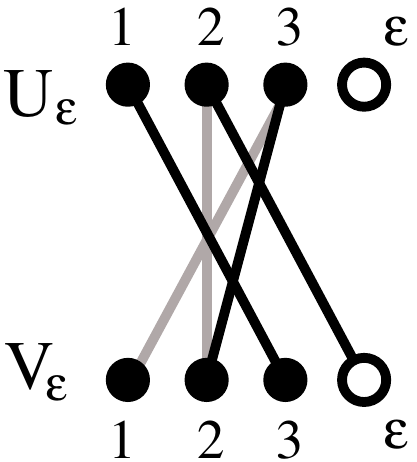}}}
\end{small}\end{equation*}
This leads to $\rho=(3,2,1)$ and $\varrho=(3,2,1)$:
\begin{equation*}\begin{small}
	\text{path:}\hspace{-0.18cm}\vcenter{\hbox{\includegraphics[scale=0.44]{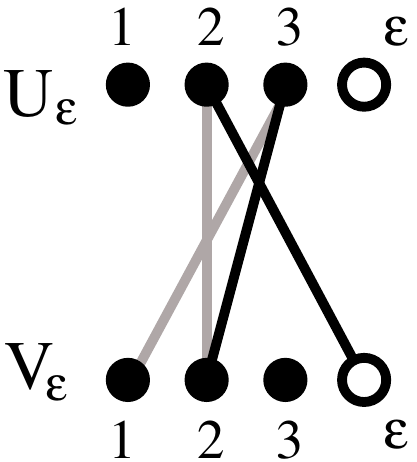}}}\overset{\text{swap}}{\Rightarrow}\vcenter{\hbox{\includegraphics[scale=0.44]{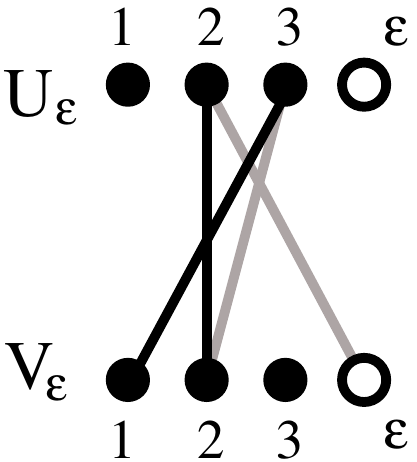}}}\Rightarrow\vcenter{\hbox{\includegraphics[scale=0.44]{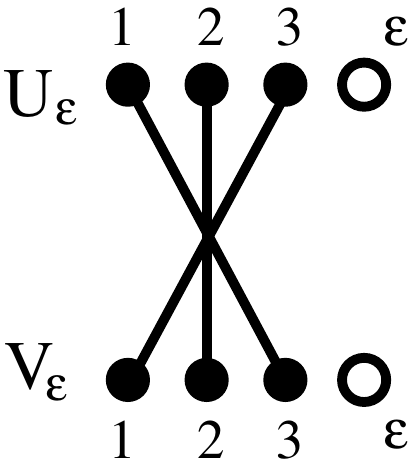}}}~~~~\left(\begin{array}{lll|l}
		2 & 3 & \mathbf{0^*}  & 4\\
		7 & \mathbf{0^*} & 5  & 0\\
		\mathbf{0^*} & 0 & 4  & 6\\\cline{1-4}
		4 & 6 & 0 & 0
	\end{array}\right)
\end{small}\end{equation*}
Remark that the last edge of the path switches from
assigned to unassigned.
\end{example}
\subsection{Construction of an augmenting path}\label{sec-augalg}
\begin{algorithm}[!t]
\Begin(\,Augment${(k,\mathbf{C},\rho,\varrho,\mathbf{u},\mathbf{v},U)}$){
$n\,{\leftarrow}\,|U|$,~\,$\pi\,{\in}\,[0,{+\infty}]^n$,~\,${SV},\, {SU},\, {LU}\leftarrow\emptyset$,~\,$j\leftarrow k$\\
\textbf{foreach}~$i\,{\in}\,U$~\textbf{do}~$\pi_i\leftarrow{+\infty}$\\
\While{true}{
	${SV}\leftarrow{SV}\,{\cup}\,\{j\}$\\
	\If{$(\varrho_j\,{\leq}\,n)\wedge(c_{n+1,j}\,{-}\,v_j\,{=}\,0)$}{
		\textbf{return}~$(n\,{+}\,1,j,\mathbf{u},\mathbf{v},\textbf{pred})$
	}
	{\color{dark-gray}// \textit{find candidate nodes and update the tree}}\\
	\ForEach{$i\,{\in}\,U\,{\setminus}\,{LU}$}{
		\If{$c_{i,j}-u_i-v_j<\pi_i$}{
			${pred}_i\leftarrow j$\\
			$\pi_i\leftarrow c_{i,j}-u_i-v_j$\\
			\If{$\pi_i=0$}{
				\textbf{if}~$\rho_i\,{\in}\,\{0,m\,{+}\,1\}$~\textbf{return}~$(i,0,\mathbf{u},\mathbf{v},\textbf{pred})$\\
				${LU}\leftarrow{LU}\,{\cup}\,\{i\}$
			}
		}
	}
	{\color{dark-gray}// \textit{update dual variables and find candidate nodes}}\\
	\If{${LU}\,{\setminus}\,{SU}\,{=}\,\emptyset$}{
		$\delta_s\leftarrow\min\{\pi_i\,|\,i\,{\in}\,U\,{\setminus}\,{LU}\}$\\
		$(l,\delta_\epsilon)\leftarrow(\argmin,\min)\{c_{n+1,j}\,{-}\,v_j\,|\,j\,{\in}\,{SV}\}$\\
		$\delta\leftarrow\min\{\delta_s,\delta_\epsilon\}$\\
		\textbf{foreach}~$j\,{\in}\,{SV}$~\textbf{do}~$v_j\leftarrow v_j+\delta$\\
		\textbf{foreach}~$i\,{\in}\,{LU}$~\textbf{do}~$u_i\leftarrow u_i-\delta$\\
		\textbf{if}~$\delta_\epsilon\,{\leq}\,\delta_s$~\textbf{return}~$(n\,{+}\,1,l,\mathbf{u},\mathbf{v},\textbf{pred})$\\
		\ForEach{$i\,{\in}\,U\,{\setminus}\,{LU}$}{
			$\pi_i\leftarrow\pi_i-\delta$\\
			\If{$\pi_i=0$}{
				\textbf{if}~$\rho_i\,{\in}\,\{0,m\,{+}\,1\}$~\textbf{return}~$(i,0,\mathbf{u},\mathbf{v},\textbf{pred})$\\
				${LU}\leftarrow{LU}\,{\cup}\,\{i\}$
			}
		}
	}
	{\color{dark-gray}// \textit{extend the tree}}\\	
	$i\leftarrow$~\text{any element in}~${LU}\,{\setminus}\,{SU}$\\
	${SU}\leftarrow{SU}\,{\cup}\,\{i\}$\\
	$j\leftarrow\rho_i$
}}
\caption{Construct an augmenting path starting at $k\,{\in}\,\mathcal{V}$.\label{algo-augment}}
\end{algorithm}
Algorithm~\ref{algo-augment} details the construction of an augmenting
path from an unassigned element $k\,{\in}\,\mathcal{V}$. It consists
in growing a tree $T^0$ of minimal alternating paths in $G^0$, rooted
in $k$, until an augmenting path has been found. This is similar to
Dijkstra's algorithm to compute minimal paths in bipartite graphs. The
tree is constructed by growing two sets, \textit{SV} (initialized to
$k$) and \textit{SU}, representing the current nodes of the
tree. Candidate nodes of $\mathcal{U}$ to the extension are
represented by the set $\textit{LU}\,{\setminus}\,{SU}$, where the set
\textit{LU} defines the nodes of $\mathcal{U}$ within the tree, as well
as all candidate nodes. For computational purposes, the more general
tree $T$ of minimal alternating paths rooted in $k$ (not necessary in
$G^0$) is constructed. This tree is encoded by a predecessor vector
$\mathbf{pred}\,{\in}\,\mathcal{V}^n$ (for unassigned edges) and the
pair $(\rho,\varrho)$ encoding the partial $\epsilon$-assignment (for
assigned edges). The vector $\pi\,{\in}\,[0,{+\infty})^n$ encodes, for
each $i$, the minimal transformed cost $\overline{c}_{i,j}$ among all
nodes $j\,{\in}\,\textit{SV}$. The tree $T^0\,{\subset}\,G^0$ is
obviously included in the tree $T$. Note that $\pi$ and \textbf{pred} do not consider the element $\epsilon$. Indeed, since it can only be a sink of an augmenting path, the algorithm stops when it is encountered (see below).

At each iteration, the last node $j\,{\in}\,\textit{SV}$ inserted to the tree $T^0$ is considered (line 5). Each node $i\,{\in}\,\mathcal{U}\,{\setminus}\,\textit{LU}$ (line 9) is added to \textit{LU} as a candidate node to the extension of $T^0$ (line~15) if it is connected to $j$ in $G^0$ (line 13), \textit{i.e.} $\overline{c}_{i,j}\,{=}\,0$. 
Remark that $T$ is constructed simultaneously (lines 10--12). When there is at least one candidate node ($\textit{LU}\,{\setminus}\,\textit{SU}\,{\not=}\,\emptyset$), one is selected and added to the set \textit{SU} of nodes of $T^0$ (lines 30 and 31), \textit{i.e.} the unassigned edge $(i,\text{pred}_i)$ is candidate for an augmenting path, together with the assigned node $\rho_i$ and edge $(i,\rho_i)$. Then $\rho_i$ is considered for the next iteration (lines 32 and 5). The tree stops to grow when an unassigned node of $\mathcal{U}$ is reached (line 14), when $\epsilon$ is reached (line 14 or 6), or when there is no candidate node in ${LU}$, \textit{i.e.} ${LU}\,{\setminus}\,{SU}\,{=}\,\emptyset$ (line 16). In this case, dual variables $\mathbf{u}$ and $\mathbf{v}$ are updated such that at least one new unassigned edge is inserted in $G^0$ (lines 18--22), \textit{i.e.} $\overline{c}_{i,j}\,{=}\,0$ is satisfied for at least one pair $(i,j)\,{\in}\,((\mathcal{U}\,{\setminus}\,{LU})\,{\cup}\,\{n\,{+}\,1\})\,{\times}\,{SV}$. Such nodes $i\,{\in}\,\mathcal{U}\,{\setminus}\,{LU}$ become candidates (lines 26 and 28), as before. By construction, $((\mathcal{U}\,{\setminus}\,\textit{LU})\,{\cup}\,\{n\,{+}\,1\})\times\textit{SV}$ does not contain any edge of $G^0$ (including assigned edges).
\begin{proposition}\label{theo-IJ}
	Let $(\mathbf{X},(\mathbf{u},\mathbf{v}))$ and $\mathbf{C}$ be a partial $\epsilon$-assignment, a pair of dual variables and an edit cost matrix such that Eq.~\ref{eq-compslack} is satisfied. Consider the minimum transformed cost $$\delta\,{=}\,\min\{\overline{c}_{i,j}~|~i\,{\not\in}\,\textit{LU},~j\,{\in}\,\textit{SV}\}$$
	with $\overline{\mathbf{C}}\,{=}\,\mathbf{C}\,{-}\,\mathbf{u}\mathbf{1}^T_{m+1}\,{-}\,\mathbf{1}_{n+1}\mathbf{v}^T$ and $u_{n+1}\,{=}\,v_{m+1}\,{=}\,0$.
	Eq.~\ref{eq-compslack} is still satisfied if the dual variables $(\mathbf{u},\mathbf{v})$ are updated by 
	\begin{equation}\label{eq-updual}
	\left\{\begin{array}{ll}
	u_i\,{\leftarrow}\,u_i\,{-}\,\delta,&\forall i\,{\in}\,\textit{LU}\\
	v_j\,{\leftarrow}\,v_j\,{+}\,\delta,&\forall j\,{\in}\,\textit{SV}
	\end{array}\right.
	\end{equation}
or equivalently if the transformed edit cost matrix $\overline{\mathbf{C}}$ is updated by
\begin{equation}\label{eq-assadm2}
	\overline{c}_{i,j}\leftarrow\left\lbrace
	\begin{array}{ll}
		\overline{c}_{i,j}-\delta &\text{if}~i\not\in\textit{LU},~j\in\textit{SV}\\
		\overline{c}_{i,j}+\delta &\text{if}~i\in\textit{LU},~j\not\in\textit{SV}\\
		\overline{c}_{i,j}&\text{else}
	\end{array}\right.
\end{equation}
This updating augments the graph $G^0$ with at least one new edge in $((\mathcal{U}\,{\setminus}\,\textit{LU})\,{\cup}\,\{n\,{+}\,1\})\times\textit{SV}$.
\end{proposition}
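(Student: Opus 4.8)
The plan is to work throughout with the transformed cost matrix $\overline{\mathbf{C}}\,{=}\,\mathbf{C}\,{-}\,\mathbf{u}\mathbf{1}^T_{m+1}\,{-}\,\mathbf{1}_{n+1}\mathbf{v}^T$ and to verify four things. First, the equivalence of Eq.~\ref{eq-updual} and Eq.~\ref{eq-assadm2}: this is immediate from $\overline{c}_{i,j}\,{=}\,c_{i,j}\,{-}\,u_i\,{-}\,v_j$ by splitting into the four cases according to whether $i\,{\in}\,\textit{LU}$ and $j\,{\in}\,\textit{SV}$; the shift $-\delta$ applied to $u_i$ and the shift $+\delta$ applied to $v_j$ cancel on $\textit{LU}\,{\times}\,\textit{SV}$ and on its complement, leave $-\delta$ on $((\mathcal{U}\,{\setminus}\,\textit{LU})\,{\cup}\,\{n{+}1\})\,{\times}\,\textit{SV}$ and $+\delta$ on $\textit{LU}\,{\times}\,((\mathcal{V}\,{\setminus}\,\textit{SV})\,{\cup}\,\{m{+}1\})$, which is exactly Eq.~\ref{eq-assadm2}. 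Second, dual feasibility $\overline{\mathbf{C}}\,{\geq}\,\mathbf{0}$: since it holds beforehand, $\delta\,{\geq}\,0$, and the only entries decreased are those of $((\mathcal{U}\,{\setminus}\,\textit{LU})\,{\cup}\,\{n{+}1\})\,{\times}\,\textit{SV}$, each reduced by exactly the minimum $\delta$ taken over this set, hence still $\,{\geq}\,0$, while all other entries are unchanged or increased. Since $\textit{LU}\,{\subseteq}\,\mathcal{U}$ and $\textit{SV}\,{\subseteq}\,\mathcal{V}$, the pair $(n{+}1,m{+}1)$ falls in the last branch of Eq.~\ref{eq-assadm2}, so $\overline{c}_{n+1,m+1}$ keeps value $0$ and $\overline{\mathbf{C}}$ remains an edit cost matrix (equivalently $u_{n+1}\,{=}\,v_{m+1}\,{=}\,0$ is preserved).

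Third, and this is the crux, I would show that the equalities $\overline{c}_{i,j}\,{=}\,0$ on the assigned edges and arcs of $\mathbf{X}$ survive, i.e. that every assigned pair lies in the last (unchanged) branch of Eq.~\ref{eq-assadm2}. For this I would use the invariants maintained by Algorithm~\ref{algo-augment} at the moment a dual update is performed (all straightforward from its construction): the update fires exactly when $\textit{LU}\,{\setminus}\,\textit{SU}\,{=}\,\emptyset$, so $\textit{SU}\,{=}\,\textit{LU}$; every $i\,{\in}\,\textit{SU}$ has $\rho_i\,{\in}\,\textit{SV}\,{\cap}\,\mathcal{V}$; every $j\,{\in}\,\textit{SV}$ is the (unassigned) root $k$ or equals $\rho_i$ for some $i\,{\in}\,\textit{SU}$; and no $i$ with $\rho_i\,{=}\,m{+}1$ is ever placed in $\textit{LU}$ (the algorithm returns at the guard $\rho_i\,{\in}\,\{0,m{+}1\}$ before that could happen). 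Granting these, I treat the three kinds of assigned arcs. For a substitution arc $(i,j)$ with $x_{i,j}\,{=}\,1$ (so $\rho_i\,{=}\,j$ and $\varrho_j\,{=}\,i$): if $j\,{\in}\,\textit{SV}$ then $j\,{\neq}\,k$, hence $j\,{=}\,\rho_{i'}$ with $i'\,{\in}\,\textit{SU}$; since $\varrho_j\,{=}\,i$ we get $i'\,{=}\,i$, so $i\,{\in}\,\textit{SU}\,{\subseteq}\,\textit{LU}$ and $(i,j)\,{\in}\,\textit{LU}\,{\times}\,\textit{SV}$; if $j\,{\notin}\,\textit{SV}$ then $i\,{\notin}\,\textit{LU}$, since $i\,{\in}\,\textit{LU}\,{=}\,\textit{SU}$ would force $j\,{=}\,\rho_i\,{\in}\,\textit{SV}$. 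A removal arc $(i,m{+}1)$ has $i\,{\notin}\,\textit{LU}$ (as $\rho_i\,{=}\,m{+}1$) and $m{+}1\,{\notin}\,\textit{SV}$. An insertion arc $(n{+}1,j)$ has $n{+}1\,{\notin}\,\textit{LU}$, and $j\,{\notin}\,\textit{SV}$ because $j$ is neither $k$ nor of the form $\rho_{i'}$, $i'\,{\in}\,\mathcal{U}$ (here $\varrho_j\,{=}\,n{+}1$). In every case the pair lies in a branch of Eq.~\ref{eq-assadm2} that leaves the entry untouched, so $\overline{c}_{i,j}\,{=}\,0$ is preserved; combined with the first paragraph this yields Eq.~\ref{eq-compslack}.

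Fourth, I would show that a genuinely new edge or arc enters $G^0$, which needs $\delta\,{>}\,0$ strictly. For $i\,{\in}\,\mathcal{U}\,{\setminus}\,\textit{LU}$, Algorithm~\ref{algo-augment} keeps $\pi_i\,{=}\,\min_{j\in\textit{SV}}\overline{c}_{i,j}$ and only moves $i$ into $\textit{LU}$ once $\pi_i$ reaches $0$, so $\overline{c}_{i,j}\,{\geq}\,\pi_i\,{>}\,0$ for all such $i$ and all $j\,{\in}\,\textit{SV}$; and for $i\,{=}\,n{+}1$, any $j\,{\in}\,\textit{SV}$ with $\overline{c}_{n+1,j}\,{=}\,c_{n+1,j}\,{-}\,v_j\,{=}\,0$ (such $j$ satisfies $\varrho_j\,{\leq}\,n$, being $k$ or some $\rho_{i'}$) would have caused a return, either at the head-of-loop test when $j$ entered $\textit{SV}$ or at the previous dual update through the test $\delta_\epsilon\,{\leq}\,\delta_s$; since the algorithm has not returned, $\overline{c}_{n+1,j}\,{>}\,0$ for all $j\,{\in}\,\textit{SV}$. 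Hence $\delta\,{>}\,0$, and any pair $(i^*,j^*)$ attaining the minimum that defines $\delta$ lies in $((\mathcal{U}\,{\setminus}\,\textit{LU})\,{\cup}\,\{n{+}1\})\,{\times}\,\textit{SV}$, was not an edge or arc of $G^0$ before the update (its transformed cost being $\delta$) and becomes one after it (transformed cost $0$). I expect the third step to be the main obstacle: the feasibility and ``new zero'' arguments are routine once the invariants are in hand, but isolating precisely which properties of the sets $\textit{SU}$, $\textit{SV}$, $\textit{LU}$ the algorithm guarantees and checking exhaustively that substitutions, removals and insertions are all left untouched is the delicate part.
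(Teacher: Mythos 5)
Your proof is correct and follows essentially the same route as the paper: a case analysis over membership in $\textit{LU}\times\textit{SV}$ showing that the update leaves all assigned pairs untouched, keeps $\overline{\mathbf{C}}\geq\mathbf{0}$, and creates at least one new zero in $((\mathcal{U}\,{\setminus}\,\textit{LU})\,{\cup}\,\{n{+}1\})\times\textit{SV}$. The only difference is one of rigor: you explicitly derive from Algorithm~\ref{algo-augment} the invariants that assigned substitution, removal and insertion arcs all fall in unchanged regions and that $\delta>0$, whereas the paper asserts these facts (``elements of $\textit{LU}$ can only be assigned to elements of $\textit{SV}$'', ``$\delta>0$ induces\ldots'') without detailed justification.
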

\begin{proof}
The updating of the dual variables given by Eq.~\ref{eq-updual} leads to the following cases:
\begin{enumerate}
\item When $i\,{\not\in}\,\textit{LU}$ and
  $j\,{\not\in}\,\textit{SV}$, no updating occurs, and so the
  complementary slackness condition(CSC) is maintained in this case.
\item When $i\,{\in}\,\textit{LU}$ and $j\,{\in}\,\textit{SV}$, we have $c_{i,j}-(u_i-\delta)-(v_j+\delta)=c_{i,j}-u_i-v_j=\overline{c}_{i,j}$. So no updating occurs and as before the CSC is preserved in this case.
\item When $i\,{\in}\,\textit{LU}$ and $j\,{\not\in}\,\textit{SV}$, $\overline{c}_{i,j}$ is updated by $c_{i,j}-(u_i-\delta)-v_j=\overline{c}_{i,j}+\delta$. So the transformed cost does not become negative. Since elements $i\,{\in}\,\textit{LU}$ can only be assigned to elements $j\,{\in}\,\textit{SV}$ ($x_{i,j}\,{=}\,1$), we have $x_{i,j}\,{=}\,0$ for all $j\,{\not\in}\,\textit{SV}$, and by consequence the CSC is preserved.
\item When $i\,{\not\in}\,\textit{LU}$ and $j\,{\in}\,\textit{SV}$, $\overline{c}_{i,j}$ is updated by $c_{i,j}-u_i-(v_j+\delta)=\overline{c}_{i,j}-\delta$ which is non-negative by definition. Since $\delta$ is the minimum of $((\mathcal{U}\,{\setminus}\,\textit{LU})\,{\cup}\,\{n\,{+}\,1\})\times\textit{SV}$, the updating produces at least one pair $(i,j)$ such that $\overline{c}_{i,j}\,{=}\,0$ (a new edge added to $G^0$). Also, $\delta\,{>}\,0$ induces that $\overline{c}_{i,j}\,{\not=}\,0$ and $x_{i,j}\,{=}\,0$ before the updating (by complementary slackness). By consequence the CSC is also preserved in this case.
\end{enumerate}
So transformed costs do not become negative and the CSC is preserved on $\mathcal{U}_\epsilon\,{\times}\,\mathcal{V}_\epsilon$. Moreover, this shows that the graph $G^0$ is augmented by at least one new edge, ensuring the dual updating to produce an enlarged tree.
\end{proof}
\noindent
This shows that the dual update adds at least one edge to $G^0$, guaranteeing to complete the tree at each iteration (lines 30, 31 and 5). This also implies that Eq.~\ref{eq-compslack} is satisfied at each iteration. Since each node of $\mathcal{U}$ can be reached, and since each node of the tree $T^0$ is inserted only once, an augmenting path is found if there is an unassigned node of $\mathcal{U}$. Since nodes $\epsilon$ are unconstrained, an augmenting path is also found whenever they are encountered. Even when all elements of $\mathcal{U}$ are already assigned, it is possible to find an augmenting path, ending by a node $\epsilon$. Since the length $\gamma_{\overline{\mathbf{C}}}\,{=}\,0$, by Proposition~\ref{prop-subprob} the corresponding matrix $\mathbf{X}^\prime$ solves the sub-problem of the LSAPE.

The growing of the tree depends on the encoding of the set $\textit{LU}\,{\setminus}\,\textit{SU}$ and the selection of the next node of $\mathcal{U}$ (line 30). We propose to use a FIFO strategy, leading to a breadth-first like growth of the tree. It can be efficiently encoded, together with the sets $\textit{LU}$, $U\,{\setminus}\,\textit{LU}$ and $\textit{SU}$, by a permutation of $\mathcal{U}$ (Fig.~\ref{fig-array} and Appendix~\ref{app-st}), since $\mathcal{U}\,{=}\,\textit{SU}\,{\cup}\,(\textit{LU}\,{\setminus}\,\textit{SU})\,{\cup}\,(\mathcal{U}\,{\setminus}\,\textit{LU})$. 
\begin{example}
Consider the partial $\epsilon$-assignment given by $\rho\,{=}\,(3,0,1,4,0)$ and $\varrho=(3,6,1,4,0,0)$, and the following cost matrix with $\mathbf{u}\,{=}\,\mathbf{0}_5$ and $\mathbf{v}\,{=}\,\mathbf{0}_6$. We want to assign $k\,{=}\,5\,{\in}\,\mathcal{V}$, which is the root of the tree $T$. So in the first iteration of Algorithm~\ref{algo-augment}, the element $j\,{=}\,5\,{\in}\,\mathcal{V}$ is added to \textit{SV} and all its neighbors in $\mathcal{U}$ are added to the tree as a child of $j$ ($\pi$ and $\textbf{pred}$). Then, since there is one element $i\,{=}\,4\,{\in}\,\mathcal{U}$ having a zero transformed cost $\overline{c}_{i,j}\,{=}\,0$, we have ${LU}\,{\setminus}\,{SU}\,{=}\,{LU}\,{=}\,\{4\}$. Since there is one element $j^\prime\,{=}\,4\,{\in}\,\mathcal{V}$ such that $\varrho_{j^\prime}\,{=}\,i$, it is selected as the next $j$ for the next iteration and $i\,{=}\,4$ is then added to \textit{SU}. The resulting tree $T$ is given below, circles represent elements of $\mathcal{V}$ and squares represent elements of $\mathcal{U}$. Black edges are used to represent $T^0$.
\begin{equation*}\begin{small}
\begin{blockarray}{lllllll|l}
	&  &   &     &            & S  &  &\\
	\begin{block}{l\Left{}{(\;}llllll|l<{\;})}
		&8 & 2 & \mathbf{0^*} & 1 & 2 & 1&3\\
		&2 & 4 & 4 & 6 & 5 & 7& 1\\
		&\mathbf{0^*} & 4 & 3 & 1 & 8 &5& 4\\
		S/L&1 & 3 & 2 & \mathbf{0^*} & 0 &0& 1\\
		&2 & 0 & 1 &3 &4 &5 & 3\\\cline{1-8}
		&4 & \mathbf{0}^* & 1 & 7 & 8 &4&0\\
	\end{block}
	\end{blockarray}
		\,,~~\pi=~~ \begin{blockarray}{c}
		~\\
		\begin{block}{\Left{}{(\;\;}c<{\;\;})}
			2\\5\\8\\0\\4\\
		\end{block}
		~
	\end{blockarray} ~,~\textbf{pred}=~~ \begin{blockarray}{c}
		~\\
		\begin{block}{\Left{}{(\;\;}c<{\;\;})}
			5\\5\\5\\5\\5\\
		\end{block}
		~
	\end{blockarray}\,~~~
	\vcenter{\hbox{\includegraphics[scale=0.7]{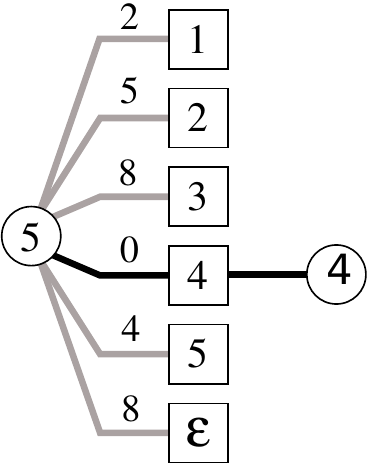}}}
\end{small}\end{equation*}
In the second iteration, while there is no neighbor of $j\,{=}\,4$ in $\mathcal{U}\,{\setminus}\,{LU}$ with a zero reduced cost, the tree is updated if elements of $\mathcal{U}\,{\setminus}\,{LU}$ are connected to $j$ by a strictly lower cost.
\begin{equation*}\begin{small}
\begin{blockarray}{lllllll|l}
	&  &   &     &         S   & S  &  &\\
	\begin{block}{l\Left{}{(\;}llllll|l<{\;})}
		&8 & 2 & \mathbf{0^*} & 1 & 2 & 1&3\\
		&2 & 4 & 4 & 6 & 5 & 7& 1\\
 		&\mathbf{0^*} & 4 & 3 & 1 & 8 &5& 4\\
		S/L& 1 & 3 & 2 & \mathbf{0^*} & 0 &0& 1\\
		&2 & 0 & 1 &3 &4 &5 & 3\\\cline{1-8}
		&4 & \mathbf{0}^* & 1 & 7 & 8 &4&0\\
	\end{block}
	\end{blockarray}
		\,,~~\pi=~\begin{blockarray}{c}
		~\\
		\begin{block}{\Left{}{(\;\;}c<{\;\;})}
			1\\5\\1\\0\\3\\
		\end{block}
		~
	\end{blockarray}\,,~\textbf{pred}=~\begin{blockarray}{c}
		~\\
		\begin{block}{\Left{}{(\;\;}c<{\;\;})}
			4\\5\\4\\5\\4\\
		\end{block}
		~
	\end{blockarray}\,,~~~
	\vcenter{\hbox{\includegraphics[scale=0.7]{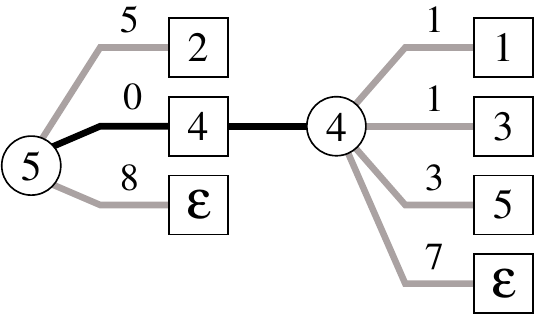}}}
\end{small}\end{equation*}
So a dual updating is performed since ${LU}\,{\setminus}\,{SU}\,{=}\,\emptyset$. The minimum cost is $1$ and so the dual variables become $\mathbf{u}\,{=}\,(0,0,0,-1,0)$ and $\mathbf{v}\,{=}\,(0,0,0,1,1,0)$.
\begin{equation*}\begin{small}
	\begin{blockarray}{llll|ll|ll}
	&  &   &     &         S   & S  &  &\\
	\begin{block}{l\Left{}{(\;}lll|ll|ll<{\;})}
		&8 & 2 & \mathbf{0^*} & \underline{1} & 2 & 1&3\\
		&2 & 4 & 4 & 6 & 5 & 7& 1\\
 		&\mathbf{0^*} & 4 & 3 & \underline{1} & 8 &5& 4\\\cline{1-8}
		S/L& 1 & 3 & 2 & \mathbf{0^*} & 0 &0& 1\\\cline{1-8}
		&2 & 0 & 1 &3 &4 &5 & 3\\
		&4 & \mathbf{0}^* & 1 & 7 & 8 &4&0\\
	\end{block}
	\end{blockarray}~,~~\left\{\begin{aligned}
		&\delta_s\,{=}\,1\\
		&\delta_\epsilon\,{=}\,7\\
		&\delta\,{=}\,\min\{\delta_s,\delta_\epsilon\}\,{=}\,1
	    \end{aligned}\right.~,
	    ~\pi=~~\begin{blockarray}{c}
		~\\
		\begin{block}{\Left{}{(\;\;}c<{\;\;})}
			0\\4\\0\\0\\2\\
		\end{block}
		~
	\end{blockarray}\,,~\textbf{pred}=~~ \begin{blockarray}{c}
		~\\
		\begin{block}{\Left{}{(\;\;}c<{\;\;})}
			4\\5\\4\\5\\4\\
		\end{block}
		~
	\end{blockarray}
\end{small}\end{equation*}
At the end of the iteration, we have the following edit cost matrix and tree, with ${LU}\,{\setminus}\,{SU}\,{=}\,\{1,3\}$. By considering a FIFO strategy, the element $i\,{=}\,1$ is added to \textit{SU} and its corresponding assigned element $j\,{=}\,3\,{\in}\,\mathcal{V}$ is selected for the next iteration. 
\begin{equation*}\begin{small}
	\begin{blockarray}{rlll|ll|ll}
	&  &   &     &         S   & S  &  &\\
	\begin{block}{r\Left{}{(\;}lll|ll|ll<{\;})}
		L&8 & 2 & \mathbf{0^*} & 0 & 1 & 1&3\\
		&2 & 4 & 4 & 5 & 4 & 7& 1\\
 		L&\mathbf{0^*} & 4 & 3 & 0 & 7 &5& 4\\\cline{1-8}
		S/L& 2 & 4 & 3 & \mathbf{0^*} & 0 &1& 2\\\cline{1-8}
		&2 & 0 & 1 &2 &3 &5 & 3\\
		&4 & \mathbf{0}^* & 1 & 6 & 7 &4&0\\
	\end{block}
	\end{blockarray}\,,~~~
	\vcenter{\hbox{\includegraphics[scale=0.7]{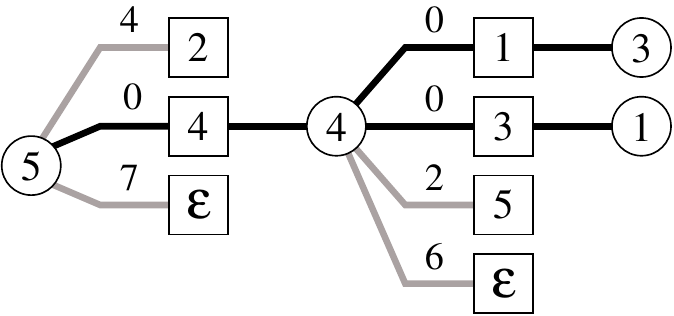}}}~~i\,{=}\,1,~j\,{=}\,3
\end{small}\end{equation*}
In the third iteration, $j\,{=}\,3\,{\in}\,\mathcal{V}$ is added to \textit{SV}, and the tree is updated. Since ${LU}\,{\setminus}\,{SU}\,{=}\,\{3\}$, $i\,{=}\,3$ is added to \textit{SU} and its corresponding element $j\,{=}\,1$ is selected for the next iteration.
\begin{equation*}\begin{small}
	\begin{blockarray}{rllllll|l}
	&  &   &   S  &         S   & S  &  &\\
	\begin{block}{r\Left{}{(\;}llllll|l<{\;})}
		S/L&8 & 2 & \mathbf{0^*} & 0 & 1 & 1&3\\
		&2 & 4 & 4 & 5 & 4 & 7& 1\\
 		L&\mathbf{0^*} & 4 & 3 & 0 & 7 &5& 4\\
		S/L& 2 & 4 & 3 & \mathbf{0^*} & 0 &1& 2\\
		&2 & 0 & 1 &2 &3 &5 & 3\\\cline{1-8}
		&4 & \mathbf{0}^* & 1 & 6 & 7 &4&0\\
	\end{block}
	\end{blockarray}
		\,,~~\pi=~~\begin{blockarray}{c}
		~\\
		\begin{block}{\Left{}{(\;\;}c<{\;\;})}
			0\\4\\0\\0\\1\\
		\end{block}
		~
	\end{blockarray}\,,~~\textbf{pred}=~~\begin{blockarray}{c}
		~\\
		\begin{block}{\Left{}{(\;\;}c<{\;\;})}
			4\\5\\4\\5\\3\\
		\end{block}
		~
	\end{blockarray}
\end{small}\end{equation*}
\begin{equation*}
	\vcenter{\hbox{\includegraphics[scale=0.7]{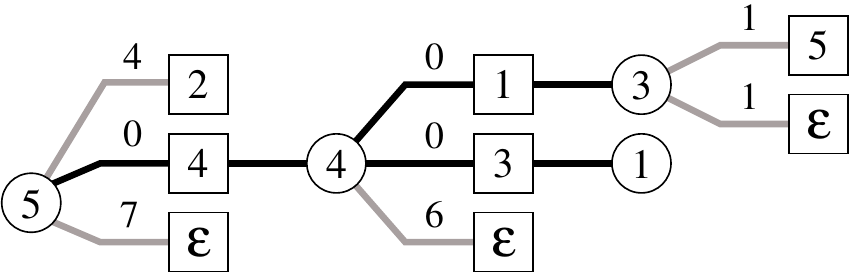}}}
\end{equation*}
In the fourth iteration, $j\,{=}\,1\,{\in}\,\mathcal{V}$ is added to \textit{SV}.
\begin{equation*}\begin{small}
	\begin{blockarray}{rllllll|l}
	& S &   &   S  &         S   & S  &  &\\
	\begin{block}{r\Left{}{(\;}llllll|l<{\;})}
		S/L&8 & 2 & \mathbf{0^*} & 0 & 1 & 1&3\\
		&2 & 4 & 4 & 5 & 4 & 7& 1\\
 		S/L&\mathbf{0^*} & 4 & 3 & 0 & 7 &5& 4\\
		S/L& 2 & 4 & 3 & \mathbf{0^*} & 0 &1& 2\\
		&2 & 0 & 1 &2 &3 &5 & 3\\\cline{1-8}
		&4 & \mathbf{0}^* & 1 & 6 & 7 &4&0\\
	\end{block}
	\end{blockarray}
		\,,~~\pi=~~\begin{blockarray}{c}
		~\\
		\begin{block}{\Left{}{(\;\;}c<{\;\;})}
			0\\2\\0\\0\\1\\
		\end{block}
		~
	\end{blockarray}\,,~~\textbf{pred}=~~\begin{blockarray}{c}
		~\\
		\begin{block}{\Left{}{(\;\;}c<{\;\;})}
			4\\1\\4\\5\\3\\
		\end{block}
		~
	\end{blockarray}
\end{small}\end{equation*}
\begin{equation*}
	\hbox{\includegraphics[scale=0.7]{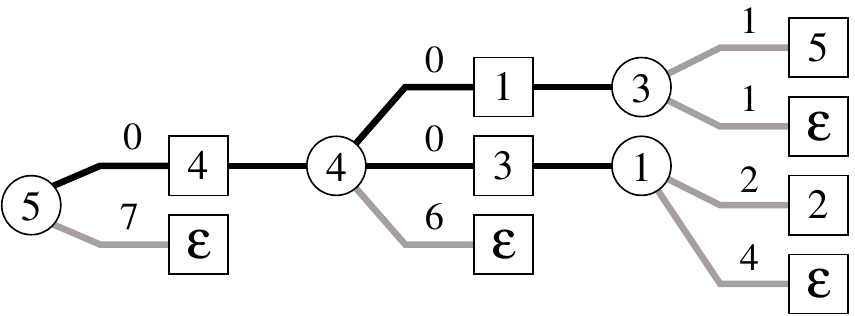}}
\end{equation*}
Since ${LU}\,{\setminus}\,{SU}\,{=}\,\emptyset$, dual variables are updated. The minimum is $\delta\,{=}\,1$ and the dual variables are updated as before: $\mathbf{u}=(-1,0,-1,-2,0)$ and $\mathbf{v}=(1,0,1,2,2,0)$. According to
\begin{equation*}\begin{small}
	\begin{blockarray}{rl|l|lll|ll}
	& S &   &   S  &         S   & S  &  &\\
	\begin{block}{r\Left{}{(\;}l|l|lll|ll<{\;})}
		S/L&8 & 2 & \mathbf{0^*} & 0 & \underline{1} & 1&3\\\cline{1-8}
		&2 & 4 & 4 & 5 & 4 & 7& 1\\\cline{1-8}
 		S/L&\mathbf{0^*} & 4 & 3 & 0 & 7 &5& 4\\
		S/L& 2 & 4 & 3 & \mathbf{0^*} & 0 &1& 2\\\cline{1-8}
		&2 & 0 & \underline{1} &2 &3 &5 & 3\\
		&4 & \mathbf{0}^* & \underline{1} & 6 & 7 &4&0\\
	\end{block}
	\end{blockarray}
		\,,~~\pi=~~\begin{blockarray}{c}
		~\\
		\begin{block}{\Left{}{(\;\;}c<{\;\;})}
			0\\2\\0\\0\\1\\
		\end{block}
		~
	\end{blockarray}\,,~~\textbf{pred}=~~\begin{blockarray}{c}
		~\\
		\begin{block}{\Left{}{(\;\;}c<{\;\;})}
			4\\1\\4\\5\\3\\
		\end{block}
		~
	\end{blockarray}
\end{small}\end{equation*}
the updating of the transformed costs is given by
\begin{equation*}\begin{small}
	\begin{blockarray}{rllllll|l}
	& S &   &   S  &         S   & S  &  &\\
	\begin{block}{r\Left{}{(\;}llllll|l<{\;})}
		S/L&8 & 2 & \mathbf{0^*} & 0 & \underline{1} & 1&3\\
		&1 & 4 & 3 & 4 & 3 & 7& 1\\
 		S/L&\mathbf{0^*} & 4 & 3 & 0 & 7 &5& 4\\
		S/L& 2 & 4 & 3 & \mathbf{0^*} & 0 &1& 2\\
		&1 & 0 & 0 &1 &2 &5 & 3\\\cline{1-8}
		&3 & \mathbf{0}^* & 0 & 5 & 6 &4&0\\
	\end{block}
	\end{blockarray}
		\,,~~\pi=~~\begin{blockarray}{c}
		~\\
		\begin{block}{\Left{}{(\;\;}c<{\;\;})}
			0\\1\\0\\0\\0\\
		\end{block}
		~
	\end{blockarray}\,,~~\textbf{pred}=~~\begin{blockarray}{c}
		~\\
		\begin{block}{\Left{}{(\;\;}c<{\;\;})}
			4\\1\\4\\5\\3\\
		\end{block}
		~
	\end{blockarray}
\end{small}\end{equation*}
Remark that there is two minimal paths. But, according to Algorithm~\ref{algo-augment}, $\epsilon\,{\in}\,\mathcal{U}_\epsilon$ is selected as a sink, and an augmenting path can then be extracted as described in the following section.
\begin{equation*}\begin{small}
	\vcenter{\hbox{\includegraphics[scale=0.7]{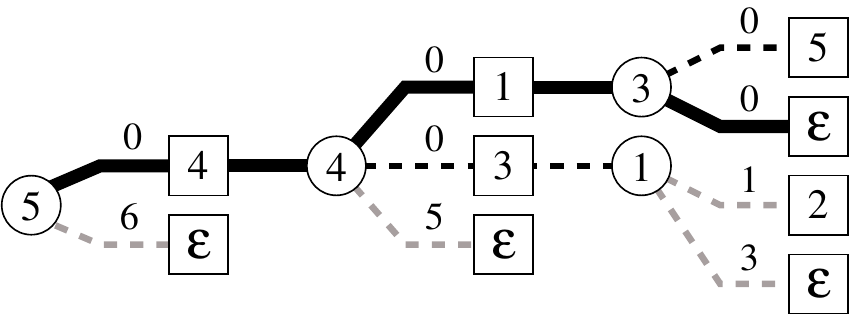}}}~~~~~\begin{blockarray}{rllllll|l}
	& S &   &   S  &         S   & S  &  &\\
	\begin{block}{r\Left{}{(\;}llllll|l<{\;})}
		S/L&8 & 3 & 0 & \mathbf{0^*} & 1 & 2&4\\
		&1 & 4 & 3 & 4 & 3 & 7& 1\\
 		S/L&\mathbf{0^*} & 5 & 3 & 0 & 7 &6& 5\\
		S/L& 2 & 5 & 3 & 0 & \mathbf{0^*} &2& 3\\
		&1 & 0 & 0 &1 &2 &5 & 3\\\cline{1-8}
		&3 & \mathbf{0}^* & \mathbf{0^*} & 5 & 6 &4&0\\
	\end{block}
	\end{blockarray}
\end{small}\end{equation*}
\end{example}
\begin{proposition}\label{prop-oaug}
	Algorithm~\ref{algo-augment} computes an augmenting path in $O((2n+\min\{n,m\})\min\{n,m\})$ time complexity.
\end{proposition}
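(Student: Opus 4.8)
The plan is to multiply a bound on the number of iterations of the \textbf{while} loop by a bound on the cost of a single iteration. For the \emph{iteration count}, I would first observe that every iteration which does not terminate the algorithm ends by moving one element $i$ from $\textit{LU}\setminus\textit{SU}$ into $\textit{SU}$ and then setting $j\leftarrow\rho_i$. An element $i$ is inserted into $\textit{LU}$ only through a branch guarded by $\rho_i\notin\{0,m+1\}$, since when $\rho_i\in\{0,m+1\}$ the algorithm returns immediately; hence every element of $\textit{SU}$ is a row assigned \emph{by substitution}. The substitutions recorded in $(\rho,\varrho)$ form a partial matching between $\mathcal{U}$ and $\mathcal{V}$, so there are at most $\min\{n,m\}$ of them (with $n=|U|$), whence $|\textit{SU}|\le\min\{n,m\}$. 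Since $|\textit{SU}|$ increases by exactly one per non-returning iteration, the loop executes at most $\min\{n,m\}+1$ times. The same bookkeeping gives $|\textit{SV}|\le\min\{n,m\}+1$, because $\textit{SV}=\{k\}\cup\{\rho_i:i\in\textit{SU}\}$ and $\rho$ is injective on substituted rows.

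For the \emph{cost of one iteration}, every step other than the scans over $\textit{SV}$, $\textit{LU}$ and $\mathcal{U}\setminus\textit{LU}$ is $O(1)$; in particular, picking the next element of $\textit{LU}\setminus\textit{SU}$ and reading $\rho_i$ takes constant time with the array encoding of $\textit{SU}$, $\textit{LU}\setminus\textit{SU}$ and $\mathcal{U}\setminus\textit{LU}$ as a permutation of $U$ (Appendix~\ref{app-st}), the FIFO discipline producing a breadth-first growth of the tree. The candidate scan (lines~9--15) is one pass over $\mathcal{U}\setminus\textit{LU}$, during which the running minimum of the $\pi_i$ can be accumulated so that $\delta_s$ is obtained for free. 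The dual-update block is entered at most once per iteration; it computes $\delta_\epsilon$ and updates $\mathbf{v}$ by a constant number of passes over $\textit{SV}$, updates $\mathbf{u}$ by one pass over $\textit{LU}$, and performs one further pass over $\mathcal{U}\setminus\textit{LU}$ to decrease $\pi$ and detect new candidates. Thus $\mathcal{U}\setminus\textit{LU}$ is traversed at most twice and $\textit{LU}$ once, and since $|\mathcal{U}\setminus\textit{LU}|+|\textit{LU}|=|U|=n$ this costs at most $2n$ elementary operations; together with the $O(|\textit{SV}|)=O(\min\{n,m\})$ spent on $\textit{SV}$, one iteration runs in $O(2n+\min\{n,m\})$ time.

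Combining the two bounds, and noting that the initialization of $\pi$ costs $O(n)$, Algorithm~\ref{algo-augment} runs in $O\big((2n+\min\{n,m\})\min\{n,m\}\big)$ time. The only genuinely delicate point is the iteration-count argument: it hinges on showing that each iteration really does enlarge $\textit{SU}$ by one, i.e.\ that line~30 is always reached with $\textit{LU}\setminus\textit{SU}\neq\emptyset$. When the candidate scan has already produced a new element this is immediate; otherwise the dual update applies, and by Proposition~\ref{theo-IJ} it augments $G^0$ — hence $\textit{LU}$ — with at least one new row, unless the algorithm returns. Everything else reduces to routine summations of pass lengths over sets whose sizes are controlled as in the first paragraph, so I do not anticipate further difficulty.
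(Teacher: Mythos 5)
Your proof is correct and follows essentially the same route as the paper's: bound the number of iterations of the main loop by $O(\min\{n,m\})$ and the cost of each iteration by $O(n+\min\{n,m\})$ via the pass counts over $\mathcal{U}\,{\setminus}\,\textit{LU}$, $\textit{LU}$ and $\textit{SV}$ under the permutation encoding. The only (harmless) variation is how the iteration bound is obtained --- you observe that every element of $\textit{SU}$ is a substituted row, hence $|\textit{SU}|\,{\leq}\,\min\{n,m\}$, whereas the paper combines $|\textit{SV}|\,{=}\,|\textit{SU}|\,{+}\,1\,{\leq}\,n$ with $|\textit{SV}|\,{\leq}\,m$ --- and your explicit justification via Proposition~\ref{theo-IJ} that each non-returning iteration really does enlarge $\textit{SU}$ is a point the paper delegates to the discussion following that proposition.
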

\begin{proof}
At each iteration of the main loop, one element of $\mathcal{V}$ is added to ${SV}$ (initially empty). For the first iteration this element is provided ($j\,{=}\,k$ with $k$ being the root of the tree), but for all other iterations it is computed from the last element $i\,{\in}\,\mathcal{U}$ added to ${SU}$ (also initially empty). By assuming that a sink is always found by the algorithm, we have $|{SU}|=|{SV}|\,{-}\,1$ since the main loop ends before augmenting ${SU}$. Moreover, since elements are added only once to \textit{SU} and \textit{SV}, we have $|{SU}|\leq|\mathcal{U}|\,{-}\,1$ and $|{SV}|\leq|\mathcal{V}|$. Then we have $|{SV}|=|{SU}|+1\leq|\mathcal{U}|$, and by consequence $|{SV}|\leq\min\{|\mathcal{U}|,|\mathcal{V}|\}$ and $|{SU}|\leq\min\{|\mathcal{U}|,|\mathcal{V}|\}-1$. So the main loop has $O(\min\{n,m\})$ iterations. For each iteration:
\begin{enumerate}
\item[$\bullet$]$\mathcal{U}\,{\setminus}\,\textit{LU}$ is traversed 3 times (lines 9, 17 and 24) and has at most $n$ elements,
\item[$\bullet$]\textit{LU} is traversed 1 time (line 22) and has at most $n$ elements, 
\item[$\bullet$]\textit{SV} is traversed 2 times (lines 19 and 21) and has at most $\min\{n,m\}$ elements. 
\end{enumerate}
So each iteration of the main loop runs in $O(2n+\min\{n,m\})$ time complexity. Indeed, the encoding of the sets ${LU}$, ${LU}\,{\setminus}\,{SU}$ and $\mathcal{U}\,{\setminus}\,{LU}$ with a permutation allows to traverse these sets in $O(n)$ time complexity.
\end{proof}
\noindent
Observe that the time complexity depends on $\min\{n,m\}$. If $n\,{\leq}\,m$ then Algorithm~\ref{algo-augment} runs in $O(n^2)$, else in $O((2n\,{+}\,m)m)$. 

Algorithm~\ref{algo-augment} assigns an unassigned node $k\,{\in}\,\mathcal{V}$. An unassigned node $k\,{\in}\,\mathcal{U}$ can also be assigned by simply swapping the role for $\rho$ and $\varrho$, $\mathbf{u}$ and $\mathbf{v}$, and by considering $\mathbf{C}^T$.
\subsection{Main algorithm}\label{sec-mainalg}
\begin{algorithm}[!t]
\Begin(HungarianLSAPE${(\mathbf{C},\rho,\varrho,\mathbf{u},\mathbf{v})}$){
	$U\leftarrow\{1,\ldots,n\}$,~\,$V\leftarrow\{1,\ldots,m\}$\\
	{\color{dark-gray}// \textit{assign unassigned elements of $V$}}\\
	$(\rho,\varrho,\mathbf{u},\mathbf{v})\leftarrow$assignCols${(\mathbf{C},\rho,\varrho,\mathbf{u},\mathbf{v},U,V)}$\\
	{\color{dark-gray}// \textit{assign unassigned elements of $U$}}\\
	$(\varrho,\rho,\mathbf{v},\mathbf{u})\leftarrow$assignCols${(\mathbf{C}^T,\varrho,\rho,\mathbf{v},\mathbf{u},V,U)}$\\
\Return{$((\rho,\varrho),(\mathbf{u},\mathbf{v}))$}
}
\BlankLine
\Begin(assignCols${(\mathbf{C},\rho,\varrho,\mathbf{u},\mathbf{v},U,V)}$){
\ForEach{$k\,{\in}\,V\,|\,\varrho_k=0$}{
	{\color{dark-gray}// \textit{find an augmenting path rooted in $k$}}\\
	$(i,j,\mathbf{u},\mathbf{v},\mathbf{pred})\leftarrow\text{Augment}(k,\mathbf{C},\rho,\varrho,\mathbf{u},\mathbf{v},U)$\\
	{\color{dark-gray}// \textit{update partial primal solution}}\\
	\textbf{if}~$i=n+1$~\textbf{then}~{$r\leftarrow\varrho_j$,~\,$\varrho_j\leftarrow i$,~\,$i\leftarrow r$}\\
	\textbf{else}~$j\leftarrow 0$\\
	\While{$j\not=k$}{
		$j\leftarrow{pred}_i,~\,\rho_i\leftarrow j$\\
		$r\leftarrow\varrho_j$,~\,$\varrho_j\leftarrow i$,~\,$i\leftarrow r$
	}}
\Return{$(\rho,\varrho,\mathbf{u},\mathbf{v})$}
}
\caption{Compute an $\epsilon$-assignment.\label{algo-hungarian}}
\end{algorithm}
Let $(\rho,\varrho)$ and $(\mathbf{u},\mathbf{v})$ be a partial $\epsilon$-assignment 
and its associated dual variables satisfying Eq.~\ref{eq-compslack}, obtained for instance with Algorithm~\ref{algo-preproc}.

The partial $\epsilon$-assignment is completed by
Algorithm~\ref{algo-hungarian}. To this, each unassigned element of
$\mathcal{V}$ is assigned to an element of $\mathcal{U}_\epsilon$ by
computing an augmenting path (Algorithm~\ref{algo-augment}) and then
by swapping unassigned and assigned edges along the path (2nd step of
assignCols and Proposition~\ref{prop-subprob}), which is realized by a
backtrack from the sink to the root of the tree according to the
predecessor vector and the previous partial $\epsilon$-assignment in
$O(\min\{n,m\})$ time complexity. Since $m$ elements of $\mathcal{V}$
can be unassigned, the first call to assigCols is executed in
$O(m\min\{n,m\}(2n\,{+}\,\min\{n,m\})=O(\min\{n,m\}^2(m\,{+}\,2\max\{n,m\}))$ time complexity.

The second step of Algorithm~\ref{algo-hungarian} consists in
augmenting similarly the remaining unassigned elements of
$\mathcal{U}$ (line 6). This case occurs when $m\,{<}\,n$, or
when more than $\max\{n,m\}\,{-}\,\min\{n,m\}$ elements are assigned
to $\epsilon$ in the first step. In the worst-case, all elements of
$\mathcal{V}$ have been assigned to $\epsilon$ in the first step and
no element of $\mathcal{U}$ have yet been assigned. So the second step
executes in $O(\min\{n,m\}^2(n+2\max\{n,m\}))$ time complexity.
\begin{proposition}\label{prop-comphung}
	Algorithm~\ref{algo-hungarian} solves the LSAPE and its dual problem in $O(\min\{n,m\}^2\max\{n,m\})$ time complexity and in $O(nm)$ space complexity.
\end{proposition}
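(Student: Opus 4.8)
The plan is to establish correctness first, via a loop invariant, and then read off the two complexity bounds from the already-known costs of the subroutines. \textbf{Correctness.} I would maintain the following invariant, which holds at entry by hypothesis (for instance after Algorithm~\ref{algo-preproc}) and which I claim is restored before every call to \texttt{Augment} inside \texttt{assignCols}: the pair $(\rho,\varrho)$ encodes a partial $\epsilon$-assignment $\mathbf{X}$, and $(\mathbf{u},\mathbf{v})$ is a pair of dual variables with $u_{n+1}=v_{m+1}=0$ such that the complementary slackness condition Eq.~\ref{eq-compslack} holds and $\overline{\mathbf{C}}=\mathbf{C}-\mathbf{u}\mathbf{1}^T_{m+1}-\mathbf{1}_{n+1}\mathbf{v}^T\geq\mathbf{0}$. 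Given the invariant, Proposition~\ref{theo-IJ} ensures that every dual update inside \texttt{Augment} preserves both Eq.~\ref{eq-compslack} and the non-negativity of $\overline{\mathbf{C}}$ (note that $LU\subseteq\{1,\ldots,n\}$ and $SV\subseteq\{1,\ldots,m\}$, so $u_{n+1}$ and $v_{m+1}$ are never altered) and strictly enlarges the graph $G^0$; hence \texttt{Augment} terminates and returns an augmenting path $P$ rooted at the currently unassigned $k$ with $\gamma_{\overline{\mathbf{C}}}(P)=0$, possibly ending at $\epsilon$, which is always an admissible sink. Swapping assigned and unassigned edges along $P$ then produces, by Proposition~\ref{prop-subprob}, a partial $\epsilon$-assignment that keeps every previously assigned element assigned, makes $k$ assigned, and — because $\gamma_{\overline{\mathbf{C}}}(P)=0$ — still satisfies Eq.~\ref{eq-compslack}; so the invariant is restored.

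Next I would chain the two phases. After the first call to \texttt{assignCols} every element of $\mathcal{V}$ is assigned. After the second call, run on $\mathbf{C}^T$ with the roles of $\mathcal{U}$ and $\mathcal{V}$ swapped, every element of $\mathcal{U}$ is assigned as well; and since an augmentation never un-assigns an element that was already assigned, every element of $\mathcal{V}$ is still assigned. Hence the final $(\rho,\varrho)$ encodes a complete $\epsilon$-assignment $\mathbf{X}$, and the invariant gives $A_\epsilon(\mathbf{X},\overline{\mathbf{C}})=\sum_{i,j}\overline{c}_{i,j}x_{i,j}=0$ with $\overline{\mathbf{C}}$ a genuine edit cost matrix. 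Proposition~\ref{th-compclack} then yields that $(\mathbf{X},(\mathbf{u},\mathbf{v}))$ solves the LSAPE (Problem~\ref{pb-assigneps}) and its dual problem, with optimal value $A_\epsilon(\mathbf{X},\mathbf{C})=E(\mathbf{u},\mathbf{v})$.

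\textbf{Complexity.} The first \texttt{assignCols} performs at most $m$ iterations (one per initially unassigned column). Each iteration calls \texttt{Augment}, which by Proposition~\ref{prop-oaug} costs $O((2n+\min\{n,m\})\min\{n,m\})$, followed by an $O(\min\{n,m\})$ backtracking/swap along the predecessor structure. This amounts to $O(m\min\{n,m\}(2n+\min\{n,m\}))=O(\min\{n,m\}^2(m+2\max\{n,m\}))=O(\min\{n,m\}^2\max\{n,m\})$, using $mn=\min\{n,m\}\max\{n,m\}$ and $m\leq\max\{n,m\}$. The second \texttt{assignCols}, on $\mathbf{C}^T$, performs at most $n$ iterations and by the same count costs $O(\min\{n,m\}^2(n+2\max\{n,m\}))=O(\min\{n,m\}^2\max\{n,m\})$. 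Summing the two phases gives the claimed $O(\min\{n,m\}^2\max\{n,m\})$ time. For the space, the edit cost matrix is of size $(n+1)\times(m+1)$, hence $O(nm)$; the vectors $\rho,\varrho,\mathbf{u},\mathbf{v},\pi,\mathbf{pred}$ and the permutation-based encoding of $SV,SU,LU$ are all of size $O(n+m)$; so $O(nm)$ memory suffices.

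The point I expect to need the most care is the joint termination-and-exactness argument for \texttt{Augment} under the outer-loop invariant: verifying that the tree can always be extended until a sink is reached (never getting stuck), that the returned path has minimal transformed length $0$ so that Eq.~\ref{eq-compslack} survives the swap, and — crucially when phase~2 operates on $\mathbf{C}^T$ — that augmenting a column of $\mathbf{C}^T$ cannot un-assign a $v_j$ that was fixed during phase~1. Each of these is a bookkeeping consequence of Propositions~\ref{theo-IJ}, \ref{prop-subprob} and \ref{prop-oaug}, but stitching them together coherently across both phases is where the work lies.
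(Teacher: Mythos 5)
Your proposal is correct and follows essentially the same route as the paper: maintain the complementary slackness condition as an invariant across augmentations (via Propositions~\ref{theo-IJ} and~\ref{prop-subprob}), conclude optimality from the complementary-slackness characterization, and sum the per-phase costs already established around Proposition~\ref{prop-oaug}. The only differences are cosmetic --- the paper concludes via Corollary~\ref{cor-compslack} (maximal set of $\epsilon$-independent elements) where you invoke Proposition~\ref{th-compclack} directly, and you additionally spell out the $O(nm)$ space bound, which the paper's proof leaves implicit.
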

\begin{proof}
  At each iteration of Algorithm \ref{algo-augment}, the complementary
  slackness condition is satisfied. So Algorithm \ref{algo-hungarian}
  computes a maximal set
  $S\,{=}\,\{(i,\rho_i)\,|\,i\,{\in}\,\mathcal{V}\}\,{\cup}\,\{(\varrho_j,j)\,|\,j\,{\in}\,\mathcal{V},\,\varrho_j\,{=}\,n\,{+}\,1\}$
  of $\epsilon$-independent elements satisfying
  $c_{i,j}\,{=}\,u_i\,{+}\,v_j$ for all $(i,j)\,{\in}\,S$ and
  $c_{i,j}\,{\geq}\,u_i\,{+}\,v_j$ for all $(i,j)\,{\not\in}\,S$. Then
  by Corollary \ref{cor-compslack} the $\epsilon$-assignment is
  optimal. From the above discussion on complexities, we have
  $$
  	\begin{array}{l}
  	\min\{n,m\}^2(m+2\max\{n,m\})+\min\{n,m\}^2(n+2\max\{n,m\})\\
  	=\min\{n,m\}^2(n+m+4\max\{n,m\})\\
  	=\min\{n,m\}^2(\min\{n,m\}+5\max\{n,m\})\\
  	\leq 6\min\{n,m\}^2\max\{n,m\}
  	\end{array}
  $$
  which completes the proof.
\end{proof}
\noindent
Assume that $n\,{\leq}\,m$, the LSAPE is thus solved in $O(n^2m)$ time complexity by Algorithm~\ref{algo-hungarian}. Recall that the LSAPE is equivalent the sLSAPE (Problem~\ref{def-squaredlsap}). The complexities of the proposed algorithm (Proposition~\ref{prop-comphung}) are lower than the ones obtained for solving the sLSAPE with the Hungarian algorithm, \textit{i.e.} $O((n\,{+}\,m)^3)$ in time and $O((n\,{+}\,m)^2)$ in space. Observe that more $n$, $m$, and $|m\,{-}\,n|$ are important, more the improvement is.

\begin{figure*}[!t]\label{fig-xplsap}\centering
	\begin{tabular}{cc}
	\includegraphics[width=0.45\linewidth]{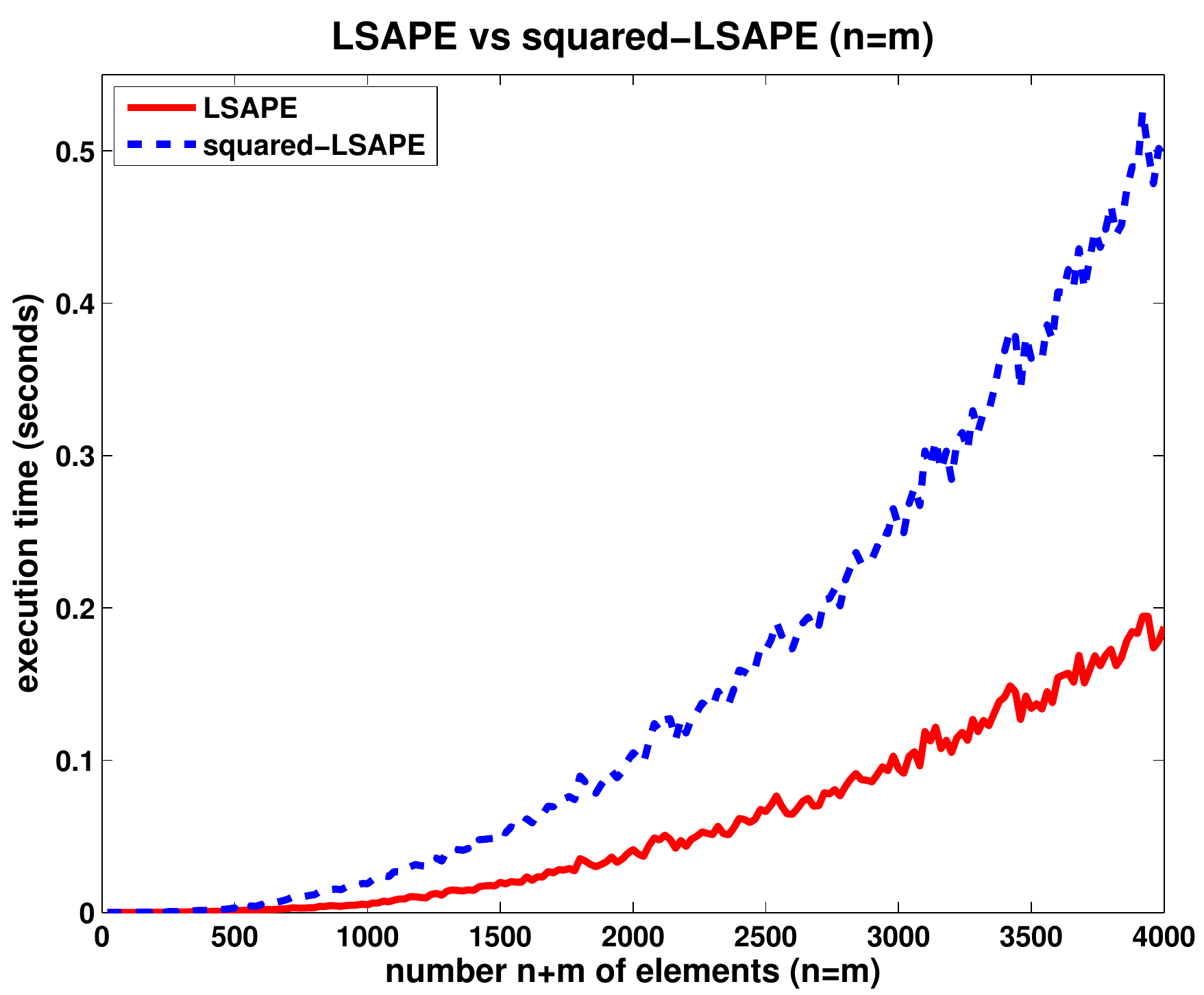}&\includegraphics[width=0.45\linewidth]{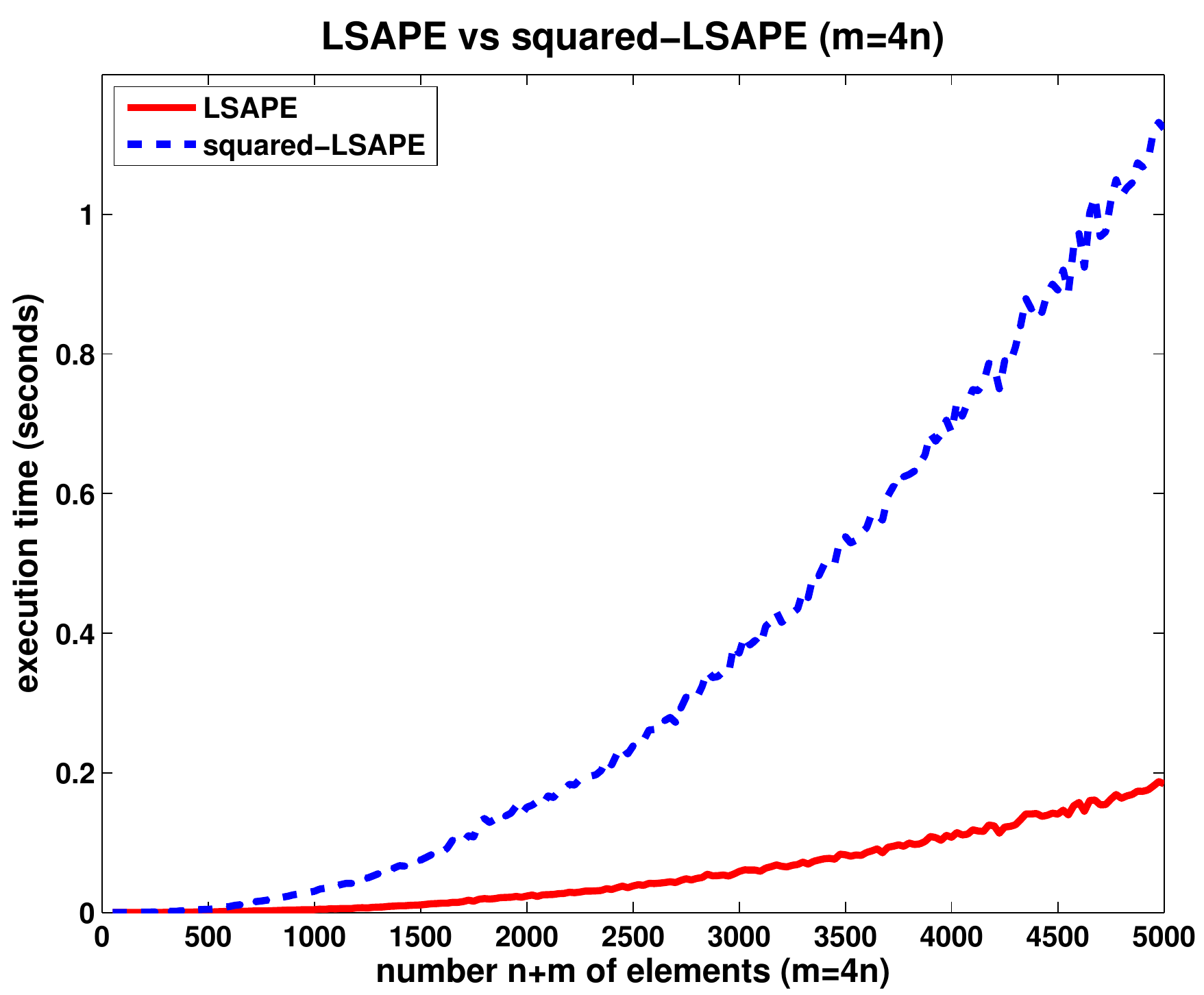}
	\end{tabular}
	\includegraphics[width=0.45\linewidth]{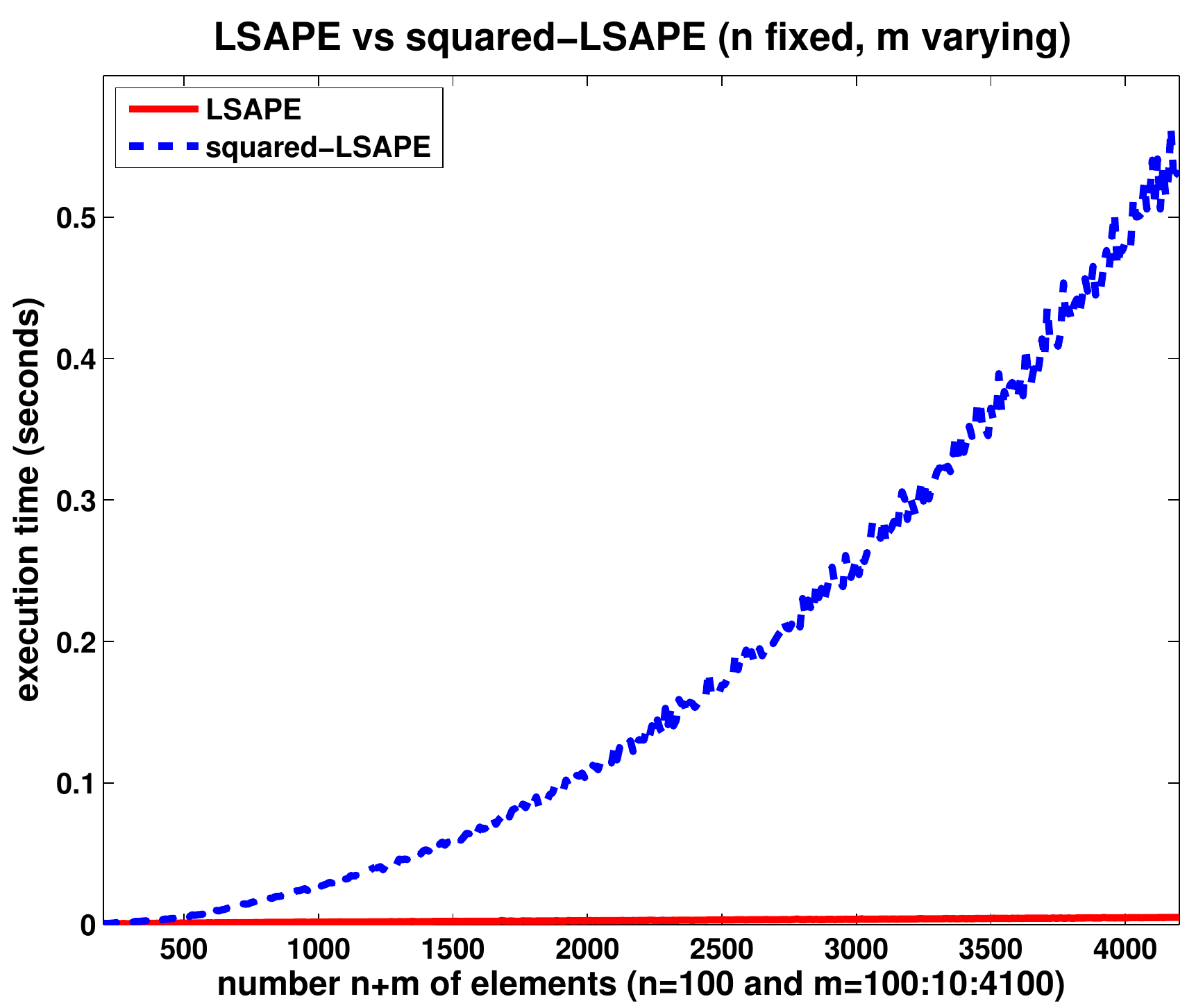}
	\caption{Execution time (in seconds) of the LSAPE (red) vs. squared-LSAPE (blue dotted) for (uniform) random edit cost matrix. For each size $n\,{+}\,m$ ($x$-abscissa), a new edit cost matrix is generated and the time is averaged over several executions on this matrix. First row: both $n$ and $m$ are varying. Second row: $n$ is fixed and $m$ is varying.}
\end{figure*}
\begin{figure*}[!t]\label{fig-xplsap2}\centering
	\begin{tabular}{c}
		\includegraphics[width=0.4\linewidth]{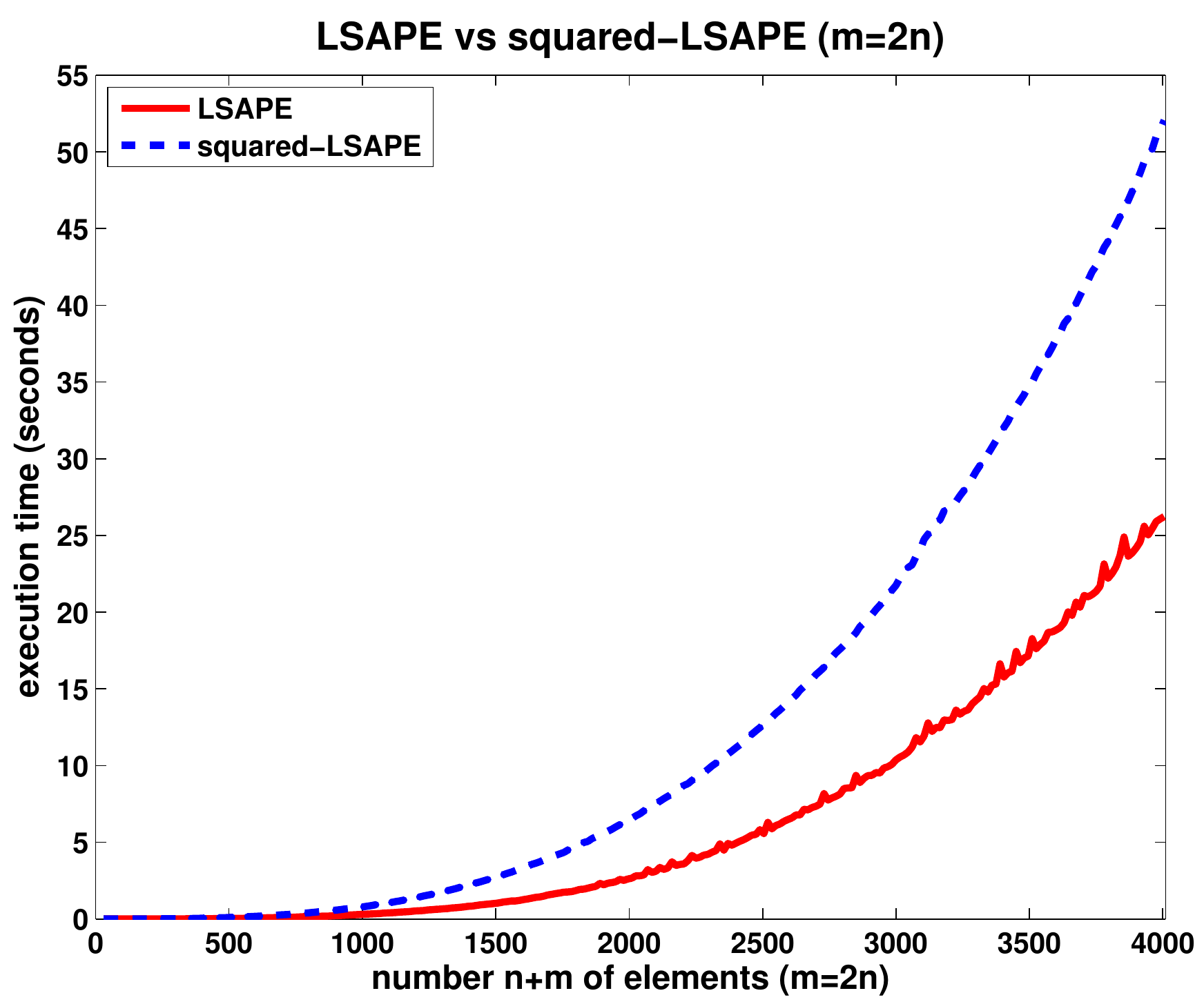}~\includegraphics[width=0.4\linewidth]{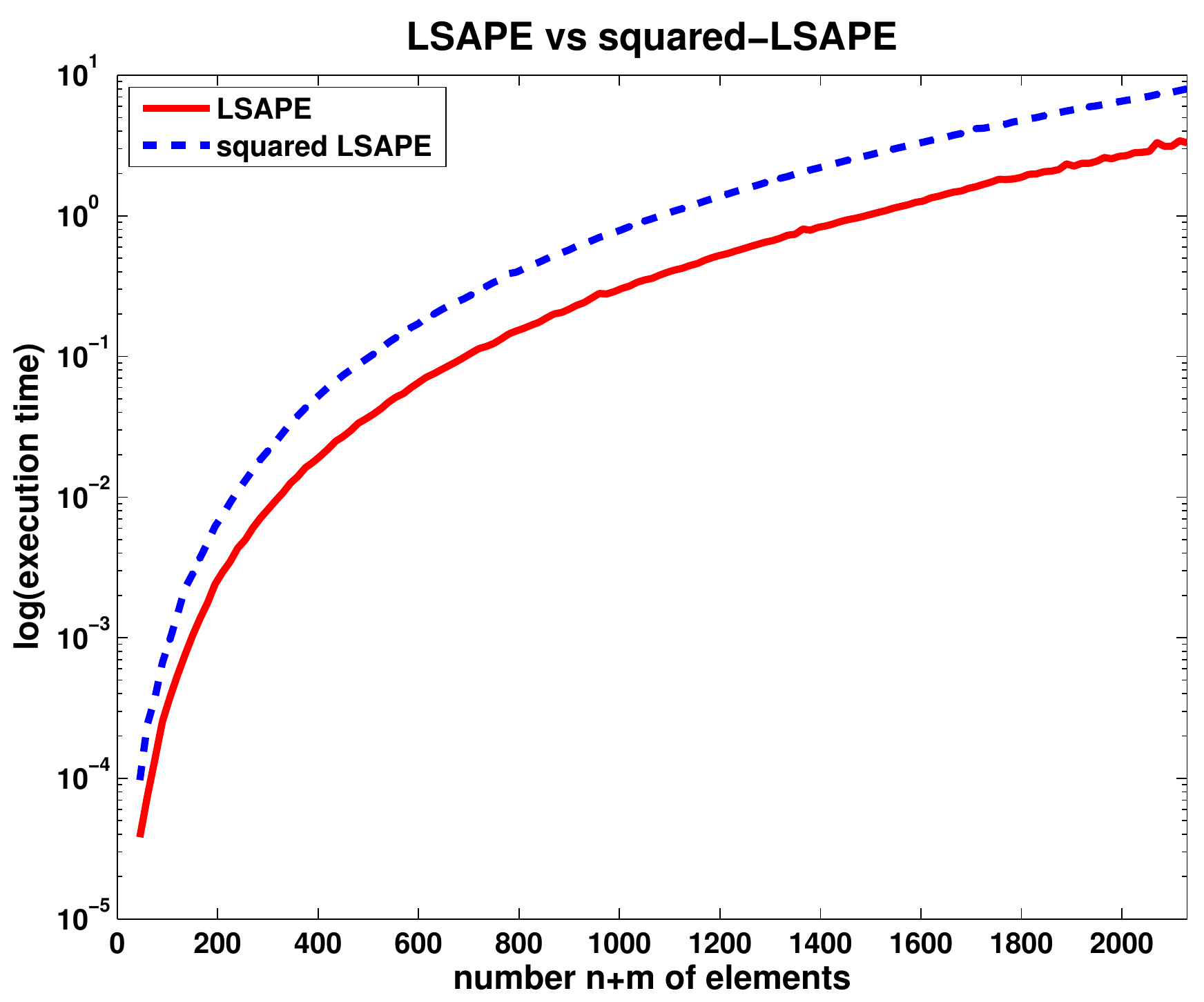}\\
		\includegraphics[width=0.4\linewidth]{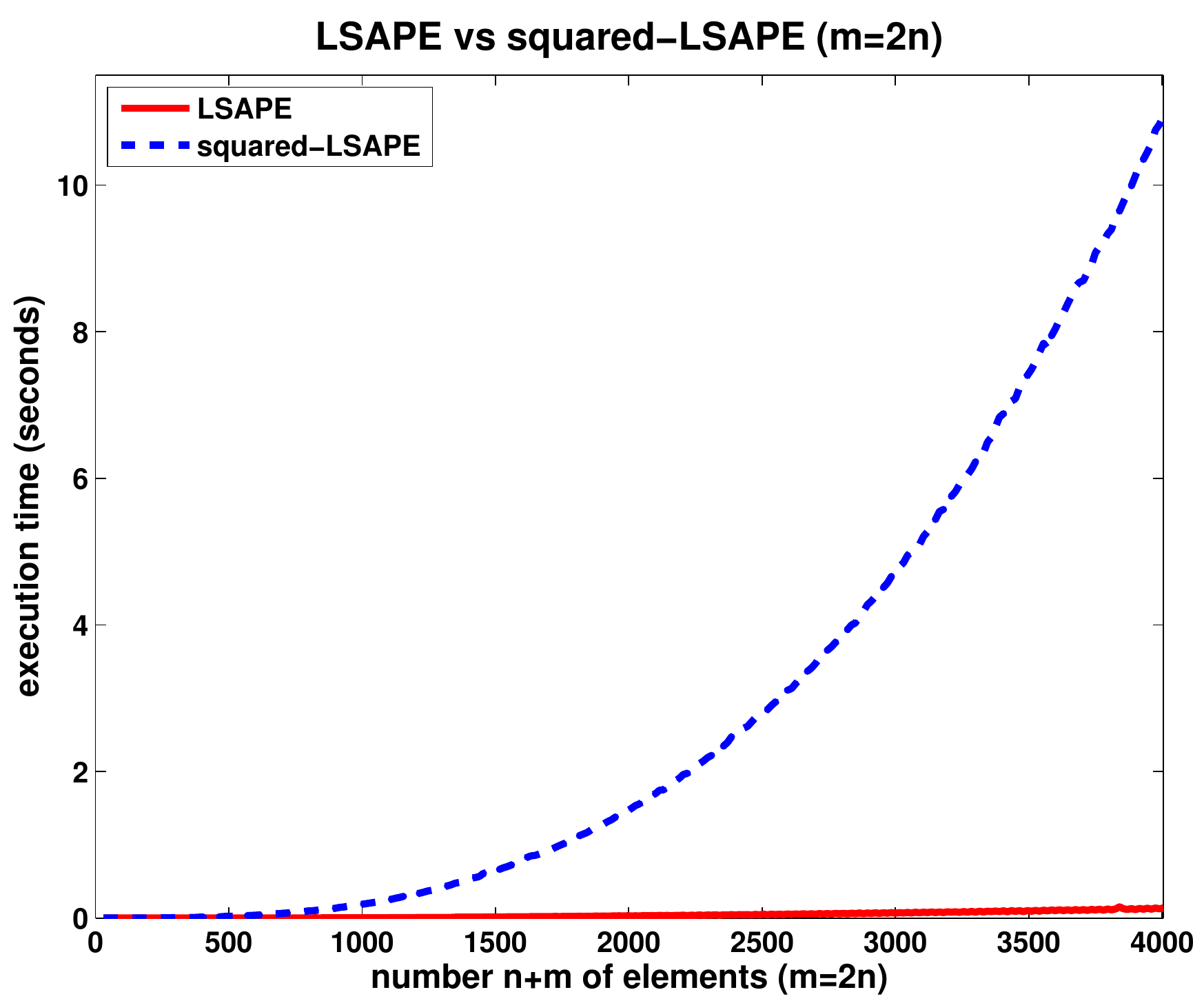}~\includegraphics[width=0.4\linewidth]{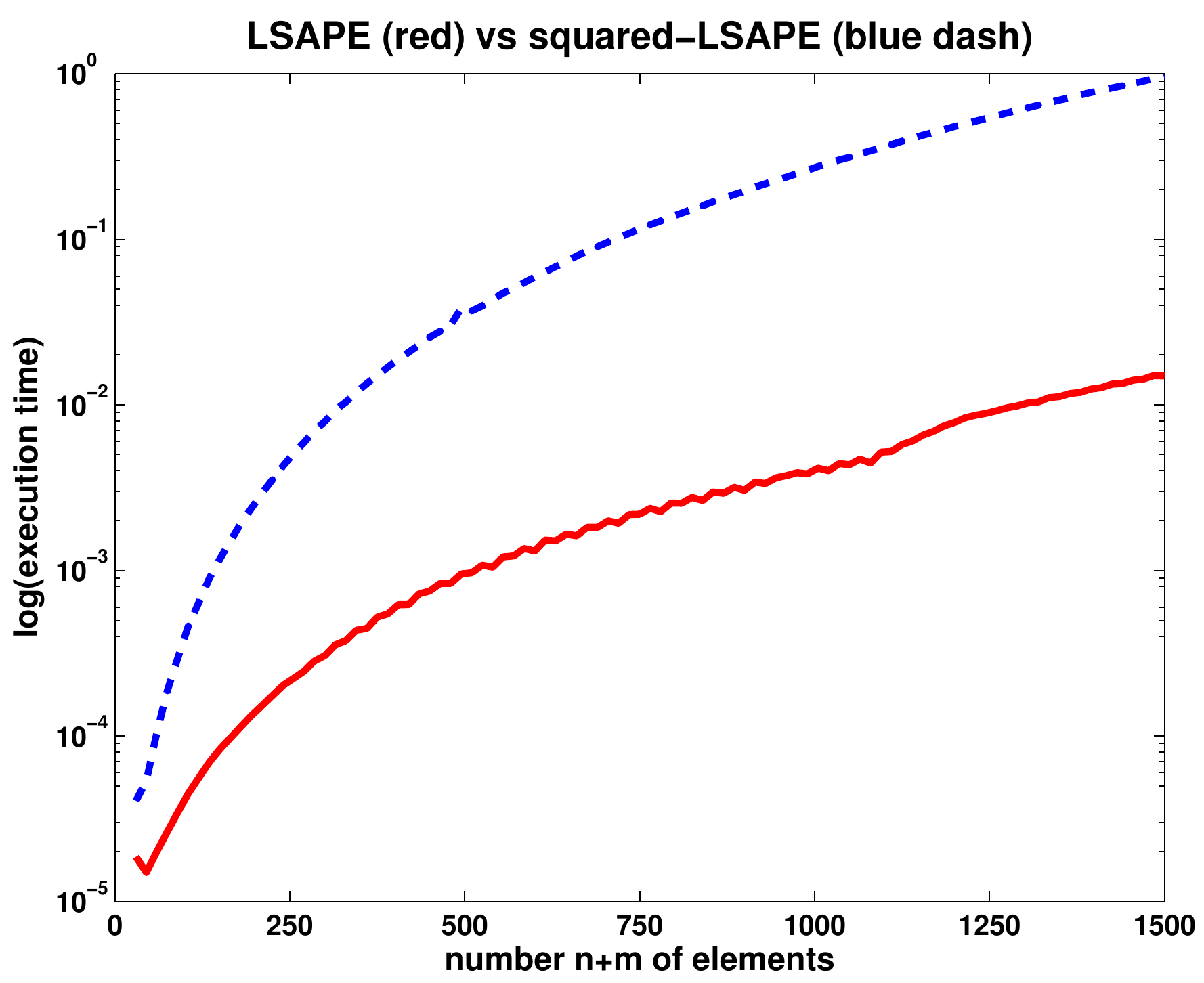}
	\end{tabular}
	\caption{Left column: execution time (in seconds) of the LSAPE (red) vs. squared-LSAPE (blue dotted). Right column: associated log plot. First row: worst-case obtained for $c_{i,j}\,{=}\,ij$. Second row: a variant showing an important improvement (see text).}
\end{figure*}
This is confirmed in practice. We have compared the execution time of the LSAPE and the sLSAPE with three types of synthetic edit cost matrices:
\begin{enumerate}
\item random matrices (Fig.~\ref{fig-xplsap}),
\item matrices of general term  $c_{i,j}\,{=}\,i*j$ (Fig.~\ref{fig-xplsap2}, 1st row),
\item matrices of general term $c_{n-i+1,m-j+1}\,{=}\,i*j$ for 
  $i\,{\in}\,\{1,\ldots,n\}$ et $j\,{\in}\,\{1,\ldots,m\}$. Last line
  and column are respectively defined as copies of the $n-th$ line and
  the $m-th$ column (Fig.~\ref{fig-xplsap2}, 2nd row).
\end{enumerate}
The sLSAPE is solved by an algorithm similar to Algorithm~\ref{algo-hungarian} (see \cite{bur09}) wherein only one set needs to be augmented and all instructions related to $\epsilon$ are not considered. While the LSAPE and the sLSAPE provide the same optimal value $A_\epsilon$, as expected the proposed algorithm runs faster. The difference in execution time between both algorithms increases according to the difference $|m\,{-}\,n|$. This is easily explained by the growth of the number of assignments between $\epsilon$-elements in the sLSAPE.


\section*{Conclusion}
We have presented a general framework to transform a set into another by means of edit operations
(substitutions, removals and insertions). The problem of finding a minimal transformation
is formalized as a linear sum assignment problem where elements can be removed and inserted.
This problem was previously solved by augmenting the given sets such that it can be formalized
as a classical linear sum assignment problem, and thus solved by well-known methods such as
the Hungarian algorithm. Based on the proposed model, the problem is solved by an adaptation
of the Hungarian algorithm that has lower time and memory complexities. Based on the same
model, other algorithms can be adapted similarly.

\appendix
\section{Managing the sets in Algorithm~\ref{algo-augment}}\label{app-st}
Recall that the sets $\mathcal{U}\,{\setminus}\,\textit{LU}$, \textit{LU}, \textit{SU} and $\textit{LU}\,{\setminus}\,\textit{SU}$ are represented by a permutation ${P\mathcal{U}}$ of $\mathcal{U}$ (Fig.~\ref{fig-array}). A link to the beginning of each set in the permutation is also saved.
\begin{example}
  Consider a set $\mathcal{U}\,{=}\,\{1,\ldots,8\}$. At the beginning
  of Algorithm~\ref{algo-augment}, we have
  $P\mathcal{U}\,{=}\,\mathcal{U}$ (Fig.~\ref{fig-array} left),
  ${LU}\,{=}\,{SU}\,{=}\,\emptyset$, and so
  ${LU}\,{\setminus}\,{SU}\,{=}\,\emptyset$. Assume that
  $\pi_4\,{=}\,\pi_6\,{=}\,0$ in the first step, and so $i\,{=}\,4,6$
  are iteratively added to ${LU}$ by swapping each of them with the
  top of $\mathcal{U}\,{\setminus}\,{LU}$ (initially equal to the top
  of $P\mathcal{U}$), and by incrementing the link to the top. Then we
  have ${LU}\,{=}\,\{4,6\}$, the link to the top of
  $\mathcal{U}\,{\setminus}\,{LU}$ is the third element of the array,
  and the link to the top of ${LU}\,{\setminus}\,{SU}$ is the first
  element of the array (Fig.~\ref{fig-array} middle). There is no dual
  update since ${LU}\,{\setminus}\,{SU}\,{\not=}\,\emptyset$, so the
  first iteration of the main loop of Algorithm~\ref{algo-augment}
  ends by selecting the top of ${LU}\,{\setminus}\,{SU}$,
  \textit{i.e.} $i\,{=}\,6$. Then we have ${SU}\,{=}\,\{6\}$,
  ${LU}\,{\setminus}\,{SU}\,{=}\,\{4\}$, and the link to the top of
  ${LU}\,{\setminus}\,{SU}$ is incremented in consequence to become
  the second element of the array (Fig.~\ref{fig-array} right).

\begin{figure}[!t]\centering
	\includegraphics[scale=0.5]{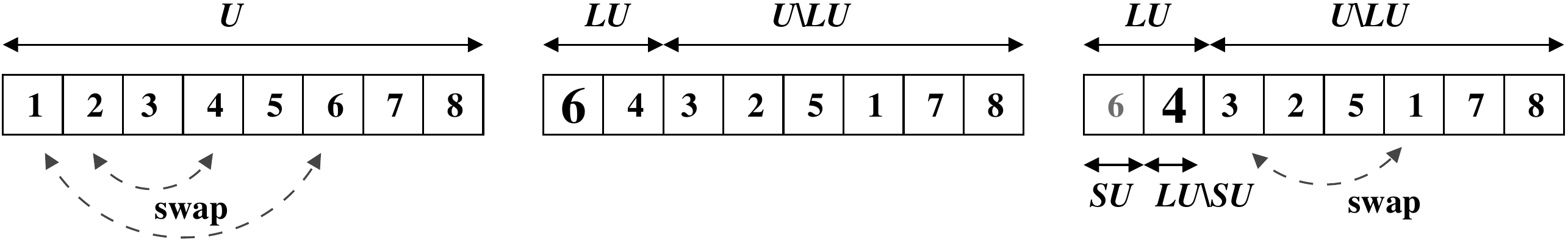}
	\caption{Encoding of the sets involved in Algorithm~\ref{algo-augment} as a permutation $P\mathcal{U}$ of $\mathcal{U}$.}\label{fig-array}
\end{figure}
Now assume that in the second iteration of the main loop, in the first step, we have $\pi_1\,{=}\,0$. The element $i\,{=}\,1$ is added to ${LU}$ by swapping it with the top of $\mathcal{U}\,{\setminus}\,{LU}$, and the process is going on as before until a sink is found.
\end{example}

\footnotesize
\bibliographystyle{plain}
\bibliography{assign}

\end{document}